\newtheorem{lemma}{Lemma}
\newtheorem{corollary}{Corollary}
\newtheorem{prop}{Proposition}
\theoremstyle{definition}
\newtheorem{assumption}{Assumption}
\newtheorem{remark}{Remark}
\DeclareMathOperator*{\argmax}{argmax}
\DeclarePairedDelimiter\abs{\lvert}{\rvert}
\DeclarePairedDelimiter\indicatorfence{\{}{\}}
\newcommand\1{\operatorname{I}\indicatorfence}
\definecolor{webbrown}{rgb}{.6,0,0}
\newcommand\indep{\protect\mathpalette{\protect\independenT}{\perp}}
\def\independenT#1#2{\mathrel{\rlap{$#1#2$}\mkern2mu{#1#2}}}
\newcommand{\overbar}[1]{\mkern 1.5mu\overline{\mkern-1.5mu#1\mkern-1.5mu}%
  \mkern 1.5mu}%
\newcommand{\OD}{\mathsf{OD}}
\newcommand{\DP}{\mathsf{DP}}
\crefname{assumption}{assumption}{assumptions}
\Crefname{assumption}{Assumption}{Assumptions}
\Crefname{prop}{Proposition}{Propositions}
\newcommand\nycNw[0]{284,598}
\newcommand\nycNb[0]{310,588}
\newcommand\nycKmainb[0]{167}
\newcommand\nycKmainw[0]{160}
\newcommand\nycEYb[0]{22.9}
\newcommand\nycEYw[0]{20.6}
\newcommand\nycEYbD[0]{32.9}
\newcommand\nycEYwD[0]{26.9}
\newcommand\nycPCTElb[0]{30.8}%
\newcommand\nycPCTElw[0]{31.9}%
\newcommand\nycEYlb[0]{44.9}
\newcommand\nycEYlw[0]{35.7}
\newcommand\nycEYlbCI[0]{26.3}
\newcommand\nycEDb[0]{69.7}
\newcommand\nycEDw[0]{76.5}
\newcommand\nycQuotab[0]{80.2}
\newcommand\nycQuotaw[0]{83.2}
\newcommand\nycDPb[0]{0.025}
\newcommand\nycDPw[0]{0.021}
\newcommand\nycIVb[0]{49.7}
\newcommand\nycIVw[0]{48.1}
\newcommand\nycIVbse[0]{1.2}
\newcommand\nycIVwse[0]{2.0}
\newcommand\nycCIbincrease[0]{6.3}
\newcommand\nycDeltal[0]{4.0}
\newcommand\nycDeltau[0]{8.2}
\newcommand\nycDeltalp[0]{4.3}
\newcommand\nycDeltaup[0]{5.4}
\newcommand\nycDeltalCI[0]{1.0}
\newcommand\nycDeltauCI[0]{9.6}
\newcommand\sufN[0]{67,060}
\newcommand\sufKp[0]{69}
\newcommand\sufEY[0]{34.1}
\newcommand\sufED[0]{20.4}
\newcommand\sufQuotac[0]{29.4}
\newcommand\sufODc[0]{0.019}
\newcommand\sufDP[0]{0.032}
\newcommand\sufDPIV[0]{0.112}
\newcommand\sufrho[0]{0.998}
\newcommand\sufrhoIV[0]{0.972}
\newcommand\sufIVc[0]{-28.8}
\newcommand\sufIVcse[0]{10.0}
\newcommand\sufDeltaDc[0]{9.0}
\title{Evaluating Counterfactual Policies Using Instruments\thanks{All errors
    are our own. We thank numerous seminar participants, Amanda Agan, Isaiah Andrews, Josh
    Angrist, Yuehao Bai, Brigham Frandsen, Peter Hull, Guido Imbens, Toru Kitagawa, Pat Kline, Charles Manski, Thomas Richardson, Jesse Shapiro, Jesper
    Sørensen, Alex Torgovitsky, and Ed Vytlacil for helpful comments. We thank
    Peter Hull for providing us with aggregate statistics on New York City bail
    judges, and Henrik Sigstad for providing data on panels of judges in São Paulo. Kolesár acknowledges support by the
    National Science Foundation Grant SES-22049356. Montiel Olea acknowledges support by the National Science Foundation Grant SES-2315600. Roth acknowledges funding from the NIH under grant NIGMS 1R35GM155224 and from the Alfred P. Sloan Foundation.}}
\author{Michal Kolesár \and José Luis Montiel Olea \and Jonathan Roth}%
\date{\today}
\begin{document}

\maketitle

\begin{abstract}
We study settings in which a researcher has an instrumental variable (IV) and
seeks to evaluate the effects of a counterfactual policy that alters treatment
assignment, such as a directive encouraging randomly assigned judges to release more
defendants. We develop a general and computationally tractable framework for
computing sharp bounds on the effects of such policies. Our approach does not
require the often tenuous IV monotonicity assumption. Moreover, for an important
class of policy exercises, we show that IV monotonicity---while crucial for a
causal interpretation of two-stage least squares---does not tighten the bounds on
the counterfactual policy impact. We analyze the identifying power of
alternative restrictions, including the policy invariance assumption used in the
marginal treatment effect literature, and develop a relaxation of this
assumption. We illustrate our framework using applications to quasi-random
assignment of bail judges in New York City and prosecutors in Massachusetts.
\end{abstract}

\clearpage

\section{Introduction}

The most common approach among practitioners for leveraging an \ac{IV} to tease out causal effects
from observational data is to use \ac{TSLS} and interpret the resulting estimand
as a \ac{LATE}---a (weighted) average of treatment effects for individuals whose
treatment status changes in response to the \ac{IV} \parencite{ImAn94}. This
interpretation relies on the well-known \ac{IV} monotonicity condition.

Yet in many applications of \ac{IV} designs, both statistical evidence and
institutional details point to failure of IV monotonicity. For example, consider
the popular leniency \ac{IV} design, which harnesses as-good-as-random
assignment of judges or other decision-makers who differ in their leniency---the
propensity to grant treatment---to study the effects of treatments such as
incarceration, pretrial detention, or bankruptcy on various
outcomes.\footnote{The leniency \ac{IV} design was pioneered by
  \textcite{kling06}. It has been used to leverage (quasi-)random assignment of a variety of decision-makers, including judges
  \parencite[e.g.][]{AiDo15,DoSo15,dgy18}, patent examiners
  \parencite[e.g.][]{SaWi19}, child welfare investigators
  \parencite[e.g.][]{doyle2007child}, and doctors \parencite[e.g.][]{cgy22}. See Table 1 in \textcite{fll23} for additional references.} IV monotonicity requires that every defendant
released by a given judge would also be released by all judges who are more
lenient on average. Effectively, the judges need to agree on the ranking of the
defendants in terms of their risk; they only disagree on the release cutoff.
This is a stringent requirement, as pointed out in the original
\textcite{ImAn94} analysis (Example 2, p.~472). Institutional knowledge suggests
that decision-makers may have different rankings owing to heterogeneity in
skills \parencite{cgy22} or preferences \parencite{mueller-smith15}. Moreover,
IV monotonicity is frequently rejected by statistical tests
\parencite[e.g.][]{fll23}, as well as by direct evidence from judicial panels
\parencite{sigstad_monotonicity_2023}. To address these concerns, a growing
literature has proposed weaker or alternative conditions that still allow for a
\ac{LATE}-type interpretation of the \ac{TSLS} estimand
\parencite[e.g.][]{dechaisemartin17,strlb17,fll23, mtw21}. Yet, the plausibility of
these alternative assumptions is debated \parencite{MoTo24}.

Often, however, researchers may not ultimately be interested in the LATE \emph{per se}, but rather in evaluating counterfactual policies.\footnote{A recent survey of the empirical
  literature by \textcite{lss25} concludes that researchers are rarely
  interested in the LATE \emph{per se}.}
Researchers conducting a leniency IV, for instance, may be interested in
policies that nudge the decision-makers to be more (or less) lenient, by, say,
imposing release quotas \parencite{adh22}, providing algorithmic recommendations
\parencite{ady23}, or imposing a presumption of non-prosecution for low-level
offenses \parencite{adh23}. In the limit, universal release programs may treat
everyone with certain observable characteristics \parencite{albright22}.

In this paper, we develop a general framework for directly evaluating
counterfactual policies that alter treatment assignment. We make the goal of
policy evaluation explicit by indexing potential treatments as $D(z, a)$, where
$z$ is the value of the instrument (say, the identity of the judge in the leniency IV
example), and $a$ is an indicator for whether the policy (say, a release quota) is implemented. We only observe data under the status quo ($a=0$), with
the observed treatment given by $D(Z, 0)$, and the observed outcome given by $Y(D(Z,0))$, where $Y(d)$ is the potential outcome under treatment $d$. The parameter of interest is $\theta = E[Y(D(Z,1))]$, the average outcome (say, recidivism) under the
counterfactual treatment assignment, $D(Z,1)$. Since $\theta$ is generally not
point-identified, our goal is to derive bounds for it, i.e., an identified set
of values consistent with the observable data.

Within this framework, we develop three sets of results. First, we derive
tractable bounds for $\theta$ without imposing IV monotonicity. Second, we show
that in a range of relevant cases, imposing IV monotonicity does not help
tighten the identified set. Thus, while some form of monotonicity is essential
for a \ac{LATE} interpretation, our results show that IV monotonicity is neither
necessary nor sufficient to learn about counterfactuals. Third, we consider
other assumptions that do help tighten the bounds, and illustrate their power in
two applications.

Specifically, our first main result provides a computationally tractable characterization of the identified set for $\theta$ without imposing IV monotonicity. In related \ac{IV} settings \parencite[e.g.][]{kitagawa21, bai_identifying_2024}, the identified set is commonly computed by
enumerating all possible ``response types'', defined by the values of
$D(\cdot, \cdot)$. This proves computationally infeasible in our setting, because
the number of response types grows exponentially in the number of judges $K$ if
one does not impose \ac{IV} monotonicity. We show that one can bypass response
type enumeration and compute the identified set as the solution to a linear
program that scales linearly in $K$. The key observation is that it suffices to
consider sets of marginals for $(Y(\cdot), D(z, \cdot))$, involving the decision
of one judge $z$ at a time. This is because the observable data does not depend
on the joint distribution of $D(z, a)$ across judges $z$, and without
\ac{IV} monotonicity, this joint distribution is \emph{ex ante} unrestricted. Our result builds on insights in \textcite{RiRo14}, who derived bounds on the average treatment effect in IV settings with binary outcomes involving $O(K^2)$ restrictions.

Our second set of results gives sufficient conditions under which the
identified set for $\theta$ does not depend on whether one imposes IV
monotonicity. Specifically, we show that this is the case when either (a) one of
the potential outcomes is known, or (b) the policy counterfactual is a
``sufficiently strong'' encouragement (where the notion of ``sufficiently
strong'' is formalized in
\Cref{prop:monotonicity_is_not_helpful_strong_encouragement} below). Condition
(a) is satisfied in many criminal justice settings, where, for example, one
cannot commit pretrial misconduct unless one is released. Condition (b) is
trivially satisfied for universal release programs, like that studied in
\textcite{albright22}, and may also plausibly be satisfied by other
counterfactual policies that strongly encourage judges to release more
defendants. An important practical takeaway is that when these sufficient
conditions are satisfied, the debate over IV monotonicity is somewhat of a red
herring: instead of worrying about IV monotonicity, researchers interested in
policy counterfactuals should turn attention to evaluating the validity of other
assumptions that may in fact tighten the identified set.

Our third set of results considers several such assumptions. We begin by
evaluating the policy invariance assumption of \textcite{HeVy05}. While
\ac{IV} monotonicity effectively imposes that judges agree on their rankings of
defendants \emph{under the status quo}; policy invariance additionally imposes
that this ranking is the same \emph{under the counterfactual}. We show that
policy invariance can indeed be helpful in tightening the identified set in some
settings where IV monotonicity alone is not. Of course, since policy invariance is
stronger than IV monotonicity, it may often be implausible in practice. We
therefore develop a relaxation of policy invariance that may be more plausible yet still informative.
In particular, while policy invariance imposes perfect agreement among judges in
the ranking of defendants, our relaxation only imposes that judges do not
disagree too often. We show how bounds on disagreement rates can be calibrated
using settings where multiple decision-makers rule on the same cases, such as \textcite{sigstad_monotonicity_2023}'s data on panels of judges. We
further show that these disagreement bounds can be tractably incorporated into
our linear program for calculating the identified set. We also discuss a variety
of other economically-motivated restrictions that can be incorporated to further
tighten the identified set. For example, in the pretrial release setting, judges
are legally instructed to release defendants unless they are at high risk of
committing pretrial misconduct. A natural assumption is then that the defendants
released by judges under the status quo are lower-risk than the defendants they
would marginally release in response to a policy encouragement.

Two applications illustrate the usefulness of our results. The first
studies bail judges in New York City, who decide whether to release
defendants awaiting trial, using aggregated data from \textcite{adh22}. We evaluate two
counterfactual policies in this context. First, we consider a policy that
releases all defendants ($D(Z,1)=1$), allowing us to evaluate what would happen
if the universal release policy in Kentucky studied by \textcite{albright22}
were applied in this context. This policy is also relevant for evaluating the
``disparate impact'' of pretrial release decisions by race, as in \textcite{adh22}, whose disparate impact
parameter depends only on the (race-specific)
average outcomes under universal release and point-identified summary statistics. We obtain
informative bounds for the universal release counterfactual, imposing only
\ac{IV} validity (i.e., random IV assignment and exclusion). Our bounds suggest,
for example, that the defendants marginally released by this policy will have
higher misconduct rates than those released under the status quo, which is
intuitive given the judges' instructions to only detain high-risk defendants.
Our bounds for the disparate impact parameter are only moderately wider than the
range of estimates obtained by different parametric assumptions in
\textcite{adh22}, and our confidence interval excludes zero, indicating statistically significant disparate
impact by race.

We also use this data to evaluate a quota policy that requires the bottom 90\%
of judges in terms of leniency to increase their release rates to match the top
10\%. For this policy, only imposing IV validity yields trivial bounds that do
not restrict the misconduct rates of the marginally-released defendants. This
occurs because without any restrictions on agreement rates between judges, the
judges below the 90th percentile of leniency may choose to marginally release
only IV ``never-takers'' who were not released by any judge under the status
quo, and the data thus contain no information about their misconduct potential.
To tighten the bounds, we rule out such perfect discordance between judges above
and below the 90th percentile by calibrating a bound on the average disagreement
rate between these two groups of judges using
\textcite{sigstad_monotonicity_2023}'s data on panels of judges who rule on the
same cases. This calibration yields informative bounds on the counterfactual.

The second application uses data from \textcite[][ADH23]{adh23}, who study the
effect of non-prosecution of misdemeanor cases by assistant district attorneys
(ADAs) in Massachusetts on subsequent criminal justice involvement of the
defendants. ADH23 report TSLS estimates. Since the \textcite{fll23} test rejects
IV monotonicity for several of the courts in their data, ADH23 appeal to
\textcite{fll23}'s average monotonicity condition for a weighted-LATE
interpretation of TSLS\@. ADH23 also write that ``if all arraigning ADAs acted
like the most lenient ADAs in our sample\ldots [we] would likely see a reduction
in criminal justice involvement''. Based on this discussion, we evaluate two
counterfactual policies to make ADAs act more similarly to the most lenient ones
(defined as the top 10\%). First, we consider a simple re-allocation of cases to
the most lenient ADAs. The impact of such a policy is point-identified, and our
results suggest it would decrease crime, in line with the discussion in ADH23.
Second, we consider a quota that requires the bottom 90\% of ADAs to match the
non-prosecution rate of the top 10\%. For this policy, stringent bounds on
agreement rates between ADAs are needed to conclude that the policy would
decrease crime; the bounds appear unreasonably high given rejection of IV
monotonicity under the status quo. We can conclude the policy would decrease crime under more plausible restrictions on disagreement, however, if we also restrict heterogeneity of treatment effects among
people on whom the ADAs disagree. Thus, the conclusion that increasing
non-prosecution likely leads to a reduction in crime, as argued in ADH23, can be
obtained if one is comfortable imposing either a specific policy implementation
(such as reallocation) or restrictions on the degree of treatment effect
heterogeneity.

Our paper is motivated in part by the observation in previous work that the
connection between \ac{LATE} and economic policies of interest is not entirely
clear \parencite[e.g.][]{HeVy05,HeVy07i,manski1996jhr,manskiJPEMicro}.
\Textcite{HeVy99,HeVy05} introduced the \ac{MTE} framework for analyzing the
impacts of policies that change treatment assignment. However, the canonical
\ac{MTE} framework requires policy invariance. Our framework allows for
evaluation of a larger class of counterfactual policies by using the
generalized potential treatment function $D(z, a)$ and allowing for a wide range
of restrictions on it, including relaxations of policy invariance.\footnote{In a
  complementary line of work, \textcite{ura_policy_2025} consider a multi-index
  generalization of the MTE framework that allows for relaxations of IV
  monotonicity. Although the framework allows for partial identification of a
  variety of treatment parameters (such as ATE or ATT), analyzing impacts of
  general counterfactual policies appears challenging without further
  assumptions. \Textcite{sigstad_marginal_2024} studies conditions under which
  standard MTE methods correctly estimate parameters such as ATE or LATE even
  when \ac{IV} monotonicity is violated.} We build on \textcite{IcTa00}, which
considers policy evaluation under the assumption that $D(z,0) =D(z',1)$ for a known pair $(z,z')$, so that the treatment assignment for $z'$ under the counterfactual matches that of $z$ under the status quo. Such agreement would arise under policy invariance if $z$'s release rate under the status quo matches that of $z'$ under the counterfactual, $E[D(z,0)]=E[D(z',1)]$. Our framework nests this perfect agreement assumption
as a special case, but also allows for much more general restrictions on the
potential treatments $D(z,a)$, including non-zero disagreement bounds.

Our paper also relates to a large
literature that has considered bounds on parameters other than LATE in \ac{IV}
settings \parencite[e.g.][]{BaPe97,manski_nonparametric_1990,
  swanson_partial_2018, bai_inference_2025, chen_bounds_2015, RiRo14}. These
papers have primarily focused on parameters such as the \ac{ATE}, or average
effects for principal strata, in contrast to our focus on specific
counterfactual policies.

Finally, several papers have provided conditions under which \ac{IV}
monotonicity does not help to tighten the identified set for the \ac{ATE} or the
marginal distributions of potential outcomes
\parencite{BaPe97,kitagawa21,HeVy01bounds,bai_identifying_2024}. By contrast,
\textcite{kamat_identifying_2019} shows that IV monotonicity can matter for
partially identifying the copula of the potential outcomes. We complement this literature by providing conditions under which IV monotonicity does not help to tighten the bounds on the impact of a counterfactual policy, which need not correspond to the \ac{ATE}.

The next section sets up the model and defines the identified set for $\theta$,
the average outcome under the counterfactual. \Cref{sec:no_monotonicity} derives
a tractable characterization of the identified set without imposing IV
monotonicity. \Cref{sec:iv_monotonicity} gives sufficient conditions under
which imposing IV monotonicity does not help tighten the identified set, while
\Cref{sec:other_restrictions_help} gives examples of restrictions that do. The
identifying power of these restrictions is illustrated with two applications in
\Cref{sec:empirical-apps}. The appendices collect proofs and additional results.

\section{Setup}\label{subsection:idset}

\subsection{Observable data and potential outcomes}

We observe data on the triple $(Y, D, Z)$, where $Y$ is an outcome of interest,
$D\in \{0,1\}$ is a binary treatment indicator, and
$Z\in\mathcal{Z} = \{1, \dotsc, K\}$ is a multivalued instrument that influences
the treatment. We wish to use the data to learn about the effect of a
\emph{counterfactual policy}, which we denote by $a=1$. We let $a=0$ denote the
status quo observed in the data. To properly define the effect of this policy,
we use the potential outcome notation, with $Y(d)$ denoting the potential
outcome under treatment status $d\in\{0,1\}$, and $D(z, a)$ denoting the
potential treatment under instrument value $z\in\mathcal{Z}$ and policy
$a\in\{0,1\}$.\footnote{Similar notation appears in \textcite{IcTa00}.} We assume that the support of the joint distribution of $(Y(0), Y(1))$ is
contained in a compact set $\mathcal{Y}^{2} \subseteq \mathbb{R}^{2}$.\footnote{For simplicity, we focus on the case where $Y$ is bounded, which ensures that the bounds on the average policy effect are finite. Our results readily extend to the case with unbounded outcomes if one imposes additional restrictions on the outcome, such as those considered in \Cref{sec:outcome_restrictions}, that guarantee that the average policy effect is finite. Even if the bounds are infinite, one could modify our framework to obtain bounds on quantiles, rather than the average policy effect.} The
observed treatment and outcome are thus given by $D=D(Z,0)$ and
$Y=Y(D)=Y(D(Z,0))$, while the potential treatments $D(z,1)$ are not observed.
The policy effect relative to the status quo is given by $\theta-E[Y]$, where
$\theta:=E[Y(D(Z,1))]$ is the \emph{average counterfactual outcome} under the
new policy.\footnote{For ease of exposition, we assume that the marginal
  distribution of $Z$ is invariant to the policy---e.g., imposing a quota on
  release rates does not affect the assignment of judges. It is straightforward
  to accommodate known changes in the marginal distribution of $Z$ by weighting
  the average counterfactual outcome by the density of the counterfactual $Z$
  distribution relative to its status quo distribution.} Since identification of
$E[Y]$ is trivial, we focus on $\theta$ as the parameter of interest. We further note that in many settings, it may be reasonable to impose that the counterfactual policy has a monotone effect on the treatment in the sense that $D(z,1) \geq D(z,0)$ (which we formalize in \Cref{assumption:policy_monotonicity} below), in which case:
\begin{equation}\label{eq:complier_effect}
 E[Y(1) - Y(0) \mid D(Z,1) > D(Z,0)] = \frac{\theta - E[Y]}{E[D(Z,1)] - E[D]}.
\end{equation}
The left-hand side is the treatment effect for the \emph{policy compliers} who
are induced to adopt the treatment by the counterfactual policy (e.g., the
marginally-released defendants). Thus, when the ``first stage'' impact of the
counterfactual policy is identified, the average effect for policy compliers is
also a simple point-identified transformation of $\theta$. We follow the usual
convention in identification analysis of abstracting from observable covariates;
\Cref{sec:impl-deta} details how we incorporate covariates in our empirical
illustrations.

To fix ideas, as a running example, consider a judge \ac{IV} design, where $D$
is an indicator for release of a defendant, $Z$ is the identity of
a judge who is randomly assigned to the case, and the policy $a=1$ is an
encouragement or a quota to release more defendants. $Y$ may correspond to
various outcomes of interest. In many applications of leniency IV designs, the
outcome is binary, such as an indicator for pretrial misconduct
\parencite{adh22}, recidivism or criminal complaints \parencite{adh23}, mortality \parencite{norris_effect_2024}, employment \parencite{dgy18}, earning above the poverty rate
\parencite{kling06}, or high-school graduation \parencite{AiDo15}.
However, many of our results also allow for continuous outcomes, such as future
academic performance of the defendant's child \parencite{npw21} or average
earnings \parencite{kling06}.

Since the potential outcomes are indexed by the treatment only, our notation
implicitly embeds two exclusion restrictions: both the instrument ($z$) and the
counterfactual policy ($a$) affect the outcome only by impacting the treatment. In the
judge IV running example, the first exclusion restriction amounts to assuming
that the judges affect the defendants' outcomes only through their release decisions. This would
be violated if judges can make other decisions, such as set the terms of
probation or fees owed, that may directly affect outcomes
\parencite{mueller-smith15}. The exclusion restriction with respect to the
counterfactual policy likewise imposes that, say, a quota on how many defendants
are released affects defendants' outcomes only by changing which defendants are
released. This could be violated if releasing more defendants reduces
deterrence, thereby increasing crime directly; more generally, the exclusion
restriction rules out general equilibrium effects of the policy.

Additionally, we assume throughout that the instrument is as-good-as-randomly
assigned. In the judge \ac{IV} running example, this holds if judges are as-good-as-randomly assigned to defendants, which is the case in many of the empirical
papers cited above, at least once we condition on court-by-time fixed effects.
Together with the exclusion restrictions described above, random assignment implies that the
instrument is valid in that it is independent of the potential treatments and
potential outcomes,
\begin{equation}\label{eq:valid_iv}
  \{(Y(d), D(z, a))\}_{d, a\in\{0,1\}, z\in\mathcal{Z}}\indep Z.
\end{equation}
The identification power of the instrument comes from its influence on the
treatment.

\subsection{The identified set for the average counterfactual outcome}

To define the identified set for $\theta$, let $P$ denote the distribution of
the observable data $(Y, D, Z)=(Y(D(Z,0)), D(Z,0), Z)$, and let $P^{*}$ denote the
distribution of the model primitives, that is, the joint distribution of
potential outcomes, potential treatments, and the instrument,
$(Y(\cdot), D(\cdot, \cdot), Z)$. We say that $P^*$ \emph{generates} $P$ if the
implied distribution of $(Y, D, Z)$ under $P^*$ matches the observed data---i.e.,
if under $P^*$, the distribution of the triple $(Y(D(Z,0)), D(Z,0), Z)$ is $P$.

We encode any \emph{ex ante} restrictions on the model primitives by restricting
the family $\mathcal{P}^{*}$ of possible distributions for $P^{*}$. At minimum,
since we maintain IV validity throughout, we must have that
$\mathcal{P}^* \subseteq \mathcal{P}^{*}_{\textnormal{valid}}$, where
$\mathcal{P}^{*}_{\textnormal{valid}}$ is the set of distributions satisfying
the IV validity condition~\eqref{eq:valid_iv}. However, the
family $\mathcal{P}^{*}$ may be smaller if the researcher imposes other
restrictions on the model that we discuss in more detail below, such as IV
monotonicity. Given $\mathcal{P}^{*}$, the identified set for the model
primitives is given by the set of distributions in $\mathcal{P}^{*}$ that could
have generated the observed data:
\begin{equation*}
\mathcal{P}^*_{I}(P; \mathcal{P}^*) := \{P^* \in \mathcal{P}^*\colon P^*\;\text{generates}\;P \}.
\end{equation*}
The identified set for the average counterfactual outcome,
$\theta = E[Y(D(Z,1))]$, corresponds to the set of counterfactual means
consistent with these primitives:
\begin{equation*}
  \Theta_I(P; \mathcal{P}^*) := \{E_{P^*}[Y(D(Z,1))] \, :\, P^* \in \mathcal{P}^*_{I}(P; \mathcal{P}^*) \},
\end{equation*}
where we make explicit that the expectation is taken with respect to the
distribution $P^{*}$. We suppress this dependence whenever it doesn't cause
confusion.

For our analysis, it will be useful to distinguish between assumptions that
restrict the joint distribution of decisions across decision-makers---i.e.,
restrict the dependence between $D(z, \cdot)$ and $D(z', \cdot)$ for
$z \neq z'$---versus assumptions that only restrict the set of marginals
$\{(Y(\cdot), D(z,0), D(z,1))\}_z$ involving one $z$ at a time. We refer to the
former as \emph{cross-judge} restrictions, and the latter as \emph{marginal}
restrictions, as formalized by the next assumption.

\begin{assumption}[Marginal restrictions]\label{assumption:first_stage_restrictions}
  The marginal distributions
  $\{(Y(0), Y(1), D(z, 0), \allowbreak D(z, 1)) \}_{z\in\mathcal{Z}}$ are contained in some
  collection of distributions $\mathcal{R}^{*}$, specified by the researcher.
\end{assumption}
Next, we list examples of both marginal and cross-judge restrictions on the potential treatments that we
will consider in our analysis.
\Cref{sec:no_monotonicity,sec:iv_monotonicity,,sec:other_restrictions_help} then explore the identifying power
of these restrictions (and variants thereof).

\subsection{Examples of restrictions on potential treatments}

Knowledge of policy implementation details will typically allow us to impose
natural restrictions on the marginal distribution of $(D(z,1), D(z,0))$ for each
$z$, which can be encoded by an appropriate choice of $\mathcal{R}^*$ in
\Cref{assumption:first_stage_restrictions}. In the context of our running
example, for many counterfactual policies, the marginal release rates will
either be known or consistently estimable, allowing us to impose that
$E[D(z,1)] = \alpha_{1, z}$ for all $z$. For instance, under a quota policy that
requires judges to release at least a fraction $q$ of the defendants, we may set
$\alpha_{1, z}= \max\{E[D\mid Z=z], q \}$ for each $z$, so that each judge who
is currently below the quota increases their release rate to match it.

For quota policies or other policies that take the form of an
encouragement, it is natural to also impose the following marginal restriction:
\begin{assumption}[Policy monotonicity]\label{assumption:policy_monotonicity}
  $D(z,1)\geq D(z,0)$ for all $z$.
\end{assumption}
If a policy is a directive to treat everyone, such as a universal release
program, then policy monotonicity holds trivially since $D(z,1)=1$ for all $z$.
In the context of a quota policy, \Cref{assumption:policy_monotonicity} simply
imposes that any defendant released without the quota in place would also be
released when the quota is in place, which seems quite reasonable. Likewise, an
algorithm that flags low-risk defendants as candidates for release may be
expected to have a monotonic effect on release
rates.\footnote{\label{fn:release_algorithm}In some contexts, the algorithm may
  recommend release for some defendants, and recommend detention for others. In
  this case, we'd expect the algorithm to weakly increase release rates among
  those recommended to be released, and weakly decrease them among those
  recommended for detention. If the algorithmic recommendation is a
  function of observables, one could impose the appropriately signed version of monotonicity in each subpopulation.} We expect that for
many counterfactual policies of interest policy monotonicity will be reasonable,
and we will therefore maintain this assumption for many of our results.

A different type of monotonicity restriction that is commonly imposed is the
\textcite{ImAn94} IV monotonicity assumption:
\begin{assumption}[IV monotonicity]\label{assumption:instrument_monotonicity}
  For any pair $z, z'$, either $D(z,0) \geq D(z',0)$ (a.s.) or $D(z',0) \geq D(z,0)$ (a.s.).
\end{assumption}
This assumption allows one to interpret the \ac{TSLS} estimand as a \acf{LATE},
a weighted average of treatment effects for individuals whose treatment depends
on the instrument under the status quo, i.e., those for whom $D(z,0)$ varies
with $z$. In contrast to policy monotonicity, IV monotonicity involves
restrictions on the joint behavior of judges, so it is a cross-judge rather than a marginal
restriction. \Textcite{vytlacil02} shows that
\Cref{assumption:instrument_monotonicity} is equivalent to the existence of a
latent index $U$ (not depending on $z$) and thresholds $\alpha_{z}$ such
that
\begin{equation}\label{eq:u0_definition}
  D(z,0) = \1{U \leq \alpha_{z}}\quad\text{for all $z$}.
\end{equation}
Intuitively, this representation says that all judges have the same ranking of
defendants under the status quo ($U$), but potentially disagree on the
threshold to use to determine whether a defendant is released ($\alpha_{z}$). As
described in the introduction, \Cref{assumption:instrument_monotonicity} may
often be questionable in empirical contexts---judges may differ in their
rankings of defendants owing to idiosyncratic perceptions of risk, different
preferences over crime types, or differences in skill---and in fact is often
rejected using statistical tests \parencite[e.g.][]{fll23}, as well as direct
measurement when we observe multiple judges making a decision about the same
case \parencite{sigstad_monotonicity_2023}.

In addition to IV monotonicity, which imposes a common ranking $U$ under the
status quo, we will also consider the even stronger cross-judge assumption that
this common ranking does not change under the
counterfactual---\textcite{HeVy07i,HeVy05} call this condition policy
invariance.
\begin{assumption}[Policy invariance]\label{assumption:single_index}
  $D(z, a)=\1{\alpha_{z} + \beta_z \cdot a \geq U}$, with $\beta_z \geq 0$.
\end{assumption}
Under \Cref{assumption:single_index}, the only effect of the policy is that it
increases the thresholds judges use for release by a judge-specific amount
$\beta_{z}$. The policy invariance assumption underlies the use of the marginal
treatment effect (MTE) framework \parencite{HeVy99,HeVy05} for policy analysis.
In particular, the representation in~\cref{eq:u0_definition} can be used to
define the MTE curve, $MTE(u) = E[Y(1)-Y(0) \mid U=u]$. Under
\Cref{assumption:single_index}, we can then write the change in the average
outcome from implementing the counterfactual policy, $\theta - E[Y]$, simply as
a functional of the MTE curve. But without \Cref{assumption:single_index}, when
$U$ may not correspond to judges' rankings under the counterfactual, the MTE curve may be insufficient for
counterfactual policy analysis.

Note that \Cref{assumption:single_index} implies both
\Cref{assumption:policy_monotonicity} and
\Cref{assumption:instrument_monotonicity}, and thus will be questionable
whenever \Cref{assumption:instrument_monotonicity} is questionable. Another way
to see the difference between
\Cref{assumption:policy_monotonicity,assumption:instrument_monotonicity,,assumption:single_index}
is that if we had data under both the status quo and the counterfactual, we
could consider two possible instruments, $Z$ and $A$, where the latter is an
indicator for whether the counterfactual policy is implemented.
\Cref{assumption:policy_monotonicity} corresponds to the IV monotonicity
assumption where the binary policy variable $A$ is the instrument.
\Cref{assumption:instrument_monotonicity} corresponds to IV monotonicity
for the multivalued instrument $Z$, while \Cref{assumption:single_index}
corresponds to IV monotonicity for the two-dimensional instrument
$\tilde{Z} = (Z, A)$.
However, as argued in \textcite{HeUrVy06} and \textcite{mtw21}, with multiple
instruments, IV monotonicity is often implausibly strong, as it imposes strong
 restrictions on treatment choice.
Nevertheless, it is useful conceptually to consider what we can learn
under \Cref{assumption:single_index}, since as we will see below, for some
policy exercises \Cref{assumption:single_index} will be useful while
\Cref{assumption:instrument_monotonicity} will not. This suggests that
relaxations of \Cref{assumption:single_index} may be useful in practice, a topic
we turn to in \Cref{sec:other_restrictions_help} below.

\section{The identified set without monotonicity or policy invariance}\label{sec:no_monotonicity}

In this section, we derive a tractable characterization of the
identified set $\Theta_I$, without imposing IV monotonicity or policy
invariance. Previous work that has considered partial identification in related
IV settings has typically characterized the identified set by defining response
types (a.k.a.\ principal strata) based on the values of $D(\cdot, \cdot)$
\parencite[e.g.][]{kitagawa21, bai_identifying_2024}. The observable data on
$(Y, D, Z)$ is then a mixture of the distributions of the potential outcomes
across the response types, and one can characterize the identified set for the
primitives $\mathcal{P}_I^*(P;\mathcal{P}^{*})$ by searching over mixtures of
types that match the observed data. The identified set for $\theta$ can then be
calculated by computing the minimum and maximum value of $E_{P^*}[Y(D(Z,1))]$
among $P^* \in \mathcal{P}_I^*(P;\mathcal{P}^{*})$.

The challenge with this approach is that if one does not impose instrument
monotonicity, then the number of response types grows exponentially in the number
of judges $K$. Specifically, if we impose no restrictions on $D(\cdot, \cdot)$,
then there are four possible values of $(D(z,0), D(z,1))$ for each value of $z$,
and hence $4^K$ possible response types. Imposing policy monotonicity
(\Cref{assumption:policy_monotonicity}) rules out response types with
$(D(z,0), D(z,1)) = (1, 0)$ for some $z$, which still leaves $3^K$ possible response types. In
either case, the number of response types becomes extremely large with even a moderate
number of judges: when $K=30$, for example, we have
$3^K \approx 2 \cdot 10^{14}$ and $4^K \approx 10^{18}$, which makes
characterization of the identified set by type enumeration computationally
infeasible.

Fortunately, we will show that it is not necessary to enumerate response types
to derive the identified set for the counterfactual policy of interest. The key
observation is that we need not search over possible distributions $P^*$ for the
model primitives; it is sufficient to restrict our attention to the collection
of marginal distributions $\{(Y(0), Y(1), D(z, \cdot))\}_{z}$ that involve the
potential outcomes and the potential treatments \emph{for one judge at a time}. That is, we need not explicitly consider the dependence between $D(z,\cdot)$ and $D(z',\cdot)$ for $z \neq z'$.
To see why we need not consider the joint behavior of the judges, observe that
the counterfactual outcome can be written
\begin{equation*}
  \theta = E_{P^{*}}[Y(D(Z,1))] = \sum_z P(Z=z) \cdot E_{P^{*}}[Y(D(z,1)) \mid Z=z],
\end{equation*}
which depends on $P^*$ only through the collection of marginals for
$\{(Y(0), Y(1), D(z,1)) \}_z$ and the marginal distribution of $Z$. Likewise, the
probability distribution of the observable data takes the form
\begin{equation*}
  P(Y \in A, D=d, Z=z) = P(Z=z) \cdot P^{*}(Y(d) \in A, D(z,0) =d),
\end{equation*}
which depends on $P^*$ only through the collection of marginals for
$\{(Y(0), Y(1), D(z,0))\}_{z}$ and the marginal distribution of $Z$. It follows
that both the observable data distribution $P$ and the average outcome under the
counterfactual depend on $P^*$ only through the marginal distributions
$\{(Y(0), Y(1), D(z, \cdot))\}_{z}$ and the marginal distribution of $Z$---they do
not depend on the joint distribution of $D(z, \cdot)$ and $D(z', \cdot)$ for
$z \neq z'$. Moreover, if we are not imposing any cross-judge restrictions, then our constraints on the model also only restrict the marginals of $\{(Y(0), Y(1), D(z, \cdot))\}_{z}$. This suggests that to compute the identified set for $\theta$, we
can simply optimize over sets of marginals for $\{(Y(0), Y(1), D(z, \cdot)) \}_z$
that are consistent with the observable data.

\Cref{prop:marginals_one_z} and \Cref{cor:optmize_over_pi} below formalize these
observations. For ease of exposition, we state the results for the special case
with discrete outcomes, and defer the general statement, which involves more
notation, to \Cref{theorem:general_statement} in \Cref{sec:proof_marginals_one_z}. Suppose that the support $\mathcal{Y}$ of the outcomes is
discrete, so that the collection of marginal distributions
$\{(Y(0), Y(1), D(z, \cdot)) \}_{z}$ can be characterized by the collection of
marginal probability mass functions $\{\pi_{z}(\cdot) \}_{z}$, where
$\pi_{z}(y_{0}, y_{1}, d_{0}, d_{1})=P^{*}(Y(0)=y_{0}, Y(1)=y_{1}, D(z,0)=d_{0},
D(z,1)=d_{1})$. Our first result gives a simple way of verifying whether a
conjectured collection of marginals $\{\pi_{z}(\cdot)\}_{z}$ is consistent
with the data in the sense that there exists a distribution $P^{*}$ that
generates both the data distribution $P$ and the conjectured marginals.

\begin{prop}\label{prop:marginals_one_z}
  Suppose that $\mathcal{Y}$ is finite. There exists a joint
  distribution
  $P^* \in \mathcal{P}^*_{I}(P;\mathcal{P}^{*}_{\textnormal{valid}})$ with marginals $\{\pi_{z}(\cdot) \}_{z \in \mathcal{Z}}$ if and only if $\{\pi_{z}(\cdot) \}_{z \in \mathcal{Z}}$ satisfy the following three conditions:
\begin{enumerate}
\item\label{item:match_data} They match the observable data: for every $y \in \mathcal{Y}$,
  $z \in \mathcal{Z}$
    \begin{align*}
      \sum_{y_0 \in \mathcal{Y}, d_1 \in \{0,1\}} \pi_z
      (y_0, y, 1, d_1)&=P(Y=y, D=1 \mid Z=z), \\
      \sum_{y_1 \in \mathcal{Y}, d_1 \in \{0,1\}}
      \pi_z(y, y_1, 0, d_1)&=P(Y=y, D=0 \mid Z=z).
\end{align*}
\item\label{item:same_marginals} They imply the same distribution for
  $(Y(0), Y(1))$: for all $y_0,y_1 \in \mathcal{Y}$ and for any
  $z \in \mathcal{Z}$
  \begin{equation*}
    \sum_{(d_0, d_1) \in \{0,1\}^2} \pi_{z}(y_0, y_1, d_0, d_1) = \sum_{(d_0, d_1) \in \{0,1\}^2} \pi_{1}(y_0, y_1, d_0, d_1).
  \end{equation*}
\item\label{item:valid_pmf} They are valid probability mass functions: for all
  $(y_0, y_1, d_0, d_1) \in \mathcal{Y}^2 \times \{0,1\}^2$, and
  $z \in \mathcal{Z}$, $ \pi_z(y_0, y_1, d_0, d_1) \geq 0$ with
  $\sum_{(y_0, y_1, d_0, d_1) \in \mathcal{Y}^2 \times \{0,1\}^2} \pi_z(y_0, y_1, d_0, d_1) = 1$.
\end{enumerate}
\end{prop}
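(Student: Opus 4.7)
The plan is to prove both directions of the equivalence. The necessity direction ($\Rightarrow$) reduces to routine verification: given any $P^*\in\mathcal{P}^*_I(P;\mathcal{P}^{*}_{\textnormal{valid}})$ whose $z$-indexed marginals are $\{\pi_z\}_z$, condition~\ref{item:valid_pmf} is immediate; condition~\ref{item:same_marginals} holds because the $(Y(0), Y(1))$-marginal of every $\pi_z$ must coincide with the single $P^*$-distribution of the potential outcomes; and condition~\ref{item:match_data} follows from the observation that $(Y, D) = (Y(D(Z,0)), D(Z,0))$ combined with the IV-validity condition~\eqref{eq:valid_iv}, which yields $P(Y=y, D=d \mid Z=z) = P^*(Y(d)=y, D(z,0)=d)$, precisely the stated sum of $\pi_z$ over $Y(1-d)$ and $D(z,1)$.

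The sufficiency direction ($\Leftarrow$) is where the real content lies. I would glue the marginals $\{\pi_z\}_z$ into a joint distribution $P^*$ by anchoring on the common $(Y(0), Y(1))$-marginal and then drawing each judge's $(D(z,0), D(z,1))$ conditionally independently. Concretely, let $\pi_{YY}(y_0, y_1)$ denote the common $(Y(0), Y(1))$-marginal guaranteed by condition~\ref{item:same_marginals}, and for each $z$ let $q_z(\cdot, \cdot \mid y_0, y_1)$ denote the conditional distribution of $(D(z,0), D(z,1))$ given $(Y(0), Y(1))$ implied by $\pi_z$ (defined arbitrarily when $\pi_{YY}(y_0, y_1) = 0$, a set on which all $\pi_z$ also vanish by condition~\ref{item:same_marginals}). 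Then construct $P^*$ as the law under which (i) $(Y(0), Y(1))$ has marginal $\pi_{YY}$; (ii) conditional on $(Y(0), Y(1))$, the pairs $\{(D(z,0), D(z,1))\}_{z\in\mathcal{Z}}$ are mutually independent, with each drawn from $q_z(\cdot, \cdot \mid Y(0), Y(1))$; and (iii) $Z$ is drawn independently of everything else with marginal matching $P(Z)$.

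To close the argument, I would verify that this $P^*$ has all the required properties: IV validity is immediate from (iii); the $z$-indexed marginal of $P^*$ equals $\pi_z$ for each $z$ by (i)--(ii) and the definition of $q_z$; and $P^*$ generates $P$ because IV validity gives $P^*(Y=y, D=d, Z=z) = P(Z=z) \cdot P^*(Y(d)=y, D(z,0)=d)$, which matches $P(Y=y, D=d, Z=z)$ by condition~\ref{item:match_data}. Rather than a serious technical obstacle, the main conceptual hurdle is recognizing that neither the observable data distribution nor the target $\theta$ constrains the cross-judge joint distribution of $\{D(z, \cdot)\}_z$, leaving one free to manufacture such a joint by independent-across-$z$ sampling; the three stated conditions exactly capture what must hold of the individual marginals for some such gluing to reproduce $P$.
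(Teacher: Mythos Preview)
Your proposal is correct and follows essentially the same approach as the paper's proof: necessity by direct verification, and sufficiency by constructing $P^*$ via the common $(Y(0),Y(1))$-marginal, conditional independence of $\{(D(z,0),D(z,1))\}_z$ given $(Y(0),Y(1))$ with each pair drawn from the conditional implied by $\pi_z$, and $Z$ independent with marginal $P(Z)$. The paper's version is phrased at a level of generality covering non-discrete outcomes, but in the discrete case the construction and verification steps coincide with yours.
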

\Cref{prop:marginals_one_z} implies that to compute the identified set for
$\theta$, instead of searching over all distributions $P^{*}$, it suffices to
search over the marginal probability mass functions that satisfy the intuitive
conditions~\ref{item:match_data}--\ref{item:valid_pmf}. As the following
\namecref{cor:optmize_over_pi} shows, this holds so long as the restrictions on
$\mathcal{P}^{*}$ only take the form of marginal restrictions, as in
\Cref{assumption:first_stage_restrictions}.

\begin{corollary}\label{cor:optmize_over_pi}
  Suppose that $\mathcal{Y}$ is finite and that $P^{*}$
  satisfies~\cref{eq:valid_iv} and
  \Cref{assumption:first_stage_restrictions} for some convex $\mathcal{R}^{*}$,
  but no further restrictions are placed on $\mathcal{P}^{*}$, that is
  $\mathcal{P}^{*}=\mathcal{P}^{*}_{\textnormal{valid}}\cap \{P^{*}\colon
  \{\pi_{z}(\cdot)\}_{z \in \mathcal{Z}}\in\mathcal{R}^{*}\}$. Then
  ${\Theta}_{I}(P;\mathcal{P}^{*})$ is given by an interval, with the upper
  endpoint given by the optimization
  \begin{equation}\label{eq:theta_value}
    \sup_{\{\pi_{z}(\cdot) \}_{z \in \mathcal{Z}} \in \mathcal{R}^*} \sum_{z \in \mathcal{Z}} P(Z=z)
    \sum_{(y_0, y_1, d_0, d_1) \in \mathcal{Y}^2 \times \{0,1\}^2} \left(d_1 y_1 +
      (1-d_1)y_0 \right) \cdot \pi_{z}(y_0, y_1, d_0, d_1)
  \end{equation}
  subject to the constraints~\ref{item:match_data}--\ref{item:valid_pmf}
  in~\Cref{prop:marginals_one_z}; the lower endpoint is given by an analogous
  minimization.
\end{corollary}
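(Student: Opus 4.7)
The plan is to reduce the corollary to \Cref{prop:marginals_one_z} together with a short convexity argument that turns a linear map of a convex set into an interval.

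First, I would write $\theta$ as a linear functional of the marginals. Using IV validity,
\[
\theta = E_{P^*}[Y(D(Z,1))] = \sum_{z \in \mathcal{Z}} P(Z=z)\, E_{P^*}[Y(D(z,1))],
\]
and evaluating the inner expectation over the four-way marginal of $(Y(0),Y(1),D(z,0),D(z,1))$ produces exactly the objective in \eqref{eq:theta_value}. Thus $\theta$ depends on $P^*$ only through $\{\pi_z\}$ and is a linear functional $L(\{\pi_z\})$ of those marginals.

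Second, I would invoke \Cref{prop:marginals_one_z} to describe the feasible marginals. The proposition asserts that $\{\pi_z\}$ is the collection of marginals of some $P^* \in \mathcal{P}^*_I(P;\mathcal{P}^*_{\textnormal{valid}})$ iff conditions~\ref{item:match_data}--\ref{item:valid_pmf} hold. Since the only extra restriction imposed beyond IV validity is the marginal condition $\{\pi_z\} \in \mathcal{R}^*$, a collection $\{\pi_z\}$ arises from some $P^* \in \mathcal{P}^*_I(P;\mathcal{P}^*)$ iff it lies in the feasible set
\[
\mathcal{F} := \bigl\{\{\pi_z\}_{z \in \mathcal{Z}} : \text{conditions~\ref{item:match_data}--\ref{item:valid_pmf} hold}\bigr\} \cap \mathcal{R}^*.
\]
Consequently $\Theta_I(P;\mathcal{P}^*) = L(\mathcal{F})$.

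Third, I would argue convexity. Conditions~\ref{item:match_data} and~\ref{item:same_marginals} are linear equalities in $\{\pi_z\}$, condition~\ref{item:valid_pmf} combines non-negativity with a linear normalization, and $\mathcal{R}^*$ is convex by hypothesis. Hence $\mathcal{F}$ is convex, so $L(\mathcal{F})$ is a convex subset of $\mathbb{R}$, i.e.\ an interval, whose upper endpoint is $\sup_{\{\pi_z\} \in \mathcal{F}} L(\{\pi_z\})$---the program in \eqref{eq:theta_value}---and whose lower endpoint is the corresponding infimum.

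The one point requiring care is that attaching the restriction $\mathcal{R}^*$ to $\mathcal{P}^*_{\textnormal{valid}}$ should not break the ``if'' direction of \Cref{prop:marginals_one_z}: one needs that any $\{\pi_z\} \in \mathcal{F}$ is realized by some $P^*$ satisfying both IV validity and the marginal restriction. This is immediate because $\mathcal{R}^*$ restricts only the marginals, which are already fixed once $\{\pi_z\}$ is. No compactness or closedness argument is needed since the statement phrases the endpoints as sup and inf rather than min and max.
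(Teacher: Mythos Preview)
Your proposal is correct and follows essentially the same approach as the paper: both reduce to \Cref{prop:marginals_one_z} to identify $\Theta_I$ with the image of the feasible marginal set under the linear functional in \eqref{eq:theta_value}, and both invoke convexity to conclude the image is an interval. The paper carries out the $\sup$-matching step via approximating sequences in $\mathcal{P}^*_I$ and in the feasible set, whereas you argue the set equality $\Theta_I = L(\mathcal{F})$ directly; these are equivalent packagings of the same idea.
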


\paragraph{Computation via linear programming}\label{sec:lp_implementation}

\Cref{cor:optmize_over_pi} implies that if $\mathcal{R}^*$ restricts the
marginals $\{\pi_z\}_{z \in \mathcal{Z}}$ linearly, the identified set can be
computed by solving a linear program that scales \emph{linearly} with the number
of instruments $K$.

To illustrate, suppose that the only restriction imposed by $\mathcal{R}^{*}$ is
that it restricts the fraction of defendants released under the counterfactual,
$E[D(z,1)]=\alpha_{1, z}$. For a given collection of marginal probability mass
functions $\{\pi_{z}(\cdot)\}_{z}$, let $\pi$ denote the vector of length
$4\abs{\mathcal{Y}}^{2}K$ that stacks all values of the marginals. Then the
value of the objective function in \cref{eq:theta_value} can be written
$\sum_{z, y_{0}, y_1, d_0, d_1} P(Z=z) (d_1 y_1 + (1-d_1) y_0) \pi_{z}(y_0, y_1,
d_0, d_1)= \omega' \pi$, where the weighting vector $\omega$ stacks the weights
$P(Z=z) (d_1 y_1 + (1-d_1) y_0)$. Furthermore, the
constraints~\ref{item:match_data}--\ref{item:valid_pmf}
in~\Cref{prop:marginals_one_z} can be written as
$2\abs{\mathcal{Y}}K+\abs{\mathcal{Y}}^{2}(K-1)+4\abs{\mathcal{Y}}^{2}K+1=O(\abs{\mathcal{Y}}^{2}K)$
linear constraints in $\pi$. Finally, the $K$ constraints
$E[D(z,1)]=\alpha_{1, z}$ can be written as
$\sum_{y_{0}, y_{1}, d_{0}}\pi_{z}(y_{0}, y_{1}, d_{0},1)=\alpha_{1, z}$, which are
also linear in $\pi$. It follows that the optimization problem
in~\cref{eq:theta_value} can be written as a linear program in
$O(K\abs{\mathcal{Y}}^{2})$ variables, subject to $O(K\abs{\mathcal{Y}}^{2})$
constraints. A related result appeared in \textcite{RiRo14}, who show that for
binary outcomes, bounds on $E[Y(d)]$ when $\mathcal{R}^{*}$ is unrestricted can
be obtained as a solution to $O(K^2)$ inequalities. Concurrent work by \textcite{song_categorical_2025} shows that bounds on the marginal distribution of $Y(\cdot)$ when $\mathcal{R}^*$ is unrestricted can be characterized by $O(K)$ inequalities. Their result, however, does not directly cover policy counterfactuals like $\theta = E[Y(D(Z,1))]$, which is a functional of the joint distribution of $Y(\cdot)$ and $D(\cdot, \cdot)$.

The key feature of this program is that its dimension scales linearly with $K$,
rather than exponentially, which makes it computationally fast even when $K$ is
large. The linear scaling is preserved if we restrict $\mathcal{R}^{*}$ in
other ways, so long as the restrictions are linear. For example, imposing policy
monotonicity (\Cref{assumption:policy_monotonicity}) amounts to adding the $K$
constraints that $\sum_{y_{0}, y_{1}}\pi_z(y_0,y_1, 1, 0) = 0$ for all $z=1, \dotsc, K$.

\begin{remark}[Discretizing $Y$]
  If $Y$ is continuous, one can obtain conservative bounds by considering a
  discretized version of $Y$. Given an initial grid of $Q$ points, let $Y^{ub}$
  be $Y$ rounded up to the nearest grid point. Since by construction
  $Y^{ub} \geq Y$, it follows that
  $E[Y^{ub}(D(Z,1))] \geq E[Y(D(Z,1))] = \theta$. Hence, computing the upper
  bound of the identified set for $E[Y^{ub}(D(Z,1))]$ yields a potentially
  non-sharp upper bound on $\theta$. Likewise, one can obtain a conservative
  lower bound by computing the lower bound of the identified set after rounding
  down $Y$ to the nearest grid point.
\end{remark}

\begin{remark}[Comparing multiple policies]
For simplicity, we have focused on evaluating a single counterfactual policy ($a=1$). In some settings, we may be interested in comparing the impacts of two counterfactual policies (e.g. a quota vs. automatic release), denoted $a=1$ and $a=2$. It is straightforward to extend the approach outlined above to this case by setting $\pi_z$ to correspond to the joint marginal distribution of $(Y(\cdot), D(z,0), D(z,1), D(z,2))$. The objective $E[ Y(D(Z,2)) - Y(D(Z,1))]$ is linear in $\pi_z$, and the dimension scales linearly in $K$ as before.
\end{remark}

\begin{remark}[Estimation and inference]
  For estimation and inference in our empirical illustrations, we exploit the
  fact that the data only enters the linear program through the
  $2\abs{\mathcal{Y}}K$ probabilities $P(Y=y,D=d\mid Z=z)$, which appear on the
  right-hand side of the data compatibility constraint~\ref{item:match_data} and in the constraints on $E[D(z,1)]$ under the quota policy we consider. We form plug-in estimates of the identified set by replacing these probabilities with their sample
  analogs. For inference, we use a projection approach, whereby we first form a
  confidence band for these data probabilities, and then optimize over both $\pi$ and data-probabilities within the confidence band. An advantage of this approach is that it can
  accommodate many-decision-maker asymptotics, where $K$ grows with the sample
  size. We provide details in \Cref{sec:impl-deta}.
\end{remark}

\section{Does IV monotonicity help tighten the identified set?}\label{sec:iv_monotonicity}

The previous section characterized the identified set for the counterfactual
outcome $\theta = E[Y(D(Z,1))]$ without imposing IV monotonicity. This section evaluates the extent to which imposing IV monotonicity helps tighten the identified set for $\theta$. We present
sufficient conditions under which imposing IV monotonicity alone does
not help tighten the identified set. If these conditions hold, then
the debate over the validity of IV monotonicity is somewhat of a red
herring from the perspective of learning about counterfactuals. A
researcher interested in counterfactual policy evaluation should therefore focus on alternative assumptions---such as policy invariance
(and relaxations thereof)---that may help tighten the identified set.

\paragraph{Simple example for intuition.} We begin with a simple example to
illustrate why imposing IV monotonicity need not help tighten the identified
set. Suppose that $a=1$ corresponds to a universal release policy, so that
$D(Z,1)=1$ with probability 1. Suppose further that $Y$ is binary and that
$Y(0)$ is known to equal zero with probability one. This restriction arises
frequently in the criminal justice setting, where, for example, one cannot fail
to appear in court if detained while awaiting trial.\footnote{There are other
  leniency designs in which one of the potential outcomes is known to equal
  zero. For example, \textcite{baron_discrimination_2024} study child welfare
  investigations and define the outcome $Y$ to be 1 if there is evidence of
  future misconduct in the child's original home, and thus by construction $Y=0$
  if a child is placed in foster care (i.e. $Y(1)=0$). Likewise,
  \textcite{cgy22} study whether doctors diagnose patients with pneumonia, and
  define $Y$ to be 1 if the patient subsequently shows signs of undiagnosed
  pneumonia. Thus, in their setting $Y(1)=0$ by construction.} The parameter of
interest then reduces to $\theta = E[Y(1)]$.

In this setting, it is straightforward to show that without IV monotonicity, the sharp identified set that we characterized in \Cref{cor:optmize_over_pi} corresponds to an intersection of judge-specific intervals, as suggested in \textcite{manski_nonparametric_1990} (Lemma 3.1 of \textcite{bai_identifying_2024} makes a similar observation). In particular, bounds
for $\theta$ using \emph{only} data on a given judge $z$ are given by the
interval
\begin{equation}\label{eq:judge_specific_I}
\mathcal{I}_z = [P(Y=1, D=1 \mid Z=z), 1-P(Y=0, D=1 \mid Z=z)].
\end{equation}
The lower bound corresponds to the value of $E[Y(1)]$ if everybody not released
by $z$ has $Y(1)=0$; the upper bound to the value if everybody not released by
$z$ has $Y(1)=1$. The identified set for $\theta$ then corresponds to the
intersection of the judge-specific intervals,
$\Theta_I = \bigcap_z \mathcal{I}_z$.

On the other hand, under IV monotonicity, it is straightforward to show
that the identified set corresponds to the judge-specific interval for the most
lenient judge under the status quo, i.e. $\mathcal{I}_{z_{\max}}$, where
$z_{\max} = \argmax_z E[D(z,0)]$ is the identity of the most lenient judge under
the status quo. Specifically, note that IV monotonicity implies that any
defendant not released by judge $z_{\max}$ is also not released by any other
judge, so that $D(z_{\max},0) =0$ implies $D(z,0) = 0$ for all $z$. It follows
that the distribution of the observable data does not depend at all on the value
of $Y(1)$ for defendants not released by judge $z_{\max}$, i.e., any binary
distribution for $Y(1) \mid D(z_{\max},0)= 0$ is compatible with the observable
data. Note, however, that by iterated expectations, $E[Y(1)]$ is a weighted
average of $Y(1)$ for the defendants released and not released by judge
$z_{\max}$, i.e.
$E[Y(1)] = P^{*}(D(z_{\max},0)=1) P^{*}(Y(1) =1 \mid D(z_{\max},0)=1) +
P^{*}(D(z_{\max},0)=0) P^{*}(Y(1)=1 \mid D(z_{\max},0) = 0)$. The first term is
point-identified from the defendants that judge $z_{\max}$ releases. The
identified set therefore corresponds to the interval obtained by placing trivial
bounds on the outcome for defendants not released by $z_{\max}$,
$P^{*}(Y(1)=1 \mid D(z_{\max},0) = 0) \in [0,1]$, which yields the interval
$\mathcal{I}_{z_{\max}}$.

We thus see that without imposing IV monotonicity, we obtain the identified set
$\bigcap_z \mathcal{I}_z$, whereas if we do impose monotonicity, we obtain the
identified set $\mathcal{I}_{z_{\max}}$. Since adding an assumption
cannot widen the identified set, and since $\bigcap_z \mathcal{I}_z$ is
contained in $\mathcal{I}_{z_{\max}}$, it follows that when IV monotonicity
holds, the identified set is the same whether we impose IV monotonicity or not. On the
other hand, if IV monotonicity does not hold and is rejected by the data, it may
be the case that the set $\bigcap_z \mathcal{I}_z$ is a strict subset of
$\mathcal{I}_{z_{\max}}$: the sharp identified set from
\Cref{cor:optmize_over_pi} above without imposing monotonicity is then tighter than
the \emph{naïve} identified set $\mathcal{I}_{z_{\max}}$ that assumes IV
monotonicity holds in the data (if IV monotonicity is rejected, then the
identified set under IV monotonicity is formally empty).

The intuition for why IV monotonicity does not help in this setting is that
under IV monotonicity, we learn nothing about the defendants not released by the
most lenient judge, $z_{\max}$. By contrast, if IV monotonicity is violated,
then some defendants not released by $z_{\max}$ \emph{may be released} by
another judge $z'$, and thus the data provides some information about their
value of $Y(1)$. Hence, IV monotonicity is in this sense the \emph{least
  favorable} configuration of judge release decisions, and thus imposing IV
monotonicity does not help us tighten the identified set.

\paragraph{More general sufficient conditions.} Our simple example above had two
salient features: (i) one of the potential outcomes was known ($Y(0)=0$), and
(ii) the policy encouragement was very strong ($D(z,1)=1$). We next present a
generalization showing that either of these features on its own is sufficient
for IV monotonicity not to tighten the identified set. We first show that if
$Y(0)=0$, then the identified set for any counterfactual policy satisfying
policy monotonicity ($D(z,1) \geq D(z,0))$ does not depend on whether we impose
IV monotonicity, regardless of whether the policy encouragement is strong or
weak. Second, we show that if both potential outcomes are non-trivial, then IV
monotonicity does not help to tighten the identified set provided that one
assumes the policy encouragement is ``sufficiently strong'', as formalized in
condition~(iii) below.

We first consider the setting where one of the potential outcomes is known. For
simplicity of notation, we establish our result in the context of a
counterfactual policy that imposes an \emph{average quota policy} that restricts
the average value of $D(Z,1)$. Fix $\alpha \in [0,1]$. Let
$\mathcal{P}_{\alpha}^*$ be the set of distributions over the random vector
$(Y(0), Y(1), D(\cdot, \cdot), Z)$ satisfying the \ac{IV} validity
condition~\eqref{eq:valid_iv}, policy monotonicity
(\Cref{assumption:instrument_monotonicity}), and also
\begin{enumerate}
\item [(i)] \emph{$\alpha$-average quota policy:} $P^*(D(Z,1)=1) = \alpha$,
\item [(ii)] \emph{Known outcome under $D=0$:} $P^*(Y(0) = 0)=1$.
\end{enumerate}
Let $\mathcal{P}^*_{\alpha, Mon}$ be the subset of distributions in
$\mathcal{P}_{\alpha}^*$ that further satisfy IV monotonicity
(\Cref{assumption:instrument_monotonicity}).

\begin{prop}[No identifying power of monotonicity with known $Y(0)$]\label{prop:monotonicity_is_not_helpful}
If the identified set $\Theta_I(P; \mathcal{P}^*_{\alpha, Mon})$ is non-empty, then \Cref{assumption:instrument_monotonicity} has no identifying power in the sense that $\Theta_I(P; \mathcal{P}^*_{\alpha, Mon})=\Theta_I(P; \mathcal{P}_{\alpha}^*)$.
\end{prop}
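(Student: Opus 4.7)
The inclusion $\Theta_I(P; \mathcal{P}^*_{\alpha, Mon}) \subseteq \Theta_I(P; \mathcal{P}^*_\alpha)$ is immediate, so the content of the proposition lies in the reverse inclusion. By \Cref{cor:optmize_over_pi}, $\Theta_I(P; \mathcal{P}^*_\alpha)$ is obtained by varying $\theta$ over collections of marginals $\{\pi_z\}_z$ satisfying data-matching, a common $(Y(0), Y(1))$ marginal, the $\alpha$-quota, policy monotonicity, and $Y(0) = 0$; moreover, $\theta$ depends on the primitive $P^*$ only through these marginals and the marginal of $Z$. The plan is therefore, given any such $\{\pi_z\}_z$ arising from some $P^* \in \mathcal{P}_\alpha^*$, to construct a joint distribution $\tilde P^*$ with the same marginals that additionally satisfies IV monotonicity.

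I would build $\tilde P^*$ via a latent-variable construction. Order judges by leniency so that $p_{(1)}^{(0)} \leq \cdots \leq p_{(K)}^{(0)}$, where $p_z^{(0)} = P(D = 1 \mid Z = z)$, and let $U_0 \sim \mathrm{Uniform}(0,1)$. Setting $D(z, 0) := \1{U_0 \leq p_z^{(0)}}$ yields IV monotonicity and the correct marginal for $D(z,0)$. Partition $[0,1]$ into $B_j = (p_{(j-1)}^{(0)}, p_{(j)}^{(0)}]$ for $j = 1, \dots, K$ (with $p_{(0)}^{(0)} = 0$) and $B_{K+1} = (p_{(K)}^{(0)}, 1]$. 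Couple $Y(1)$ with $U_0$ by specifying, on each $B_j$, the conditional CDF of $Y(1)$ as
\[
G_j(y) := \frac{F_{(j)}(y) - F_{(j-1)}(y)}{p_{(j)}^{(0)} - p_{(j-1)}^{(0)}} \text{ for } j \leq K, \qquad G_{K+1}(y) := \frac{\mu(y) - F_{(K)}(y)}{1 - p_{(K)}^{(0)}},
\]
where $F_z(y) := P(Y \leq y, D = 1 \mid Z = z)$, which equals $P^*(Y(1) \leq y, D(z,0)=1)$ because $Y(0)=0$, and $\mu$ is the marginal CDF of $Y(1)$ pinned down by $\{\pi_z\}_z$. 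Finally, conditionally on $(U_0, Y(1))$ and independently across $z$, set $D(z, 1) = 1$ whenever $D(z, 0) = 1$, and otherwise draw $D(z, 1) = 1$ with probability $\pi_z(0, Y(1), 0, 1) / [\pi_z(0, Y(1), 0, 0) + \pi_z(0, Y(1), 0, 1)]$. Let $Y(0) = 0$ and draw $Z$ independently with its original marginal. A routine check then confirms that the marginals of $(Y(1), D(z, 0), D(z, 1))$ under $\tilde P^*$ coincide with $\pi_z$, so $\tilde P^*$ generates $P$ and satisfies IV validity, IV monotonicity, policy monotonicity, $Y(0) = 0$, and the $\alpha$-quota.

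The main obstacle is verifying that each $G_j$ is a valid CDF---non-decreasing in $y$ with values in $[0,1]$. For $j \leq K$ this is equivalent to $F_{(j)}(y) - F_{(j-1)}(y) \in [0, p_{(j)}^{(0)} - p_{(j-1)}^{(0)}]$ for every $y$, i.e., $z \mapsto F_z(y)$ is non-decreasing in $p_z^{(0)}$ and $1$-Lipschitz. For $G_{K+1}$ the bound $\mu(y) - F_{(K)}(y) \in [0, 1 - p_{(K)}^{(0)}]$ is automatic, since $\mu(y) - F_{(K)}(y) = P^*(Y(1) \leq y, D((K), 0) = 0) \leq 1 - p_{(K)}^{(0)}$. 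The nontrivial property for $j \leq K$ is exactly what the nonemptiness hypothesis delivers: for any $P^\dagger \in \mathcal{P}^*_{\alpha, Mon}$ representing $P$, IV monotonicity implies $F_z(y) = P^\dagger(Y(1) \leq y, U_0^\dagger \leq p_z^{(0)})$ for some $U_0^\dagger \sim \mathrm{Uniform}(0,1)$, which is automatically monotone and $1$-Lipschitz in $p_z^{(0)}$. Hence the nonemptiness assumption imposes on the observable data exactly the structure needed for the latent-variable coupling to succeed, completing the plan.
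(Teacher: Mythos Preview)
Your argument is correct and takes a genuinely different route from the paper. The paper reduces the problem to bounding the conditional CDF of $Y(1)\mid D(Z,1)>D(Z,0)$: it derives pointwise upper and lower CDF bounds valid for all $P^*\in\mathcal{P}^*_\alpha$ (\Cref{lemma:policy_complier_bounds}), then explicitly constructs two distributions in $\mathcal{P}^*_{\alpha,Mon}$ that \emph{attain} these bounds (\Cref{lemma:upper_lower_bounds_EY_D_1_tight}), and concludes via first-order stochastic dominance and convexity that the two identified sets for $\theta$ are the same interval. Your approach is more structural: you show that for \emph{every} $P^*\in\mathcal{P}^*_\alpha$ there is a $\tilde P^*\in\mathcal{P}^*_{\alpha,Mon}$ with identical marginals $\{\pi_z\}_z$ (hence identical $\theta$), by coupling the observable subdistributions $F_z$ through a single uniform latent index and recognizing that nonemptiness of $\mathcal{P}^*_I(P;\mathcal{P}^*_{\alpha,Mon})$ is precisely the Kitagawa-type testable implication that makes each $G_j$ a valid CDF. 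Your route is cleaner in that it bypasses the explicit endpoint characterization and the stochastic-dominance step, and it proves the slightly stronger statement that the identified sets of \emph{marginals} coincide, not just the identified sets for $\theta$. The paper's route, on the other hand, yields the closed-form bounds on $\theta$ as a byproduct, which is independently useful for interpretation and computation. One small point of care in your write-up: the latent-index representation under $P^\dagger$ is what guarantees $y\mapsto F_{(j)}(y)-F_{(j-1)}(y)$ is nondecreasing (since it equals $P^\dagger(Y(1)\le y,\,U_0^\dagger\in(p_{(j-1)}^{(0)},p_{(j)}^{(0)}])$), not merely the range and Lipschitz conditions you highlight; you have this, but it deserves one explicit sentence.
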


Although condition (i) focuses on a policy
that imposes an average quota, it is possible to extend the proposition to the
case in which there is a unit-specific quota $\alpha_{z,1}$, at the cost of
additional notation.

We next consider the setting with a sufficiently strong policy encouragement, in the sense that the policy satisfies
the following condition:
\begin{enumerate}
\item[(iii)]\label{item:iv_strong_encouragement}\emph{Sufficiently strong
    encouragement:} $P^*(D(z,1) =1 \mid D(z_{\max},0) =1)=1$ for all $z$.
\end{enumerate}
Condition (iii) imposes that all defendants who would be released by the most lenient judge under
the status quo would be released with probability 1 under the counterfactual.
This is trivially satisfied under a universal release program ($D(z,1)=1$), but
is somewhat weaker. For example, consider a program that releases all defendants
except those predicted to be high-risk. Then (iii) will be satisfied if
the most lenient judge is not currently releasing any such high-risk defendants
under the status quo. Our next result shows that monotonicity also has no
identifying if we replace condition (ii) defined above with condition (iii) in
the statement of \Cref{prop:monotonicity_is_not_helpful}.

\begin{prop}[No identifying power of monotonicity with strong
  encouragement]\label{prop:monotonicity_is_not_helpful_strong_encouragement}
  The result in \Cref{prop:monotonicity_is_not_helpful} holds if condition (ii)
  is replaced with condition (iii) in the definitions of $\mathcal{P}^*_{\alpha}$
  and $\mathcal{P}^*_{\alpha, Mon}$.
\end{prop}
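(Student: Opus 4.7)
The plan is to adapt the proof of \Cref{prop:monotonicity_is_not_helpful}, swapping the role of the known-outcome condition (ii) for the strong-encouragement condition (iii). The containment $\Theta_I(P; \mathcal{P}^*_{\alpha, Mon}) \subseteq \Theta_I(P; \mathcal{P}^*_\alpha)$ is immediate since imposing monotonicity shrinks the set of admissible primitives, so the work lies in the reverse: given any $P^* \in \mathcal{P}^*_\alpha$ generating $P$ with $\theta = E_{P^*}[Y(D(Z,1))]$, I would construct $P^{*\prime} \in \mathcal{P}^*_{\alpha, Mon}$ that also generates $P$ and attains $\theta$. The central role of condition (iii) is that on the event $\{D(z_{\max}, 0) = 1\}$, of probability $p_{z_{\max}} := P(D = 1 \mid Z = z_{\max})$, (iii) forces $D(z, 1) = 1$ for every $z$, so $Y(D(z, 1)) = Y(1)$, whose conditional mean equals $\bar{y}_1^{z_{\max}} := E[Y \mid D = 1, Z = z_{\max}]$ by IV validity. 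Hence for \emph{any} $P^* \in \mathcal{P}^*_\alpha$,
\[
\theta = p_{z_{\max}} \bar{y}_1^{z_{\max}} + (1 - p_{z_{\max}})\, T, \qquad T := \sum_z P(Z = z)\, E_{P^*}\bigl[Y(D(z,1)) \mid D(z_{\max}, 0) = 0\bigr],
\]
so matching $\theta$ under $P^{*\prime}$ reduces to matching the never-taker contribution $T$.

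Since $\Theta_I(P; \mathcal{P}^*_{\alpha, Mon})$ is nonempty by hypothesis, fix $\tilde{P}^* \in \mathcal{P}^*_{\alpha, Mon}$ generating $P$ and invoke the Vytlacil representation: there exists $U_0 \sim \mathrm{Unif}[0,1]$ with $D(z, 0) = \1{U_0 \leq p_z}$ such that $\tilde{P}^*(Y(1) \mid U_0 \leq p_z) = P(Y \mid D = 1, Z = z)$ and $\tilde{P}^*(Y(0) \mid U_0 > p_z) = P(Y \mid D = 0, Z = z)$ for all $z$. Define $P^{*\prime}$ to agree with $\tilde{P}^*$ on the joint law restricted to the always-taker slice $\{U_0 \leq p_{z_{\max}}\}$, with $D^\prime(z, 1) = 1$ there as required by (iii). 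On the never-taker slice $\{U_0 > p_{z_{\max}}\}$, require the marginal $Y^\prime(0) \sim P(Y \mid D = 0, Z = z_{\max})$ but otherwise freely specify the joint of $(Y^\prime(0), Y^\prime(1), \{D^\prime(z, 1)\}_z)$ subject to the quota constraint $\sum_z P(Z = z)\, P^{*\prime}(D^\prime(z, 1) = 1 \mid U_0 > p_{z_{\max}}) = q$, where $q := (\alpha - p_{z_{\max}})/(1 - p_{z_{\max}})$.

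The crux is showing this free choice can be made so that $T^{P^{*\prime}} = T$. Under $P^{*\prime}$, the never-taker contribution is determined by only three marginal constraints: (a) $Y^\prime(0) \mid \mathrm{nt} \sim P(Y \mid D = 0, Z = z_{\max})$, (b) the $P(Z=z)$-weighted average of $P^{*\prime}(D^\prime(z, 1) = 1 \mid \mathrm{nt})$ equals $q$, and (c) $Y \in \mathcal{Y}$. Constraints (a)--(c) must also hold under $P^*$, so $T$ produced by $P^*$ lies in $P^{*\prime}$'s achievable range. A concrete realization is to take $D^\prime(z, 1) = D^\prime_{\mathrm{nt}}$, a Bernoulli$(q)$ variable not depending on $z$, and calibrate $E[Y^\prime(1) \mid D^\prime_{\mathrm{nt}} = 1, \mathrm{nt}]$ together with a possible correlation between $D^\prime_{\mathrm{nt}}$ and $Y^\prime(0)$ to attain any target value in that range.

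The main obstacle is ruling out the possibility that observables involving judges $z \neq z_{\max}$ impose additional constraints on $T^{P^*}$ that $P^{*\prime}$'s simpler never-taker joint cannot replicate. This is resolved by inheriting the always-taker structure from $\tilde{P}^*$: under $P^{*\prime}$, every observable constraint on $Y^\prime(1)$ via judge $z$ lives on $\{U_0 \leq p_z\} \subseteq \{U_0 \leq p_{z_{\max}}\}$, which $\tilde{P}^*$ already matches, and preserving the marginal of $Y^\prime(0)$ on never-takers ensures the analogous observable constraints on $Y^\prime(0)$ are automatically satisfied. Routine verification then confirms that $P^{*\prime}$ satisfies IV validity, IV monotonicity, policy monotonicity, and conditions (i) and (iii), and the decomposition above yields $\theta^{P^{*\prime}} = \theta$, completing the proof.
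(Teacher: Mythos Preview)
Your approach is correct and essentially the same as the paper's: both condition on $D(z_{\max},0)$, use condition~(iii) to point-identify the contribution from $\{D(z_{\max},0)=1\}$, and then exploit that under IV monotonicity the never-taker slice $\{D(z_{\max},0)=0\}$ places no constraints on $Y(1)$ or $D(\cdot,1)$ beyond the $Y(0)$ marginal and the quota. The paper's version is slightly cleaner in that it argues directly that the identified set for the conditional law of $(Y(0),Y(1),D(Z,1))\mid D(z_{\max},0)=0$ is the same with or without monotonicity, rather than calibrating moments to match $T$; you could streamline your construction by simply copying the $P^*$-conditional law of $(Y(0),Y(1),D(Z,1))\mid D(z_{\max},0)=0$ onto $(Y'(0),Y'(1),D'_{\mathrm{nt}})$ on the never-taker slice, which immediately gives $T^{P^{*\prime}}=T$.
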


\begin{remark}[Interaction of IV monotonicity with other constraints]
  \Cref{prop:monotonicity_is_not_helpful,prop:monotonicity_is_not_helpful_strong_encouragement}
  provide conditions under which IV monotonicity \emph{alone} does not help
  tighten the identified set for the average counterfactual outcome. As shown by
  \textcite{machado_instrumental_2019}, if we impose additional restrictions
  ---such as assume that all units share the same sign of the treatment effect
  $Y(1)-Y(0)$---then adding monotonicity can help to further tighten the
  identified set.
\end{remark}

\begin{remark}[Weaker monotonicity conditions]
  \textcite{fll23} propose a weakening of IV monotonicity called \emph{average monotonicity}.
  \Cref{prop:monotonicity_is_not_helpful,prop:monotonicity_is_not_helpful_strong_encouragement}
  provide conditions under which imposing IV monotonicity does not help tighten
  the identified set. It follows immediately that under the same conditions,
  imposing the weaker notion of average monotonicity also does not help tighten
  the identified set.
\end{remark}

\begin{remark}[Relationship to literature]
  \Cref{prop:monotonicity_is_not_helpful,prop:monotonicity_is_not_helpful_strong_encouragement}
  can be viewed as an extension of some important existing results in the
  literature which show that IV monotonicity has no identifying power
  for the average treatment effect parameter or the marginal distributions of potential outcomes
  \parencite{BaPe97,kitagawa21,bai_identifying_2024, RiRo14}. An important difference
  vis-\`a-vis our result is that we focus on more general policy
  counterfactuals.
\end{remark}

\begin{remark}[Can monotonicity help?]
  In \Cref{sec:monotonicity_helps}, we give an example of a policy and a data distribution such that IV
  monotonicity sharpens the identified set for $\theta$. This point relates to an observation in
  \textcite{kamat_identifying_2019}, showing that the identified set for the
  joint distribution of $(Y(1), Y(0))$ shrinks under IV monotonicity even though the identified sets for the marginals of $Y(1)$ and $Y(0)$ do not. For
  policies that do not satisfy conditions (ii) or (iii) above,
  IV monotonicity can potentially help by restricting the possible couplings of the
  marginal distributions of $Y(0)$ and $Y(1)$, even though it has no identifying
  power for the marginal distributions.
\end{remark}

\section{What assumptions help tighten the identified set?}\label{sec:other_restrictions_help}

The previous section outlined sufficient conditions under which IV monotonicity alone does not help to tighten the identified set. We now explore other assumptions that can potentially help to tighten it. We begin by showing that policy invariance can be helpful in some settings where IV monotonicity is not. Policy invariance may be too strong an assumption in practice, however, and we therefore introduce relaxations of policy invariance that may be more plausible, but nevertheless tighten the identified set. We then briefly discuss other economically motivated restrictions that may be reasonable and further help tighten the identified set.

\subsection{Policy invariance and its relaxations}

We begin by providing an intuitive example where the conditions of
\Cref{prop:monotonicity_is_not_helpful} are satisfied, so that IV monotonicity
alone does not help tighten the identified set, but policy invariance is
helpful.

\paragraph{Example: quota policy.} Suppose that $Y(0)=0$ (e.g., you cannot
commit a crime while in jail). Consider a quota policy that requires all judges
to increase their release rate to match that of the most lenient judge under the
status quo, so that $E[D(z,1)] = E[D(z_{\max},0)]$ for all $z$, where again
$z_{\max}$ denotes the identity of the most lenient judge. It seems reasonable
to impose policy monotonicity in this setting ($D(z,1) \geq D(z,0)$), which
states that defendants who would be released without the quota would also be
released with the quota.

Strengthening policy monotonicity by imposing policy invariance leads to point
identification. Intuitively, under policy invariance, all judges have the same
ranking of defendants under both the status quo and counterfactual, but they
disagree on the cutoff for when a defendant should be released. The quota policy
forces them to use the same cutoff as the most lenient judge, so that the
counterfactual outcomes for all judges will match that of the most lenient judge
under the status quo:
$E[Y(D(Z,1))] = E[Y(D(z_{\max},0))] = E[Y \mid Z =z_{\max}]$.

By contrast, \Cref{prop:monotonicity_is_not_helpful} implies that imposing
IV monotonicity in addition to policy monotonicity does not help tighten
the identified set. In fact, the identified set may be trivial in the sense that
it implies trivial bounds for the outcomes of policy compliers,
$E[Y(1)\mid D(Z,1)>D(Z,0)]$. Consider, for example, a setting where there are
two judges, who respectively release 10\% and 20\% of defendants. Under IV
monotonicity alone, the first judge could choose to match the quota by marginally releasing
only status quo \emph{never-takers}---those not released by either judge under
the status quo. The data does not restrict $Y(1)$ for these individuals, so we
only have trivial bounds $[0,1]$ on their treated outcomes,
implying trivial bounds for the policy complier treatment effect. Correspondingly, by
\cref{eq:complier_effect}, the identified set for $\theta - E[Y]$ under IV monotonicity is
given by multiplying the unit interval by the mass of policy compliers. Thus, policy
invariance---which restricts agreement under judges under both the
counterfactual and the status quo---has strong identifying power, leading to
point identification. By contrast, IV monotonicity---which only restricts agreement
under the status quo---does not help tighten the trivial bounds on the
identified set.

Of course, policy invariance may be implausibly strong in
many applied settings. Indeed, since policy invariance is \emph{stronger} than
IV monotonicity, if researchers doubt the validity of IV monotonicity (or it is
rejected by the data) then they must necessarily doubt the validity of policy
invariance as well. Nevertheless, the fact that there are relevant settings
where policy invariance can help tighten the identified set, but IV monotonicity
cannot, suggests that considering \emph{relaxations} of policy invariance may be
a more natural starting point than considering relaxations of IV monotonicity.

\paragraph{Relaxing policy invariance with disagreement bounds.} To that end, we
now introduce a relaxation of policy invariance that may be more plausible but
nevertheless help to tighten the identified set. At a high level, policy
invariance requires that judges perfectly agree on the ranking of defendants; we
consider a relaxation that bounds the extent to which they can disagree. More
concretely, consider two distinct judges $z$ and $z'$, and two policies
$a, a'\in \{0,1\}$. Without loss of generality, suppose that
$E[D(z, a)] \geq E[D(z', a')]$ so that judge $z$ releases more people under policy
$a$ than $z'$ does under $a'$. Under policy invariance, judge $z$ under policy
$a$ must release \emph{all} defendants released by $z'$ under policy $a'$, so
that $P^*(D(z, a)=1 \mid D(z', a')=1) = 1$. Such perfect agreement is likely to be
too strong in many settings. Nevertheless, it also may be unreasonable to expect
that the two judges perfectly \emph{disagree}, so that judge $z$ under policy
$a$ releases \emph{none} of the defendants released by $z'$ under $a'$. A natural middle-ground is to impose
\begin{equation}\label{eqn:disagreement-bounds}
P^*(D(z, a)=0 \mid D(z', a')=1)\leq \delta_{z, z', a, a'},
\end{equation} so that judge $z$ under policy $a$ disagrees with no more than $\delta_{z, z', a, a'}$ fraction of defendants released by judge $z'$ under policy $a'$. This nests policy invariance as the special case with $\delta_{z, z', a, a'} = 0$, but allows for non-trivial disagreement for $\delta_{z, z', a, a'} \in (0,1)$.\footnote{\textcite{IcTa00} assume that $D(z,0) = D(z',1)$ (a.s.) for a known pair (or set of pairs) $z,z'$. This can be formalized by imposing \eqref{eqn:disagreement-bounds} with $\delta_{z,z',0,1} =0$ and  $\delta_{z',z,1,0} = 0$.}

\paragraph{Calculating the identified set under disagreement bounds.} The
disagreement bound in \cref{eqn:disagreement-bounds} imposes restrictions on the
joint distribution of $(D(z, a), D(z', a'))$. At first glance, such restrictions
appear to be difficult to incorporate into the linear programming approach for
calculating the identified set described in \Cref{sec:lp_implementation}, which
only optimizes over the marginals $\{\pi_{z}(\cdot)\}$, and does not enumerate
the joint distribution of $(D(z, a), D(z', a'))$. It turns out, however, that given a set of marginals $\{\pi_{z}(\cdot)\}$, there is a
simple formula for the minimal value of $P^{*}(D(z, a)=0 \mid D(z', a')=1)$
consistent with a joint distribution of primitives that matches these marginals
and satisfies policy monotonicity. This allows us to still tractably compute the
identified set under policy monotonicity by optimizing only over the marginals $\{\pi_{z}(\cdot)\}$ even
while imposing disagreement bounds such as \cref{eqn:disagreement-bounds}. This
is formalized in the following results, which give analogs to
\Cref{prop:marginals_one_z} and \Cref{cor:optmize_over_pi} that allow for
imposing disagreement bounds.

\begin{prop}\label{prop:coupling_result_with_disagreement} Suppose that
  $\mathcal{Y}$ is finite. Fix disagreement bounds
  $\delta_{z, z', a, a'} \in [0,1]$ for all $z, z', a, a'$. Let
  $\mathcal{P}^*_{\textnormal{DB}}$ be the subset of distributions in
  $\mathcal{P}^{*}_{\textnormal{valid}}$ satisfying
  \Cref{assumption:policy_monotonicity} and the disagreement bounds in
  \cref{eqn:disagreement-bounds} for all $z, z', a, a'$. Consider a collection
  $\{\pi_{z}(\cdot)\}_{z\in\mathcal{Z}}$ of marginals. There exists a joint distribution
  $P^* \in \mathcal{P}^*_{I}(P;\mathcal{P}^{*}_{\textnormal{DB}})$ with marginals $\{\pi_{z}(\cdot) \}_{z \in \mathcal{Z}}$ if and only if the
  marginals $\{\pi_{z}(\cdot) \}_{z \in \mathcal{Z}}$ are consistent with \Cref{assumption:policy_monotonicity} and satisfy conditions~\ref{item:match_data}--\ref{item:valid_pmf} in
  \Cref{prop:marginals_one_z} as well as
  \begin{itemize}
      \item[4.]\label{item:disagreement-bound} Disagreement bound: For all $z, z', a, a'$,
      \begin{multline}\label{eqn:disagreement-bound-inequality}
        \sum_{y_1, y_0} \min\{\pi(y_0, y_1, D(z, a) = 1),
          \pi(y_0, y_1, D(z', a') = 1) \} \\
        \geq (1-\delta_{z, z', a, a'}) \cdot \pi(D(z', a')=1),
      \end{multline}
      where for any $(y_0,y_1) \in \mathcal{Y}^2$ and $(a, z)\in \{0,1\} \times \mathcal{Z}$,
      \begin{align*}
      \pi(y_0, y_1, D(z, a)=1) &:= \sum_{d_a=1, d_{1-a} \in \{0,1\}} \pi_{z}(y_0,y_1, d_0, d_1), \\
      \pi(D(z, a)=1) &:= \sum_{d_{a}=1, d_{1-a}\in \{0,1\}} \sum_{(y_0,y_1) \in \mathcal{Y}^2} \pi_{z}(y_0, y_1, d_0, d_1).
 \end{align*}
  \end{itemize}
\end{prop}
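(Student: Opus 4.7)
The plan is to follow the two-direction structure of \Cref{prop:marginals_one_z}: the necessity of conditions~\ref{item:match_data}--\ref{item:valid_pmf} is inherited from that proposition, so the new content is to establish both the necessity and the sufficiency of the disagreement bound inequality~\eqref{eqn:disagreement-bound-inequality}. Policy monotonicity of the $\pi_{z}$ is assumed on the marginal side and needs to be preserved on the $P^{*}$ side.

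For \emph{necessity}, I fix any $P^* \in \mathcal{P}^*_I(P; \mathcal{P}^*_{\textnormal{DB}})$ realizing the marginals and apply the Fréchet upper bound cell-by-cell: for each $(y_0, y_1)$,
\begin{equation*}
P^*\!\left(Y(0)=y_0,\, Y(1)=y_1,\, D(z, a)=1,\, D(z', a')=1\right) \leq \min\!\left\{\pi(y_0, y_1, D(z, a)=1),\, \pi(y_0, y_1, D(z', a')=1)\right\}.
\end{equation*}
Summing over $(y_0, y_1)$ bounds $P^*(D(z, a)=1, D(z', a')=1)$ from above by the left-hand side of~\eqref{eqn:disagreement-bound-inequality}, while the disagreement bound $P^*(D(z, a)=0 \mid D(z', a')=1) \leq \delta_{z, z', a, a'}$ rearranges to $P^*(D(z, a)=1, D(z', a')=1) \geq (1-\delta_{z, z', a, a'})\,\pi(D(z', a')=1)$. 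Chaining the two inequalities yields~\eqref{eqn:disagreement-bound-inequality}.

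For \emph{sufficiency}, I construct $P^*$ explicitly via a common-uniform (``comonotone'') coupling. Draw $(Y(0), Y(1))$ from the shared $(y_0, y_1)$-marginal $\pi$ (well-defined by condition~\ref{item:same_marginals}); draw an independent $U$ uniform on $[0,1]$; and set $D(z, a) = \1{U \leq p_{z, a}(y_0, y_1)}$ for every $(z, a)$, where $p_{z, a}(y_0, y_1) := \pi(y_0, y_1, D(z, a)=1)/\pi(y_0, y_1)$ (assigned arbitrarily on the null set where $\pi(y_0, y_1)=0$). Draw $Z$ independently from its observed marginal. Policy monotonicity of $\pi_z$ gives $p_{z, 0}(y_0, y_1) \leq p_{z, 1}(y_0, y_1)$, which makes $D(z, 1) \geq D(z, 0)$ in $P^*$; a short computation then shows that the induced marginal of $(Y(0), Y(1), D(z, 0), D(z, 1))$ exactly equals $\pi_z$, so condition~\ref{item:match_data} delivers $(Y(D(Z, 0)), D(Z, 0), Z) \sim P$, and the independent sampling of $Z$ delivers IV validity. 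For the disagreement bound, whenever $p_{z, a}(y_0, y_1) \leq p_{z', a'}(y_0, y_1)$ the common-$U$ rule gives $P^*(D(z, a)=1, D(z', a')=1 \mid y_0, y_1) = \min\{p_{z, a}, p_{z', a'}\}$, so integrating against $\pi(y_0, y_1)$ shows that $P^*(D(z, a)=1, D(z', a')=1)$ equals the left-hand side of~\eqref{eqn:disagreement-bound-inequality}; condition~4 then yields $P^*(D(z, a)=0 \mid D(z', a')=1) \leq \delta_{z, z', a, a'}$.

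The main obstacle is the sufficiency step: the disagreement bounds constrain each pair $(z, a, z', a')$ separately, yet the proof requires a \emph{single} joint distribution over all $D(\cdot,\cdot)$ that satisfies all of them simultaneously. The common-uniform coupling resolves this because it attains the maximum value of $P^*(D(z, a)=1, D(z', a')=1)$ compatible with the marginals \emph{for every pair at once}---the defining property of a comonotone coupling. Policy monotonicity plays the essential supporting role of ensuring that the single-threshold rule $D(z, a) = \1{U \leq p_{z, a}}$ reproduces not merely the one-dimensional marginal of each $D(z, a)$ but also the joint marginal $\pi_z$ of $(D(z, 0), D(z, 1))$ for each judge; without policy monotonicity the ordering of $p_{z, 0}$ and $p_{z, 1}$ would not match the structure of $\pi_z$ and the construction would fail.
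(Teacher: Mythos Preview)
Your proposal is correct and takes essentially the same approach as the paper. Both directions match: necessity via the cellwise Fr\'echet--Hoeffding upper bound, and sufficiency via a comonotone threshold-crossing coupling conditional on $(Y(0),Y(1))$; your threshold $p_{z,a}(y_0,y_1)=\pi(y_0,y_1,D(z,a)=1)/\pi(y_0,y_1)$ coincides (under policy monotonicity of $\pi_z$) with the paper's $\alpha_z(a,y_0,y_1)=[\pi_z(y_0,y_1,1,1)+a\,\pi_z(y_0,y_1,0,1)]/\pi(y_0,y_1)$, and your observation that the comonotone coupling simultaneously attains the Fr\'echet upper bound for \emph{every} pair is exactly the point the paper exploits.
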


\begin{corollary}\label{cor:optmize_over_pi2}
  Suppose that $\mathcal{Y}$ is finite and let $\mathcal{P}^* = \mathcal{P}_{DB}^* \cap \{P^{*}\colon
  \{\pi_{z}(\cdot)\}_{z \in \mathcal{Z}}\in\mathcal{R}^{*}\}$ for some convex $\mathcal{R}^*$, where $\mathcal{P}_{DB}^*$ as defined in \Cref{prop:coupling_result_with_disagreement}. Then
  ${\Theta}_{I}(P;\mathcal{P}^{*})$ is given by an interval, with the upper
  endpoint given by the optimization
  \begin{equation}\label{eq:theta_value_w_disagreement}
    \sup_{\{\pi_{z} \}_{z \in \mathcal{Z}} \in \mathcal{R}^*} \sum_{z \in \mathcal{Z}} P(Z=z)
    \sum_{y_0,y_{1}, d_0, d_1 \in \mathcal{Y}^2 \times \{0,1\}^2} \left(d_1
      y_1 + (1-d_1)y_0 \right) \cdot \pi_{z}(y_0, y_1, d_0, d_1)
  \end{equation}
  subject to the
  constraints~\ref{item:match_data}--\ref{item:disagreement-bound} given
  in~\Cref{prop:marginals_one_z,prop:coupling_result_with_disagreement} and
  subject to policy monotonicity ($\pi_z(y_0, y_1, 1, 0) = 0$ for all
  $z,y_0,y_1$); the lower endpoint is given by an analogous minimization.
\end{corollary}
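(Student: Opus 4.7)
The plan is to deduce this corollary from \Cref{prop:coupling_result_with_disagreement} in essentially the same way that \Cref{cor:optmize_over_pi} follows from \Cref{prop:marginals_one_z}. I will proceed in three steps: rewrite the objective as a linear functional of the marginals, translate feasibility for $P^*$ into feasibility for $\{\pi_z(\cdot)\}$, and verify convexity of the resulting feasible set so that the image under a linear objective is an interval.

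First, observe that for any $P^{*}$,
\[
  E_{P^{*}}[Y(D(Z,1))] = \sum_{z\in\mathcal{Z}} P(Z=z) \sum_{(y_0,y_1,d_0,d_1) \in \mathcal{Y}^2\times\{0,1\}^2} \bigl(d_1 y_1 + (1-d_1) y_0\bigr)\, \pi_{z}(y_0,y_1,d_0,d_1),
\]
so $\theta$ depends on $P^{*}$ only through $\{\pi_z(\cdot)\}_z$ (and the known marginal of $Z$), exactly as in the proof of \Cref{cor:optmize_over_pi}. Thus $\Theta_I(P;\mathcal{P}^*)$ coincides with the image of the objective in \eqref{eq:theta_value_w_disagreement} evaluated on the set of feasible $\{\pi_z(\cdot)\}$.

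Second, I translate the feasibility condition. By the assumption $\mathcal{P}^{*} = \mathcal{P}^{*}_{\textnormal{DB}}\cap\{P^*\colon\{\pi_z(\cdot)\}\in\mathcal{R}^{*}\}$, a collection $\{\pi_z(\cdot)\}_z$ is the marginal of some $P^* \in \mathcal{P}_I^*(P;\mathcal{P}^*)$ if and only if (a) $\{\pi_z(\cdot)\}\in\mathcal{R}^*$, and (b) there exists $P^*\in\mathcal{P}_I^*(P;\mathcal{P}_{\textnormal{DB}}^*)$ with those marginals. Because \Cref{assumption:policy_monotonicity} is itself a marginal restriction (equivalently, $\pi_z(y_0,y_1,1,0)=0$ for all $y_0,y_1,z$), \Cref{prop:coupling_result_with_disagreement} characterizes (b) as the requirement that conditions~\ref{item:match_data}--\ref{item:valid_pmf} of \Cref{prop:marginals_one_z}, the policy-monotonicity restriction, and condition~4 of \Cref{prop:coupling_result_with_disagreement} all hold. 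Taken together, the feasible set for $\{\pi_z\}$ is exactly the one described in the statement of the corollary.

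Third, I show the feasible set is convex so that optimizing the linear objective produces an interval. Conditions~\ref{item:match_data}--\ref{item:valid_pmf} and policy monotonicity are linear (equalities and nonnegativity) in $\pi$, and $\mathcal{R}^*$ is convex by hypothesis. The only nontrivial step is the disagreement constraint \eqref{eqn:disagreement-bound-inequality}: the right-hand side $(1-\delta_{z,z',a,a'})\cdot\pi(D(z',a')=1)$ is linear in $\pi$, while the left-hand side is a sum of terms of the form $\min\{L_1(\pi),L_2(\pi)\}$ for linear functionals $L_1,L_2$, hence concave in $\pi$. Therefore $\{\pi\colon \text{LHS}\geq \text{RHS}\}$ is convex, and so is its intersection with all the other constraints. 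Because the objective in \eqref{eq:theta_value_w_disagreement} is linear and the feasible set is convex (and nonempty whenever $\Theta_I$ is), its image is an interval, with upper and lower endpoints attained by the sup and inf described. The main technical point is the convexity argument for condition~4; once that is in hand, the rest is a bookkeeping exercise identical to the proof of \Cref{cor:optmize_over_pi}.
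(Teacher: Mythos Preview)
Your proposal is correct and follows the same route as the paper, which simply notes that the argument is ``completely analogous to that for \Cref{cor:optmize_over_pi}, except appealing to \Cref{prop:coupling_result_with_disagreement} in place of \Cref{prop:marginals_one_z}.'' Your explicit verification that condition~4 defines a convex constraint (concave left-hand side minus linear right-hand side) is a useful detail the paper leaves implicit, but the overall strategy is the same.
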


The intuition for \Cref{prop:coupling_result_with_disagreement} is as follows. Note that \cref{eqn:disagreement-bounds} can equivalently be written as a lower bound on the joint probability $P^*(D(z, a) = 1, D(z', a') = 1)$,
\begin{equation}
 (1-\delta_{z, z', a, a'}) P^*(D(z', a')=1)\leq P^*(D(z, a) = 1, D(z', a') = 1). \label{eqn:reformulated-disagreement}
\end{equation}
By the law of total probability, the right-hand side equals
\begin{multline*}
 \sum_{y_1,y_0} P^*(Y(0)=y_0,Y(1) = y_1, D(z, a) = 1, D(z', a') = 1)\leq \\
 \sum_{y_1,y_0} \min\{P^*(Y(0)=y_0, Y(1) = y_1, D(z, a) = 1), P^*(Y(0)=y_0,Y(1) = y_1, D(z', a') = 1) \},
\end{multline*}
where the inequality uses the fact that $P(A \cap B) \leq \min\{ P(A), P(B) \}$. \Cref{eqn:disagreement-bound-inequality} follows from replacing
$P^*(D(z, a) = 1, D(z', a') = 1)$ in \eqref{eqn:reformulated-disagreement} with the upper bound given in the previous
display. This turns out not to come at the cost of sharpness, however, because
given a set of marginals for $(Y(1), Y(0), D(z, \cdot))$, there always
exists a coupling such that the inequality in the previous display holds with
equality. In particular, the proof of
\Cref{prop:coupling_result_with_disagreement} shows that the upper bound is
achieved under a latent threshold crossing model wherein judges agree on the
rankings of defendants \emph{conditional} on their potential outcomes, so that
$D(z, a) = \1{\alpha_{z, y_0, y_1} + \beta_{z, y_0, y_1} a \geq V_{y_0, y_{1}}}$ for
a latent index $V_{y_0, y_1}$ that depends only on the potential outcomes but not
on $z$.\footnote{If one does not impose policy monotonicity, then \eqref{eqn:disagreement-bound-inequality} is still implied by \eqref{eqn:disagreement-bounds}, and so one can obtain an outer set for $\Theta_I$ by running the analog to the optimization in \Cref{cor:optmize_over_pi2} without the policy monotonicity constraint. Without policy monotonicity, however, the interval obtained may no longer be sharp.}

We note that it is straightforward to
impose~\eqref{eqn:disagreement-bound-inequality} in a linear program by
introducing auxiliary parameters corresponding to the $\min\{\cdot, \cdot\}$
terms. Specifically, \cref{eqn:disagreement-bound-inequality} holds if and only
if there exist constants $\eta^{y_0,y_1}_{z, z'} \geq 0$ that are less
than the arguments of the $\min\{\cdot, \cdot\}$,
\begin{align}\label{eq:extra_restriction_dp_eta1}
  \eta^{y_0, y_1}_{z, z'} &\leq \pi(y_0,y_1, D(z, a) = 1), \\
  \eta^{y_0, y_1}_{z, z'} &\leq \pi(y_0,y_1, D(z', a') = 1)\label{eq:extra_restriction_dp_eta2}
\end{align}
such that
\begin{equation}\label{eq:extra_restriction_dp}
  \sum_{y_1, y_0} \eta^{y_1, y_0}_{z, z'} \geq (1-\delta_{z, z'}) \pi(D(z', a') =1).
\end{equation}
The above formulation is linear in both the marginals $\{\pi_{z}(\cdot)\}$ and
the auxiliary parameter $\eta$. Therefore, we can implement the disagreement
bound in \cref{eqn:disagreement-bounds} by adding the constraints in
\cref{eq:extra_restriction_dp_eta1,eq:extra_restriction_dp_eta2,eq:extra_restriction_dp}
to the linear programming formulation in \Cref{sec:lp_implementation}, and
augmenting the parameter vector $\pi$ by $\eta$.\footnote{We note that if one bounds disagreement rates among all pairs $z,z'$, this introduces $O(K^2)$ auxilliary parameters, so the dimension of the LP scales quadratically in $K$ rather than linearly.}

\paragraph{Bounds on average disagreements.} The results above can easily be
adapted if we wish to weaken \cref{eqn:disagreement-bounds} by only
bounding the \emph{average} pairwise disagreement between groups of judges. For
concreteness, consider the quota policy from our empirical applications in
\Cref{sec:empirical-apps}, which asks the bottom 90\% of judges to match the
release rate of the most lenient decile. Let $\mathcal{Q}$ denote the set of
judges in the bottom nine deciles of leniency, who are subject to the quota, and
let $\mathcal{Q}^c$ denote the top decile of judges, who are not subject to the
quota. To bound the average disagreement between the two groups of judges under
the counterfactual, we can impose the average disagreement probability bound
\begin{equation}\label{eq:dp_bound}
P^*(D(Z_{\mathcal{Q}},1)=0\mid D({Z}_{\mathcal{Q}^{c}},1)=1)\leq \overbar{\DP},
\end{equation}
where $Z_{\mathcal{Q}}$ is a randomly picked judge from the group $\mathcal{Q}$,
and ${Z}_{\mathcal{Q}^{c}}$ is defined similarly. In other words, \eqref{eq:dp_bound} places an upper-bound $\overbar{\DP}$ on the average probability
that a judge in the quota group disagrees with the release decision of a judge
in the top decile. Similarly to the pairwise disagreement bounds, this average
disagreement probability bound can be implemented as a linear program without losing
sharpness of the resulting identified set. In particular, we can introduce auxiliary parameters $\eta_{z, z'}^{y_{1}, y_{0}}$, for each
$z\in\mathcal{Q}$ and $z'\in\mathcal{Q}^{c}$, and impose the linear
constraints \cref{eq:extra_restriction_dp_eta1,eq:extra_restriction_dp_eta2},
as well as the constraint
$\sum_{y_{1}, y_{0}}\frac{1}{\abs{\mathcal{Q}}\abs{\mathcal{Q}^{c}}}\sum_{z\in\mathcal{Q}, z'\in\mathcal{Q}^{c}}
\eta_{z, z'}^{y_{0}, y_{1}}\geq (1-\delta)P^*(D(Z_{\mathcal{Q}^{c}}, 1)=1)$. The attractive feature of the average disagreement probability bound is that it
only requires calibration of a single parameter, $\overbar{\DP}$. In contrast,
imposing pairwise disagreement bounds requires calibration of $\delta_{z, z'}$
for each pair of judges, which may be more difficult in practice.\footnote{Our
  average disagreement bound focuses on the case where we only bound
  disagreement under the counterfactual. This is sufficient for the quota
  counterfactual, since we assume judges in the top decile do not change their
  behavior under the counterfactual. For other policies, it may be informative to impose bounds on
  $P^*(D(Z_{\mathcal{Q}}, a)=0\mid D({Z}_{\mathcal{Q}^{c}},a')=1)$ for pairs of
  $(a, a')$ other than $(1,1)$.}

\paragraph{Calibrating disagreement bounds.} In certain special cases, we can observe the decisions of multiple judges (or other decision-makers) on the same cases, which allows us to directly measure how frequently they disagree. Such special settings can be used to calibrate reasonable bounds on disagreement rates. For example, \textcite{sigstad_monotonicity_2023} studies settings where panels of judges rule on the same cases. Likewise, \textcite{agarwal_combining_2023} ask multiple radiologists to provide diagnoses on the same x-rays, and thus observe the decisions of multiple doctors on the same cases.\footnote{In a similar vein, \textcite{ady23} observe the recommendation of both a pretrial services officer and a judge. One could use this to calibrate disagreement bounds among pairs of judges if one were willing to assume that judges and pretrial services officers disagree at similar rates as pairs of judges.} In our application to bail judges in \Cref{sec:empirical-apps} below, we calibrate the disagreement bound $\overbar{\DP}$ using a Gaussian model parameterized to match the data in \textcite{sigstad_monotonicity_2023}.\footnote{In principle, one could use the
  \textcite{sigstad_monotonicity_2023} data to directly calculate the sample
  analog of the disagreement probability
  $P^*(D(Z_{\mathcal{Q}},0)=0\mid D({Z}_{\mathcal{Q}^{c}},0)=1)$. However, the
  disagreement probability is not invariant to the marginal release rates (if
  two judges have release rates equal to 99\%, they cannot disagree on more
  than 2\% of cases; by contrast, two judges with release rates of 50\% could disagree 100\% of the time). We therefore use a Gaussian signal model to compute the
  implied disagreement probability when the marginal release rates are matched
  to those in our application (see \Cref{sec:impl-deta}). The Gaussian signal model implicitly assumes that
  judges serving on panels rule the same way as when making decisions on their
  own. To weaken this assumption, one could estimate a richer model that allows
  for spillovers \parencite[e.g.][]{IaSh12}.}

\paragraph{Limitations of policy invariance.} The quota
example given above showed that policy invariance can help tighten the
identified set (and even restore point identification) in a setting where IV monotonicity alone is not informative. It turns out, however, that there are some situations in which neither IV monotonicity nor policy invariance helps to tighten the identified set. To gain intuition, consider our earlier example of a universal release
policy such that $D(Z,1)=1$ with probability 1. In that case, by assumption, all
judges perfectly agree on their decisions under the counterfactual policy. It
follows that imposing policy invariance---which imposes IV monotonicity and
restricts agreement under the counterfactual---is \emph{equivalent} to imposing
IV monotonicity in this setting, which we know is not helpful by
\Cref{prop:monotonicity_is_not_helpful_strong_encouragement}. In settings like this, researchers will therefore have to turn to
alternative restrictions to try to tighten the identified set.

\subsection{Outcome restrictions}\label{sec:outcome_restrictions}
We now outline several other restrictions that researchers may consider imposing to help tighten the identified set.

\paragraph{Policy complier (PC) bounds.} In the pretrial release setting, bail judges are
legally instructed to release defendants based on pretrial misconduct potential
if released ($Y(1)$). Consider a policy that encourages judges to release more
defendants (e.g., a quota). Unless the judges' estimates of pretrial misconduct
risk are terribly miscalibrated, policy
compliers---individuals only released under the quota---should be on average higher
risk than those currently released, but lower risk than those who are never
released:
\begin{equation}\label{eq:pc_bound}
  E[Y(1)\mid D(z,0)=1]\leq E[Y(1)\mid D(z,0)<D(z,1)]\leq E[Y(1)\mid D(z,1)=0].
\end{equation}
When the marginal release rates under the counterfactual ($E[D(z,1)]$) are
point-identified, \cref{eq:pc_bound} can be written as a linear restriction on the
marginals $\pi_{z}(\cdot)$, and is thus is straightforward to incorporate in the
linear program for the identified set bounds.

\paragraph{Treatment effect bounds.} It may sometimes be reasonable to impose
bounds on the sign or magnitude of the treatment effects. For example, in some
settings, it may be natural to impose the monotone treatment response assumption
that $Y(1) \geq Y(0)$, as in \textcite{manski97} and
\textcite{machado_instrumental_2019}. This is straightforward to impose by
setting $\pi_{z}(y_0, y_1, d_0, d_1) =0$ whenever $y_1 < y_0$. Similarly, in
some settings it may be reasonable to impose that the treatment effect for the
policy compliers is not too large, e.g. $E[Y(1)-Y(0) \mid D(z,0)<D(z,1)] \leq c$, which can be
implemented as a linear constraint when the share of policy compliers is identified, similar to \cref{eq:pc_bound}.\footnote{\textcite{fll23} argue, for example,
  that it may be reasonable to restrict the magnitude of the treatment effect of
  incarceration among compliers between different sets of judges; here we impose
  an analogous constraint on the compliers with respect to a policy of
  interest.}

\paragraph{Outcome disparity bounds.} Some counterfactual policies directly
encourage a subset $\mathcal{Q}$ of decision-makers to act more like a benchmark
group $\mathcal{Q}^{c}$. For example, in implementing a quota policy directive
that asks the set $\mathcal{Q}$ to match the treatment rate of the group
$\mathcal{Q}^{c}$, a policy-maker may encourage the decision-makers to also
emulate their outcomes. We consider such a setting in our empirical application
in \Cref{sec:suff-county-pros} below. In such cases, an alternative to bounding
the disagreements between their treatment decisions is to bound the disparity in
the outcomes between the two sets of decision-makers by imposing
\begin{equation}\label{eq:od}
  \abs{E[Y(D(Z_{\mathcal{Q}}, 1))-Y(D(Z_{\mathcal{Q}^{c}}, 1))]}\leq \overbar{\OD},
\end{equation}
where, again, $Z_{\mathcal{Q}}$ is a randomly picked judge from the group $\mathcal{Q}$, and ${Z}_{\mathcal{Q}^{c}}$ is defined similarly.

To interpret this bound, it can be helpful to relate it to the disagreement
probability bound in \cref{eq:dp_bound} above. To this end, let
$\mathcal{C}_{z, z'}=\1{D(z, 1)=0, D(z', 1)=1}$ denote the event that $z'$ would
release an individual but not $z$ under the counterfactual. Assume that the
treatment rates of both groups are equal,
$P^*(D(Z_{\mathcal{Q}^c},1)=1)=P^{*}(D(Z_{\mathcal{Q}}, 1)=1)=q$ for some $q$.
Then
$E[\mathcal{C}_{{Z}_{Q}, {Z}_{\mathcal{Q}^{c}}}]=E[\mathcal{C}_{{Z}_{Q^{c}},
  {Z}_{\mathcal{Q}}}]=P^*(D(Z_{\mathcal{Q}}, 1)=0, D(Z_{\mathcal{Q}^c},1)=1)$.
Using this, we may write the left-hand side of \cref{eq:od} as
\begin{multline*}
  | E[Y(D(Z_{\mathcal{Q}}, 1))-Y(D(Z_{\mathcal{Q}^{c}},
  1))] | =P^*(D(Z_{\mathcal{Q}}, 1)=0, D(Z_{\mathcal{Q}^c},1)=1)\\
  \times | E[Y(1)-Y(0)\mid \mathcal{C}_{{Z}_\mathcal{Q}, {Z}_{\mathcal{Q}^{c}}} = 1]
  -E[Y(1)-Y(0)\mid \mathcal{C}_{{Z}_{\mathcal{Q}^{c}}, {Z}_\mathcal{Q}} = 1]| .
\end{multline*}
This is the \emph{product} of the aggregate disagreement probability,
$P^*(D(Z_{\mathcal{Q}}, 1)=0, D(Z_{\mathcal{Q}^c},1)=1)=P^*(D(Z_{\mathcal{Q}}, 1)=0\mid D(Z_{\mathcal{Q}^c},1)=1)q$ times the \emph{difference} between treatment effects for individuals about whom there is disagreement. Thus, if we bound the treatment effect difference by $\Delta_{TE}$, imposing \cref{eq:dp_bound} implies that \cref{eq:od} holds with
\begin{equation}\label{eq:od_dp_relation}
\overline{\OD}=\Delta_{TE} \overbar{\DP}\cdot q.
\end{equation}
This relationship can be used to gauge the strength of the outcome disparity
restriction, as we illustrate in \Cref{sec:suff-county-pros} below.

\section{Empirical applications}\label{sec:empirical-apps}

\subsection{NYC bail judges}\label{sec:nyc-bail-judges}

\paragraph{Background and motivation.} We study pretrial release in New York City (NYC) using
data from \textcite[][henceforth ADH22]{adh22}. A judge ($Z$) decides whether to
release a defendant or hold them in pretrial detention. The defendant is
categorized as released ($D=1$) if they are released without preconditions or
after paying money bail; otherwise they are categorized as detained ($D=0$). We
then see whether a defendant commits pretrial misconduct ($Y \in \{0,1\}$) by
either failing to appear for a court hearing or by being arrested for a new
crime. By construction, a defendant cannot commit pretrial misconduct if they
are detained, so $Y(0) = 0$. We also observe the defendant's race, denoted by
$R \in \{b, w\}$, where $b$ and $w$ denote black and white.

We consider two counterfactual policy exercises in this context. First, we
consider a policy that releases all defendants. \Textcite{albright22} studies a
universal release program for defendants (meeting certain conditions) in
Kentucky, so this policy counterfactual directly addresses what would happen if
such a program were implemented for ADH22's sample of defendants in NYC\@.
Moreover, as we describe below, identification of the parameter of interest in
ADH22, the disparate impact parameter, is equivalent to identification of
race-specific average outcomes under this counterfactual. As a result, our
analysis also delivers bounds on the disparate impact parameter without imposing
the parametric restrictions that allowed ADH22 to point-identify this parameter
by extrapolating from the observed variation in the data using identification-at-infinity type arguments.

Second, we consider a quota policy that asks the judges in the bottom 90\% of
leniency to match the release rate of the top 10\%. The original ADH22 analysis
considered counterfactuals similar to this using a hierarchical MTE model,
whereby a subset of the judges were subject to release quotas. Furthermore,
several policies discussed in the literature encourage stricter judges to
release more defendants while still giving the  judges discretion. For example, beginning in 2016, New York City began
encouraging judges to release some defendants through a supervised release
program rather than requiring them to post bail
\parencite{skemer_pursuing_2020}. Likewise, Kentucky passed a law in 2011 that,
among other things, changed the presumptive default from monetary bail to
non-monetary release, yet provided judges with the discretion to override the
default \parencite{stevenson_assessing_2018}. Both policies were found to
increase the fraction of defendants who were released, but did not increase this
rate to one. Our quota exercise can be thought of as a simple approximation to
reforms like these if one assumes that their main practical impact is to induce
the bottom 90\% of judges to increase their release rate to match the most lenient decile. Of course, if
one had domain-specific knowledge about the likely impact of such reforms on
release rates, the assumptions about their ``first-stage'' impact could be
modified accordingly to make our quota counterfactual resemble them more
closely.

\paragraph{Disparate impact in ADH22.} ADH22's primary goal is to estimate the
``disparate impact'' of pretrial release decisions by race. Their disparate
impact parameter can be written as a function of observable probabilities and
$E[Y(1) \mid R]$. Hence, learning about the disparate impact parameter is
isomorphic to learning about the mean outcome (for each race) under the
counterfactual in which everyone is released. To make this precise, let
$FPR_r := P(D=1 \mid Y(1) =1, R=r)$ be the false positive rate for race $r$,
i.e., the fraction of defendants who are released despite having $Y(1)=1$.
Observe that we can write
$FPR_r = \frac{P(D=1, Y=1 \mid R=r)}{P^*(Y(1) =1 \mid R=r)}$. The numerator is
an observable probability, and thus the only challenge in learning about
$FPR_{r}$ is learning about the misconduct rate under the counterfactual in
which everyone is released, $P^*(Y(1) =1 \mid R=r)$. ADH22 similarly define
$TPR_r := P(D=1 \mid Y(1) =0, R=r)$ to be the true positive rate, which we can
write as $TPR_r = \frac{P(D=1, Y=0 \mid R=r)}{1-P^*(Y(1) =1 \mid R=r)}$. Again,
the numerator is an observable probability. ADH22 define the disparate impact
parameter as a weighted average of the differences in $FPR_r$ and $TPR_r$ across
races,
\begin{equation*}
    \Delta := (1-E[Y(1)]) \cdot (TPR_{b}-TPR_{w}) + E[Y(1)] \cdot (FPR_{b}-FPR_w),
\end{equation*}
where $E[Y(1)]=\sum_{r\in\{b, w\}}P(R=r)P^{*}(Y(1)=1\mid R=r)$. This
relationship allows us to infer bounds on $\Delta$ from bounds on the
race-specific counterfactual outcomes if everyone is released,
$E[Y(1) \mid R=r]$.

\paragraph{Identification in ADH22.} ADH22 consider an
identification-at-infinity type argument for $E[Y(1) \mid R]$. Let
$P_{z, r} = E[D(z,0) \mid R=r]$ be the judge-specific release rate for race $r$. ADH22 consider various
functional forms for $E[Y(1) \mid D=1,P_{z, r}=p, R=r]$ and then extrapolate to $p=1$.
Specifically, they report linear, local linear, and quadratic extrapolations.
For comparison with our nonparametric approach, we replicate the point estimates
and confidence intervals based on these three parametric extrapolation
approaches.

\paragraph{Data.} We analyze aggregated statistics from the main estimation
sample in ADH22, which consists of the universe of arraignments made in NYC
between 2008--2013 involving white or black defendants charged with a felony or
misdemeanor, where the defendant is not already serving jail time for an
unrelated charge. The sample comprises \nycNw\ cases involving white defendants
and \nycNb\ cases involving black defendants. We do not have access to the
individual microdata, and only observe estimates of the judge-specific release
rates ($E[D(z,0)\mid R=r]$) and misconduct rates $E[Y(1)\mid D(z,0)=1,R=r]$ that
are obtained from linear regressions that adjust for court-by-time fixed
effects, along with the number of observations for each judge and race (these
estimates are plotted in Figure 2 in ADH22). The covariate adjustment accounts
for the conditional random assignment of the judges; see \Cref{sec:impl-deta} for details.
Because we do not have access to the microdata, our inference ignores the
statistical uncertainty stemming from the covariate adjustment: we treat the
covariate-adjusted rates as if they were sample means. We show below that our
replication of the results in ADH22 yields very similar standard errors as in
the original, suggesting that the impact of the covariate adjustment on
inference is minimal. To avoid small-sample issues stemming from observing only
a few cases per judge, we pool all judges with fewer than 300 race-specific
cases into a single judge, which leaves us with $K=\nycKmainb$ and
$K=\nycKmainw$ distinct values for the instrument for the samples of black and
white defendants, respectively.\footnote{Since the quota policy we consider
  below applies a quota to judges below the 90th percentile of leniency, we
  separately pool judges with fewer than 300 cases who lie below the 90th
  percentile of leniency, and those lying above it.
  \Cref{tab:estimates-delta-disag,tab:quota_nyc_disag} show that our inference
  is virtually unchanged when we do not pool.}

\paragraph{Summary statistics.}

\Cref{fig:nyc_raw} plots the judge-specific misconduct rates against their
release rates. We see that the misconduct rates increase sharply with the
release rate, which is intuitive since by construction $Y=0$ for non-released
defendants. Correspondingly, the \ac{TSLS} estimate of the effect of release on
misconduct, which is equivalent to fitting a weighted least squares line through
this scatterplot, is quite high, and equals \nycIVb\% for blacks and \nycIVw\%
for whites (with tight standard errors: \nycIVbse\ and \nycIVwse). The
average release rate equals \nycEDb\% for blacks and \nycEDw\% for whites.
Unconditionally, the average misconduct rate equals \nycEYb\% for blacks and
\nycEYw\% for whites; conditional on release, these rates equal \nycEYbD\% and
\nycEYwD\%.

\begin{figure}[tp]
\centering
\caption{Judge- and race-specific sample release and misconduct rates, based on
  NYC bail judge data from \textcite{adh22}.}\label{fig:nyc_raw}
\input{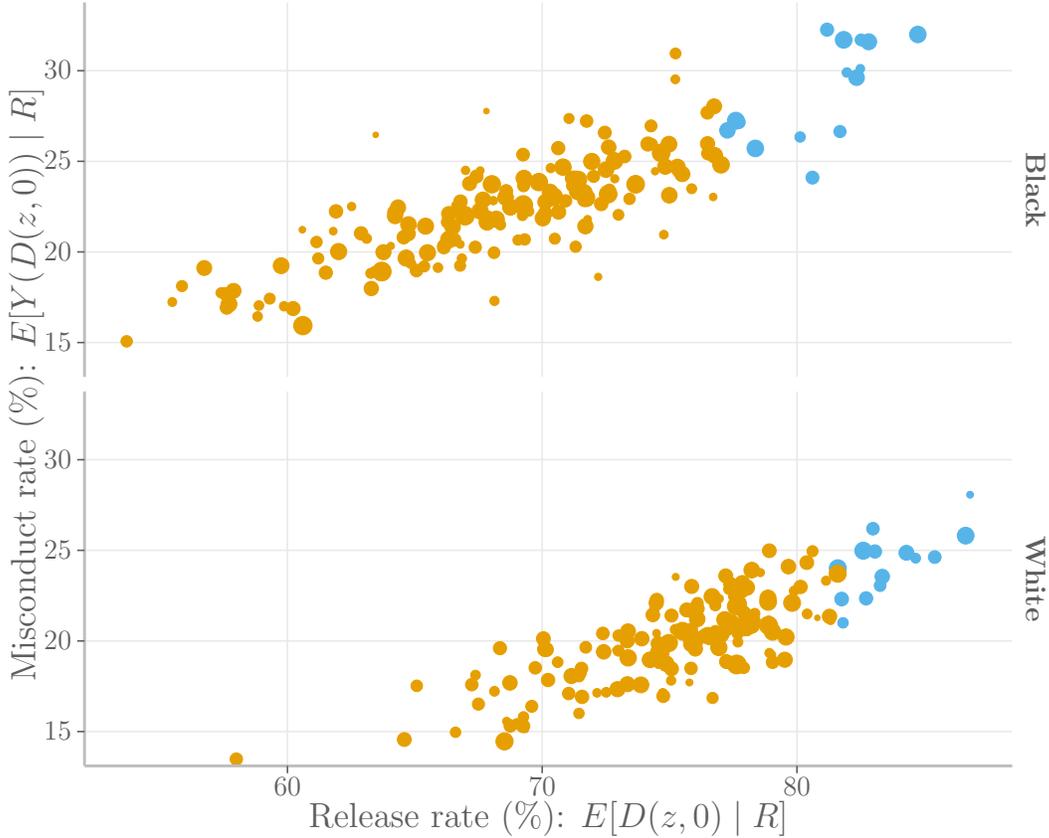} \floatfoot{\emph{Notes:} Both rates adjusted for
  court-by-time fixed effects. Each dot corresponds to a single judge; dot size
  scales with the number of cases handled by the judge. Judges with fewer than
  300 cases are pooled into a single dot. The top decile of most lenient judges
  is plotted in blue, the bottom 90\% are plotted in orange.}
\end{figure}

\paragraph{Results for universal release policy.} We compute plug-in estimates of the
identified set for $E[Y(1) \mid R]$, the race-specific average outcomes under a
universal release program, along with 95\% confidence intervals computed by
projection (see \Cref{sec:impl-deta} for inference details). Because $Y(0) = 0$,
the sharp bounds on $E[Y(1) \mid R=r]$ imposing only IV validity (i.e.,
randomization and exclusion) are given by the intersection of judge-specific
intervals given in \cref{eq:judge_specific_I} above. For illustration,
\Cref{fig:id-set-illustration} shows the form of the bounds (along with simultaneous CIs) for
black defendants: we compute sample analogs of the judge-specific bounds in
\cref{eq:judge_specific_I}, and intersect them to estimate the identified set.
While under monotonicity all the information is obtained from the most lenient
judge, we see that in practice the estimate of the identified set exploits
information from other judges and thus is tighter than the interval obtained by
looking at the most lenient judge alone.

\begin{figure}[tp]
    \centering
    \input{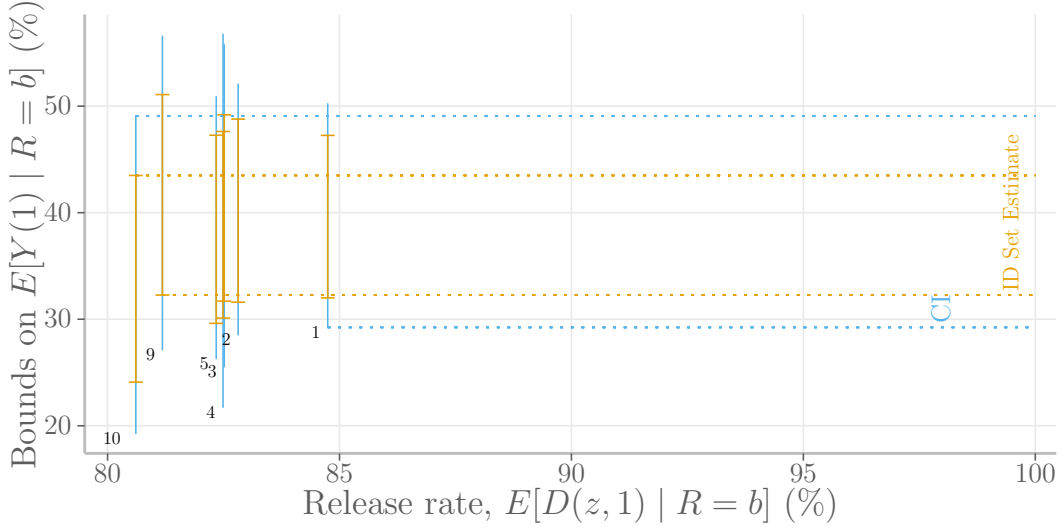}
    \caption{Illustration of identified set for $E[Y(1) \mid R=b]$ using NYC
      bail judge data.}\label{fig:id-set-illustration}
    \floatfoot{\emph{Notes:} The $x$-axis plots estimates of judge-specific
      release rates for each judge, adjusted for court-by-time fixed effects,
      while the $y$-axis plots estimates (in orange) and sup-$t$ confidence intervals
      (in blue) of bounds on $E[Y(1) \mid R=b]$ using data on each judge
      separately. The plug-in estimate of the identified set and the associated
      confidence interval are formed by intersecting the judge-specific bounds and CIs,
      and are depicted on the right-hand side (the orange dotted lines give the plug-in estimates, and the blue dotted lines the CI). For clarity, we plot
      the judge-specific bounds only for a subset of the judges. Judges are
      numbered by their leniency, with $1$ corresponding to the most lenient
      judge.}
\end{figure}

\begin{table}[htbp]
  \begin{threeparttable}
    \caption{Estimates for universal release policy and the disparate impact
      parameter using NYC bail judge data from
      \textcite{adh22}.}\label{tab:estimates-delta-arnold}
    \begin{tabular*}{\linewidth}{@{\extracolsep{\fill}}lcc cc c@{}}
      &\multicolumn{2}{@{}c}{Blacks} & \multicolumn{2}{@{}c}{Whites}\\
      \cmidrule(rl){2-3}\cmidrule(rl){4-5}
      &$E[Y(1)\mid R]$ & {PC TE}
      &$E[Y(1)\mid R]$ & {PC TE}& $\Delta$\\
      Specification & (1) & (2) & (3) & (4) & (5)\\
      \midrule
      Valid IV only & $[32.3, 43.5]$ & $[30.8, 67.9]$ & $[28.1, 39.2]$ & $[31.9, 79.2]$ & $[4.0, 8.2]$\\
 & $(29.2, 49.1)$ & $(20.9, 86.3)$ & $(23.4, 41.9)$ & $(12.0, 90.7)$ & $(1.0, 9.6)$\\
Linear extrap. & $40.0$ & $56.3$ & $33.5$ & $54.8$ & $5.4$\\
 & $(38.9, 41.0)$ & $(52.9, 59.6)$ & $(32.3, 34.6)$ & $(49.8, 59.8)$ & $(5.0, 5.8)$\\
Quadratic extrap. & $42.8$ & $65.6$ & $36.2$ & $66.4$ & $4.8$\\
 & $(39.2, 46.4)$ & $(53.8, 77.4)$ & $(32.7, 39.7)$ & $(51.5, 81.4)$ & $(3.4, 6.3)$\\
Local linear extr. & $43.6$ & $68.3$ & $35.0$ & $61.3$ & $4.3$\\
 & $(39.1, 48.1)$ & $(53.5, 83.0)$ & $(31.0, 38.9)$ & $(44.6, 78.0)$ & $(2.9, 5.7)$\\
\\[-0.9em]
      \bottomrule
    \end{tabular*}
    \begin{tablenotes}
      \begin{footnotesize}
      \item \emph{Notes}: Cols.~(1) and (3) report estimates of the average
        counterfactual outcome under the universal release policy,
        $E[Y(1)\mid R]$, expressed in percentage points, separately for blacks and whites; cols.~(2) and (4)
        report the estimates of the treatment effect for policy compliers,
        $E[Y(1)-Y(0)\mid D(Z, 0)=0]$ implied by these estimates. Col.~(5)
        reports estimates of the disparate impact parameter $\Delta$. For
        specifications leading to set identification, identified set estimates
        are reported in brackets. 95\% confidence intervals in parentheses.
        ``Valid IV only'' assumes only IV validity
        (\cref{eq:valid_iv}). Linear, quadratic and local linear
        extrapolation correspond to the parametric extrapolation methods
        considered in ADH22.
      \end{footnotesize}
    \end{tablenotes}
  \end{threeparttable}
\end{table}

Columns (1) and (3) of \Cref{tab:estimates-delta-arnold} give the estimates and
confidence intervals for the race-specific average misconduct rates computed
using our linear program as well as the parametric extrapolation methods in
ADH22\@. For direct comparability to the confidence intervals using our
approach, we compute standard errors without accounting for covariate
adjustment.\footnote{ADH22 also weight their extrapolations by the inverse of
  the estimated sampling variance for the judge-level covariate-adjusted outcome
  means. Since we do not have these weights, we instead weight by the number of
  defendants the judge released.} \Cref{tab:adh-replication-vs-orig} in the
appendix shows that the results are very similar to those reported in ADH22,
suggesting that accounting for covariate adjustment doesn't substantively affect
inference.

To help interpret the estimates, we also report estimates of the treatment
effects for policy compliers,
$E[Y(1)-Y(0)\mid D(Z,1)>D(Z,0)]=E[Y(1)-Y(0)\mid D(Z,0)=0]$ in columns (2) and
(4). The lower bound for policy compliers is above the misconduct rate for white
defendants released under the status (\nycPCTElw\% vs.\ \nycEYwD\%) and very similar to the status quo rate for black defendants (\nycPCTElb\% vs.\ \nycEYbD\%), suggesting that the marginally-released defendants will have similar or higher crime rates to those released under the status quo. This is intuitive, since we expect judges to be releasing defendants who they deem to
be the lowest risk. The confidence intervals are also reasonably tight, allowing us
to conclude that a universal release program will lead to a substantial increase
in the misconduct rates relative to the status quo: for blacks, they imply that
the misconduct rate will increase by at least $\nycCIbincrease$ percentage
points, which is a substantial increase relative to the status quo rate, $\nycEYb\%$.

We also compute an estimate of the identified set and confidence intervals for
the disparate impact parameter $\Delta$. Comparing the resulting estimates to
those from the parametric extrapolation methods, we see in
\Cref{tab:estimates-delta-arnold} that our estimate of the identified set,
$\nycDeltal$--$\nycDeltau$\%, is somewhat wider than the range of estimates we
obtain based on the parametric extrapolations, \nycDeltalp--\nycDeltaup\%. Our
95\% CIs are also informative: they equal $(\nycDeltalCI, \nycDeltauCI)$\%, so we
can reject the null hypothesis of no disparate impact. Thus, the conclusion in
ADH22 that $\Delta > 0$ is robust to dropping their functional form
assumptions.\footnote{In a robustness check, ADH22 construct non-sharp bounds on
  $E[Y(1) \mid R]$ and $\Delta$ using the aggregate mean outcome among released
  defendants, rather than constructing the sharp bounds by intersecting the
  judge-specific intervals. This yields wider bounds than what we obtain. For
  example, the plug-in bounds on $\Delta$ from this approach are $-1$\% to
  10\%.}

\paragraph{Results for the quota policy.}

\begin{table}[tp]
  \centering
\begin{threeparttable}
  \caption{Estimates for quota policy using NYC bail judge data from
    \textcite{adh22}.}\label{tab:quota_nyc}
  \begin{tabular*}{\linewidth}{@{\extracolsep{\fill}}lcccc cccc@{}}
    &\multicolumn{2}{@{}c}{no PC bound} & \multicolumn{2}{@{}c}{PC bounds}\\
    \cmidrule(rl){2-3}\cmidrule(rl){4-5}
    &Policy effect & {PC TE}
    &Policy effect & {PC TE}\\
    Specification & (1) & (2) & (3) & (4)\\
    \midrule
    \multicolumn{3}{@{}l}{A\@: Blacks}\\
    Reallocation & $5.6$ & $53.2$ & \\
 & $(5.2, 6.1)$ & $(49.2, 57.3)$ & \\
Constant treatment & $5.2$ & $49.7$ & \\
 & $(5.0, 5.5)$ & $(47.4, 52.0)$ & \\
Valid IV only & $[0.0, 10.5]$ & $[0.0, 100.0]$ & $[3.4, 7.0]$ & $[32.3, 66.5]$\\
 & $(0.0, 10.7)$ & $(0.0, 100.0)$ & $(1.8, 10.7)$ & $(16.4, 100.0)$\\
$\overline{DP}=0.025$ & $[4.7, 6.1]$ & $[44.9, 58.0]$ & $[4.9, 5.9]$ & $[46.7, 56.3]$\\
 & $(2.8, 7.9)$ & $(26.3, 76.7)$ & $(2.8, 7.9)$ & $(26.3, 76.7)$\\
\\
    \multicolumn{3}{@{}l}{B\@: Whites}\\
    Reallocation & $3.8$ & $56.6$ & \\
 & $(3.3, 4.2)$ & $(49.7, 63.5)$ & \\
Constant treatment & $3.2$ & $48.1$ & \\
 & $(3.0, 3.5)$ & $(44.5, 51.6)$ & \\
Valid IV only & $[0.0, 6.7]$ & $[0.0, 100.0]$ & $[1.8, 5.2]$ & $[26.2, 77.2]$\\
 & $(0.0, 6.9)$ & $(0.0, 100.0)$ & $(0.7, 6.9)$ & $(9.9, 100.0)$\\
$\overline{DP}=0.021$ & $[2.4, 5.1]$ & $[35.7, 76.9]$ & $[2.5, 5.0]$ & $[37.7, 75.2]$\\
 & $(1.2, 5.9)$ & $(17.0, 91.1)$ & $(1.2, 5.9)$ & $(17.0, 91.1)$\\
\\[-0.9em]
    \bottomrule
  \end{tabular*}
  \begin{tablenotes}
    \begin{footnotesize}
    \item \emph{Notes}: Cols.~(1) and (3) report estimates of the policy effect
      $E[Y(D(Z,1))-Y]$, expressed in percentage points, for a quota policy that asks the bottom 90\% of judges to
      match the release rate of the most lenient decile (which equals
      $\nycQuotab\%$ for blacks and $\nycQuotaw\%$ for whites). Cols.~(2) and
      (4) report the treatment effect for policy compliers implied by the policy
      effect estimates in cols.~(1) and (3). Cols.~(1) and (2) report estimates
      without imposing the PC bound in \cref{eq:pc_bound}, while cols.~(3) and
      (4) impose it. 95\% confidence intervals in parentheses; these do not
      account for covariate adjustment. The rows labeled $\overbar{\DP}$ report bounds imposing the disagreement bound in \eqref{eq:dp_bound} with the stated upper bound on disagreement between judges above and below the 90th percentile of leniency. For specifications leading to set
      identification, identified set estimates are reported in brackets.
  \end{footnotesize}
\end{tablenotes}
\end{threeparttable}
\end{table}

To benchmark the effect of the quota policy, we first compute the policy effect
of a simpler policy that achieves the same release rates: reallocating all cases
to the most lenient decile. Here the policy effect is point identified under
random assignment of the instrument, and it is given by the difference between
the current average misconduct rate versus the average misconduct rate for
individuals assigned to judges in the most lenient decile.\footnote{We note that
  for analyzing the impact of such a reallocation policy, we do not actually
  require IV exclusion.} We also consider a simple parametric benchmark that
assumes homogeneous treatment effects, which also leads to point identification:
the policy effect is identified by the TSLS estimate, multiplied by the change
in the release rate relative to the status quo. As shown
in~\Cref{tab:quota_nyc}, both benchmarks imply a similar policy effect, implying
a misconduct rate for policy compliers equal to about 50\% for both whites and blacks.

We then estimate the policy effect of the quota under a conservative specification that
assumes only \ac{IV} validity and policy monotonicity. Unlike for the universal
release policy, the bounds for this policy are trivial in the sense that they
imply the trivial bounds $[0, 100]$ for the policy complier treatment effect.
The reason for this is that we cannot rule out that there is a sufficient mass
of instrument never-takers under the status quo (those with $D(z,0)=0$ for all
$z$) for whom we only have trivial bounds on $Y(1)$, and without further
restrictions, policy compliers may be drawn from this pool in an adversarial
way.

However, we can tighten these bounds by imposing two additional sets of
restrictions that exploit the institutional details. First, we can impose the aggregate disagreement probability bound in \cref{eq:dp_bound}. As described in detail in \Cref{sec:impl-deta}, we set the upper bound on disagreement, $\overbar{\DP}$, by calibrating a Gaussian signal model similar to that considered in
\textcite{cgy22} to \textcite{sigstad_monotonicity_2023}'s data on panels of judges ruling on criminal cases in the São Paulo Appeal Court. We assume that the judges observe correlated Gaussian signals $U_{z}$, releasing the
individual if the signal is high enough. To calibrate the correlation matrix, we focus on the first three judges to rule in each case in \textcite{sigstad_marginal_2024}'s sample.\footnote{At least three judges rule on each case. Additional judges may provide an opinion in some cases, but only if the initial three judges disagree. To estimate disagreement rates, we therefore focus on the first three judges to avoid selection bias related to the initial three judges' decisions.} The implied signal correlation between a randomly selected judge in the
bottom 90\% and one in the most lenient decile is 0.989.\footnote{The raw
  disagreement rates are as follows: for a randomly-selected pair of judges $A$
  and $B$ with judge $A$ in the top decile of leniency and judge $B$ in the
  bottom nine deciles in Sigstad's data, we have $P(D_A=1, D_B=0) = 3.8\%$ and
  $P(D_A=1, D_B=0) = 0.6\%$, where $D_A, D_B$ are the decisions of judges $A, B$, respectively.} To calibrate $\overbar{\DP}$ in the NYC data, we set the correlation between any pair of signals $U_{z}$ and $U_{z'}$, where $z$ is a judge in the bottom 90\% and $z'$ is in the top decile, to this value in our Gaussian signal model, and set the judges' cutoffs for release to match those under our quota
counterfactual. We calculate the value of $\overbar{\DP}$ as the average disagreement probability in this model,
which yields
$\overbar{\DP}=\nycDPb$ blacks and \nycDPw\ for whites.

Second, the sole legal
objective of bail judges is to allow most defendants to be released before trial
while minimizing the risk of pretrial misconduct. In line with this objective, we can impose the policy complier bounds given by \cref{eq:pc_bound} for
each judge $z$ in the bottom 90\%.

\Cref{tab:quota_nyc} gives the results for the quota policy when we impose either the calibrated disagreement bound or the policy complier bounds (or both). Confidence intervals are computed by projection, as detailed in
\Cref{sec:impl-deta}.

With the $\overbar{\DP}$ bound imposed, the identified set estimate is fairly tight, implying that the treatment effects
for policy compliers are not too different from the TSLS estimates. The estimates of the bounds exceed the
status quo misconduct rates among those currently released,
$E[Y(1)\mid D(Z,1)=1]$: these rates equal $\nycEYbD$\% for blacks and $\nycEYwD$\%
for whites, while the lower bounds on the average value of $Y(1)$ for policy
compliers, reported in column (2), equal $\nycEYlb$\% and $\nycEYlw$\%,
respectively. While the confidence intervals are wider, the lower endpoints are
close to the status quo misconduct rates, particularly for blacks ($\nycEYlbCI$\% vs.\ $\nycEYbD$\%), implying the
policy compliers have similar or higher misconduct rates than the inframarginal
individuals (those currently released). Additionally, imposing policy complier bounds helps tighten the point estimates, but not the confidence intervals
once the $\overbar{\DP}$ bound is imposed.

\subsection{Suffolk County prosecutors}\label{sec:suff-county-pros}

\paragraph{Background.} We reanalyze the data from \textcite[henceforth
AHD23]{adh23}, who are interested in the effect of non-prosecution of nonviolent
misdemeanor offenses on the subsequent criminal justice involvement of the
defendants. Here the decision-makers ($Z$) are assistant district attorneys
(ADAs), who decide whether to prosecute a case after an initial arraignment
($D=1$) or drop it ($D=0$). We then see whether a criminal complaint against
the defendant has been filed within two years of arraignment ($Y\in\{0,1\}$).
ADH23 use the jackknife IV estimator to estimate the treatment effect of
non-prosecution, and find robustly negative IV estimates; they also estimate an MTE curve that is downward-sloping and negative for nearly all of its support. Based on this
analysis, they conclude (ADH23, p.~1455):

\begin{footnotesize}
  \begin{quote}
    The results of our analysis imply that if all arraigning ADAs acted more
    like the most lenient ADAs in our sample when deciding which cases to
    prosecute, Suffolk County would likely see a reduction in criminal justice
    involvement for these nonviolent misdemeanor defendants.
  \end{quote}
\end{footnotesize}

If we assume homogeneous treatment effects, then the negative IV estimates do
indeed imply that increasing non-prosecution would lower criminal complaints. Likewise, if we impose policy invariance, then the negative MTE curve implies that increasing the non-prosecution rate would lower criminal complaints. We are interested in evaluating the
robustness of this conclusion to dropping the assumptions of homogeneous treatment effects and policy invariance. Importantly, when ADH23 test IV monotonicity
(\Cref{assumption:instrument_monotonicity}) for each of the 9 courts in their
dataset using the test developed by \textcite{fll23}, the test rejects
IV monotonicity (and hence also policy invariance) in three of the courts. Correspondingly, we conduct the
analysis without imposing \Cref{assumption:instrument_monotonicity} or
\Cref{assumption:single_index}.

To apply our policy evaluation approach, we need to make precise the way in which one would encourage stricter ADAs to act ``more like the most lenient ADAs''. As a benchmark, we start by analyzing a policy that reallocates cases to the more lenient ADAs. We next consider a scenario in which, rather than literally re-allocating cases, the policy encourages the stricter ADAs to increase their release rates to match that of the more lenient ones. For comparability with our previous application, we consider a quota on the bottom 90\% of ADAs to match the
non-prosecution rate of the most lenient decile. As with the previous application, we can view this as an approximation to other encouragement policies that have the effect of increasing the non-prosecution rates of the bottom 90\% of ADAs to match that of the top 10\%. For example, ADH23 discuss how in 2019, a new district attorney increased non-prosecution rates by issuing a memo that instituted a presumption of non-prosecution for certain misdemeanor cases. Our results approximate the impact of such a policy to the extent that its impact on release rates was to make the bottom 90\% of ADAs match the top 10\%.

\paragraph{Data.} The data comes from the Suffolk County District Attorney’s
Office in Massachusetts. We focus on the main analysis main sample in ADH23,
which restricts to cases between 2004 and 2008, drops felony charges,
violent crimes, and prosecutors with fewer than 30 cases. This yields a sample
with $\sufN$ cases. ADH23 argue that conditional on court-by-time fixed effects,
the ADAs are as-good-as-randomly assigned to cases, and they control for these
fixed effects linearly in all their specifications. In their main specification,
they also consider an extended set of controls that include case and defendant
characteristics. Correspondingly, to estimate ADA-specific probabilities
$P^*(Y(d)=y, D(z,0)=d)$ we also linearly adjust for the court-by-time fixed
effects and case and defendant characteristics, as detailed in
\Cref{sec:impl-deta}. Since we have access to the microdata, our inference
accounts for this covariate adjustment. As in the NYC bail judge application, we
pool ADAs with fewer than 300 cases to mitigate finite-sample issues (we do this separately by whether leniency is above or below the 90th percentile).
\Cref{tab:quota_suffolk_disag} shows that our inference results remain the same
when we do not pool. This leaves us with $K=\sufKp$ distinct values for the
instrument.

\paragraph{Summary statistics.}

\Cref{fig:suffolk_raw} plots the ADA-specific non-prosecution rates against the
criminal complaint rates. Relative to \Cref{fig:nyc_raw}, there is much more
variation in the outcomes for ADAs with similar non-prosecution rates. An
implication of IV monotonicity is that ADAs with the same prosecution
rate must have the same outcomes, since they must be prosecuting the same set of
individuals; if the prosecution rates differ by $x$, the outcomes cannot differ
by more than $x$ times the range of the support for the outcome. The IV
monotonicity test of \textcite{fll23} tests precisely this implication. Thus,
the large outcome variability in \Cref{fig:nyc_raw} can be seen as giving visual
evidence against IV monotonicity, in line with the rejection of the
\textcite{fll23} test in several subsets of the data.

\begin{figure}[tp]
{\centering
{\small\input{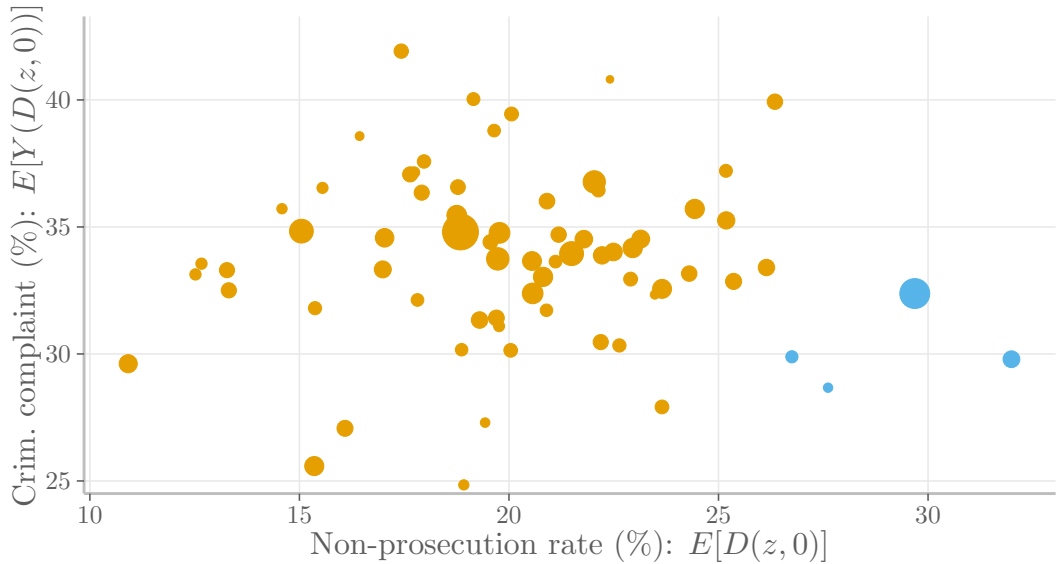}}
\caption{ADA-specific sample non-prosecution rates and criminal complaint rates,
  based on Suffolk County prosecutor data from
  \textcite{adh23}.}\label{fig:suffolk_raw}}
\floatfoot{\emph{Notes:} Both rates are adjusted for court-by-time fixed effects
  and case and defendant characteristics. Each dot corresponds to a single
  ADA\@; dot size scales with the number of cases handled by the ADA\@. ADAs
  with fewer than 300 cases are pooled into a single dot. The top decile of most
  lenient ADAs is plotted in blue, the bottom 90\% are plotted in orange.}
\end{figure}

\Cref{fig:suffolk_raw} also shows that the relationship between non-prosecution
and criminal complaints is negative: the IV estimate correspondingly equals
$\sufIVc$\%, albeit the estimate is an order of magnitude noisier than that in
ADH22, with standard error $\sufIVcse$ (this replicates Table III, col.~(4) in
ADH23). The overall non-prosecution rate in the data is $\sufED$\%, and the
overall criminal complaint rate is \sufEY\%.

\paragraph{Results for the quota policy.}

To benchmark our estimates, we again start out by computing the policy effect of
a reallocation policy that assigns all cases to the most lenient ADAs\@, and a
parametric benchmark that assumes homogeneous treatment effects, which allows for
extrapolation of the IV estimates. As shown in \Cref{tab:quota_suffolk}, these
imply that the quota policy would lead to a reduction in the probability of criminal complaints by about 1--3 percentage points. Notably, however, under the
reallocation policy the confidence interval only marginally excludes zero.

We then estimate the policy effect under a specification that only assumes
\ac{IV} validity and policy monotonicity. Like the NYC bail judge data, the data
is consistent with there being many instrument never-takers. The policy only
moves treatment by $\sufDeltaDc$ percentage points, so without further
assumptions, we cannot rule out that the individuals marginally released---the
policy compliers---are picked from the set of instrument never-takers in an
adversarial way. As a result, the estimate of the identified set (row 3 of
\Cref{tab:quota_suffolk}) implies trivial bounds for the policy complier
treatment effect.

\begin{table}[tp]
  \centering
\begin{threeparttable}
  \caption{Estimates for quota policy using Suffolk County prosecutor data from
  \textcite{adh23}.}\label{tab:quota_suffolk}
  \begin{tabular*}{0.8\linewidth}{@{\extracolsep{\fill}}lcc@{}}
    &Policy effect & {PC TE}\\
    Specification & (1) & (2)\\
    \midrule
    Reallocation & $-1.7$ & $-19.0$ \\
 & $(-3.3, -0.1)$ & $(-36.7, -1.2)$ \\
Constant treatment & $-2.6$ & $-28.8$ \\
 & $(-4.4, -0.8)$ & $(-48.5, -9.1)$ \\
Valid IV only & $[-9.0, 9.0]$ & $[-100.0, 100.0]$ \\
 & $(-9.3, 9.3)$ & $(-100.0, 100.0)$ \\
$\overline{OD}=0.019$ & $[-3.4, 0.0]$ & $[-37.9, 0.0]$ \\
 & $(-5.3, 1.9)$ & $(-61.2, 21.9)$ \\
\\[-0.9em]
    \bottomrule
  \end{tabular*}
  \begin{tablenotes}
    \begin{footnotesize}
    \item \emph{Notes}: Col.~(1) reports estimates of the policy effect
      $E[Y(D(Z,1))-Y]$, expressed in percentage points, for a quota policy that
      asks bottom 90\% of ADAs to match the non-prosecution rate of the most
      lenient decile (which equals $\sufQuotac\%$). Col.~(2) reports the
      treatment effect for policy compliers implied by the policy effect
      estimates. 95\% confidence intervals in parentheses. For specifications
      leading to set identification, identified set estimates reported in
      brackets.
  \end{footnotesize}
\end{tablenotes}
\end{threeparttable}
\end{table}

We next consider an additional assumption that can help us tighten these bounds. In contrast to our previous application, imposing bounds on the outcomes for policy compliers as in
\cref{eq:pc_bound} is hard to motivate in this setting, since prosecution decisions are based in large part on the strength of evidence
against the defendant, which is not directly related to their recidivism risk.
Likewise, we do not have external data from panels of prosecutors to help calibrate disagreement probabilities across ADAs.

We therefore pursue a simple approach that directly restricts the counterfactual
outcome disparity between ADAs in the bottom 90\% and those in the top decile as
in \cref{eq:od} for some bound $\overbar{\OD}$. Imposing $\overbar{\OD}=0$
amounts to assuming that the policy effect matches that under the reallocation
policy. As a simple benchmark, we calibrate $\overbar{\OD}$ to the status quo
outcome disparity between the two groups of ADAs. Since the criminal complaint
rate for those in the top decile is $1.9$ percentage points lower than for those
in the bottom 90\%, we set $\overbar{\OD}=\sufODc$. This is motivated by the
informal discussion in ADH23 who consider a counterfactual where the stricter
ADAs are asked to act ``more like the most lenient ADAs.'' If the policy
directive asks them to act this way, one can reasonably expect the outcome
disparity to be lower under the counterfactual, so that this value of
$\overbar{\OD}$ can be interpreted as a conservative bound. As
\Cref{tab:quota_suffolk} shows, imposing this bound substantially tightens the
estimate of the identified set. As column (2) indicates, the implied treatment
effect for policy compliers lies between $0$ and about 130\% of the IV estimate.

There is an alternative view of this bound that is useful. In particular,
setting $\overbar{\OD}$ to the status quo disparity is the maximal value of
$\overbar{\OD}$ that one can allow for while still guaranteeing identification
of the sign of the policy effect.\footnote{Since ADAs above the quota are
  assumed not to change their behavior, the policy effect is proportional to the difference between the counterfactual and status quo outcome disparities, $\theta-E[Y]=n_{\mathcal{Q}}/n\cdot (E[Y(D(Z_{\mathcal{Q}}, 1))-
  Y(D(Z_{\mathcal{Q}^c}, 1))]-(E[Y\mid Z\in\mathcal{Q}]-E[Y\mid
  Z\in\mathcal{Q}^c]))$, with $n_{\mathcal{Q}}$ denoting the number of cases
  assigned to the ADAs who are subject to the quota. Since the status quo
  outcome disparity is negative, assuming that the counterfactual outcome
  disparity is no larger in magnitude than the status quo disparity implies that the
  policy effect is weakly negative.} Using \cref{eq:od_dp_relation}, we can thus
ask if the implied disagreement probability is reasonable: if so, this suggests
that identification of the sign of the treatment effect is possible under
reasonable restrictions on the disagreement probability. If we make no
restrictions on treatment effect heterogeneity, the implied value of
$\overbar{DP}$ is \sufDP, which corresponds to a signal correlation equal to
$\rho=\sufrho$. This is substantially higher than the corresponding value
calibrated from \textcite{sigstad_monotonicity_2023} ($0.989$), implying that
prosecutors in Massachusetts would have to be much more strongly in agreement
than the judges in Sigstad's data. While it is possible that agreement rates
differ across contexts, this strikes us as a stringent requirement, especially
given the rejection of IV monotonicity, which implies disagreement under the
status quo. We can, however, learn about the sign of the policy impact under
weaker restrictions on disagreement if we also impose some restrictions on
treatment effect heterogeneity. If we impose that $\Delta_{TE}$ is at most twice
the IV estimate, then we only require $\overbar{DP} = \sufDPIV$, which
corresponds to $\rho = \sufrhoIV$ in the Gaussian signal model. This is lower
than the value in Sigstad, and thus seems more reasonable.

\printbibliography%

\newpage

\begin{appendices}

\renewcommand{\figurename}{Appendix Figure}
\setcounter{figure}{0}
\renewcommand{\tablename}{Appendix Table}
\setcounter{table}{0}

\section{Proofs}\label{sec:proofs}

\subsection{Proof of Proposition~\ref{prop:marginals_one_z}}\label{sec:proof_marginals_one_z}

\Cref{theorem:general_statement} below gives a more general version of
\Cref{prop:marginals_one_z} that does not restrict the distribution of the
potential outcomes. Stating this result requires setting up additional notation.
The model primitives $(Y(0), Y(1), D(\cdot, \cdot), Z)$ take values in the space $\mathbb{R}^{2+2K+1}$. We endow this space with its standard Borel $\sigma$-algebra. We consider general Borel probability measures for the model primitives with support $\mathcal{Y}^2 \times \{0,1\}^{2K} \times \mathcal{Z}$, where $\mathcal{Y}\subseteq\mathbb{R}$. Proposition 10.2.8 in \textcite{dudley2018real}
implies that any joint probability distribution over the model's primitives has well-defined conditional distributions, for any subset of conditioning variables. We use this fact repeatedly in what follows.

For each $z \in\mathcal{Z}$, let $\Pi^*_{z}$ denote the marginal distribution of
the random vector $(Y(0), \allowbreak Y(1), \allowbreak D(z,0), D(z,1))$ under
$P^*$. It is without loss of generality to assume these marginal distributions
are all dominated by some probability measure, $\mu$ (for example, set
$\mu := (1/K) \sum_{k=1}^{K} \Pi^*_{k}$). We denote the densities of
$\Pi^*_{z}$ with respect to $\mu$ by $\pi^*_{z}$. We refer to the collection of
densities $\{\pi^*_{z}\}_{z \in\mathcal{Z}}$ as the \emph{marginal
  $z$-densities} of $P^*$.

The dominating measure $\mu$ can be thought of as the law of some random vector
$(Y_{0}, Y_{1}, \allowbreak D_{0}, D_{1})$. Let
$\mu_{D_{0}, D_{1}\mid Y_{0}=y_0, Y_{1}=y_1}$ denote the conditional distribution
of $\mu$ given $(Y_0, Y_1)=(y_0, y_1)$, which we abbreviate to
$\mu_{D_{0}, D_{1}\mid y_0,y_1}$ whenever it doesn't cause confusion. We denote
the marginal distribution of $(Y_0,Y_1)$ under the probability measure $\mu$ by
$\mu_{Y_{0}, Y_{1}}$. We follow the same convention to denote other conditional
and marginal distributions.

The observable data given $Z=z$, $(Y(D(z,0)), D(z,0))$, take values in the sample space
$\mathcal{Y} \times \{0,1\}$. Thinking of $\mu$ as a probability, the implied
measure on this space is given by
$\tilde{\mu}(F\times G)= \1{1\in G}\mu_{Y_{1}\mid D_{0}=1}(F)\mu_{D_{0}}(\{1\})+
\1{0\in G}\mu_{Y_{0}\mid D_{0}=0}(F)\mu_{D_{0}}(\{0\})$. Let $P_{Y, D\mid z}$
denote the conditional distribution of $P$ given $Z=z$.

\begin{prop}\label{theorem:general_statement}
  There exists
  $P^* \in \mathcal{P}^*_{I}(P;\mathcal{P}^{*}_{\textnormal{valid}})$ with
  marginal $z$-densities $\{\pi_{z}\}_{z \in \mathcal{Z}}$ (with respect to some probability measure $\mu$) if and only if
  $\{\pi_{z}\}_{z \in\mathcal{Z}}$ satisfy the following conditions:
  \begin{enumerate}
  \item\label{item:match_data_general} They match the observable data: for every
    $z\in\mathcal{Z}$, $P_{Y, D\mid z}$ is absolutely continuous with respect to
    $\tilde{\mu}$ with density $p_{Y, D\mid z}$ given $\tilde{\mu}$-a.e.\ by
    \begin{align*}
      p_{Y, D\mid z}(y,1) &=\int_{\mathcal{Y} \times \{0,1\}} \pi_{z} (y_0, y,
      1, d_1)\, d\mu_{Y_{0}, D_{1}\mid Y_{1}=y, D_{0}=1}, \\
      p_{Y, D\mid z}(y,0) &=\int_{\mathcal{Y} \times \{0,1\}} \pi_{z} (y, y_1, 0, d_1)\,
                 d\mu_{Y_{1}, D_{1}\mid Y_{0}=y, D_{0}=0}.
    \end{align*}
  \item\label{item:same_marginals_general} They imply the same distribution for
    $(Y(0), Y(1))$: for all $z, z' \in \mathcal{Z}$, the equality
      \begin{equation*}
        \int_{\{0,1\}^2} \pi_{z}(y_0, y_1, d_0, d_1)
        \, d\mu_{D_{0}, D_{1}\mid y_{0}, y_{1}}
        =\int_{\{0,1\}^2} \pi_{z'}(y_0, y_1, d_0, d_1)
        \, d\mu_{D_{0}, D_{1}\mid y_{0}, y_{1}}
      \end{equation*}
      holds $\mu_{Y_{0}, Y_{1}}$-a.e.
    \item\label{item:valid_pmf_general} They are valid density functions: for
      all $z \in\mathcal{Z}$, $\pi_{z}(y_0, y_1, d_0, d_1) \geq 0$ ($\mu$-a.e.)
      and
      $\int_{\mathcal{Y}^2 \times \{0,1\}^2} \pi_{z}(y_{0}, y_{1}, d_{0}, d_{1})
      \, d\mu= 1$.
\end{enumerate}
\end{prop}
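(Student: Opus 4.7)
I would prove necessity and sufficiency of conditions~\ref{item:match_data_general}--\ref{item:valid_pmf_general} in turn.

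Necessity is essentially a bookkeeping exercise. If $P^{*}\in \mathcal{P}^{*}_{I}(P;\mathcal{P}^{*}_{\textnormal{valid}})$ has marginal $z$-densities $\{\pi_{z}\}$, then condition~\ref{item:valid_pmf_general} is immediate. Condition~\ref{item:same_marginals_general} holds because integrating $(D(z,0),D(z,1))$ out of $\pi_{z}$ yields the density of $(Y(0),Y(1))$ under $P^{*}$, which does not depend on $z$. Condition~\ref{item:match_data_general} follows from IV validity: conditional on $Z=z$, the distribution of $(Y,D)$ coincides with that of $(Y(D(z,0)),D(z,0))$ under $P^{*}$, and expressing the latter as a marginal of $\pi_{z}$ (split by $D(z,0)\in\{0,1\}$) produces the two displayed integral identities.

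Sufficiency is the substantive part. Given $\{\pi_{z}\}$ satisfying the three conditions, I would explicitly construct $P^{*}$ by exploiting the fact that, since no cross-judge restrictions are imposed, we are free to choose any coupling for $\{D(z,\cdot)\}_{z}$ consistent with the prescribed marginals. Concretely, the plan is to (i) use condition~\ref{item:same_marginals_general} to define a common marginal $\mu_{Y_{0},Y_{1}}$-density for $(Y(0),Y(1))$ by integrating out $(D(z,0),D(z,1))$ from any single $\pi_{z}$; (ii) use the regular conditional distributions of $(D(z,0),D(z,1))$ given $(Y(0),Y(1))$ induced by each $\pi_{z}$ (which exist by the Borel structure cited from \textcite{dudley2018real}) as the building blocks; (iii) declare the pairs $(D(z,0),D(z,1))$ to be mutually conditionally independent across $z$ given $(Y(0),Y(1))$, thereby pinning down the joint law of $(Y(0),Y(1),\{D(z,\cdot)\}_{z})$; and (iv) take $Z$ independent of everything above, with marginal equal to that under $P$.

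It then remains to verify that this $P^{*}$ (a) has marginal $z$-densities $\{\pi_{z}\}$---by construction, the joint law of $(Y(0),Y(1),D(z,0),D(z,1))$ has density $\pi_{z}$ because independence across $z'\neq z$ integrates out to the prescribed $(Y(0),Y(1))$-conditional; (b) satisfies IV validity---by the assumed independence of $Z$; and (c) generates the observed $P$---this follows by decomposing $P^{*}(Y(D(z,0))\in A, D(z,0)=d)$ according to $d\in\{0,1\}$ and invoking condition~\ref{item:match_data_general} to identify the resulting integrals with $p_{Y,D\mid z}$. The main obstacle is the measure-theoretic bookkeeping required to construct the joint law on the large (possibly uncountable) product space and to check that the product-of-conditionals construction produces a bona fide probability measure with the claimed marginals; this is where the existence of regular conditional distributions on Borel spaces plays its essential role. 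The discrete case \Cref{prop:marginals_one_z} is recovered by taking $\mu$ to be counting measure on $\mathcal{Y}^{2}\times\{0,1\}^{2}$, which trivializes these measurability concerns.
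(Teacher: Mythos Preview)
Your proposal is correct and follows essentially the same route as the paper: necessity is handled as bookkeeping, and sufficiency proceeds by constructing $P^{*}$ via conditional independence of the pairs $(D(z,0),D(z,1))$ across $z$ given $(Y(0),Y(1))$, with $Z$ independent and the $(Y(0),Y(1))$-marginal taken from any $\pi_{z}$ via condition~\ref{item:same_marginals_general}. One minor remark: since $\mathcal{Z}=\{1,\dotsc,K\}$ is finite, the product space for $D(\cdot,\cdot)$ is just $\{0,1\}^{2K}$, so the ``possibly uncountable product space'' concern you flag does not actually arise here---the only measure-theoretic care needed is the existence of regular conditionals on the $(Y(0),Y(1))$ side, which you correctly invoke.
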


\begin{proof}
  We first show the $\implies$ part of the statement. Namely, if there exists
  $P^* \in \mathcal{P}^*_{I}(P; \mathcal{P}^*)$ with marginal $z$-densities
  $\{\pi_{z}\}_{z \in \mathcal{Z}}$, then
  conditions~\ref{item:match_data_general}--\ref{item:valid_pmf_general} hold. This is conceptually straightforward, and most of the work just involves translating statements about probabilities to statements about densities.

  To establish condition~\ref{item:match_data_general}, observe that by definition of the density $p_{Y, D\mid z}$, for any measurable $F \subseteq \mathcal{Y}$,
  \begin{equation*}
      P_{Y, D\mid z}(F\times \{1\}) = \int_{F \times \{1\}} p_{Y, D \mid z}(y, 1) \, d \tilde{\mu}.
  \end{equation*}
  Further, observe that
  \begin{equation*}
    \begin{split}
      P_{Y, D\mid z}(F\times \{1\})
      &=P^{*}(Y(1)\in F, D(z,0)=1)\\
      &
        =\int_{\mathcal{Y} \times F \times \{0,1\} \times \{1\}}
        \pi_{z} (y_0, y_1, d_0, d_1)\, d \mu\\
      &=\int_{\{1\}}\int_{F}\int_{\mathcal{Y} \times \{0,1\}}
        \pi_{z} (y_0, y_1,
        d_0, d_1)\, d\mu_{Y_{0}, D_{1}\mid y_{1}, d_{0}}
        d \mu_{Y_{1}\mid D_{0}=d_{0}}d\mu_{D_{0}}\\
      &=\int_{F}\left[\int_{\mathcal{Y} \times \{0,1\}}
        \pi_{z} (y_0, y_1,
        1, d_1)\, d\mu_{Y_{0}, D_{1}\mid Y_{1}=y_{1}, D_{0}=1}\right]\,
        d \mu_{Y_{1}\mid D_{0}=1}\cdot \mu_{D_{0}}(\{1\})\\
      &= \int_{F \times \{1\}} \left[\int_{\mathcal{Y} \times \{0,1\}}
        \pi_{z} (y_0, y_1,
        1, d_1)\, d\mu_{Y_{0}, D_{1}\mid Y_{1}=y_{1}, D_{0}=1}\right] \, d\tilde{\mu}_{Y, D}
    \end{split}
  \end{equation*}
  where the first line uses the fact that since
  $P^* \in \mathcal{P}^*_{I}(P; \mathcal{P}^*)$, it generates $P$ and satisfies
  \cref{eq:valid_iv}, the second line uses the definition of the
  density $\pi_{z}$, the third line uses law of iterated expectations (e.g.~Part
  II of Theorem 10.2.1 in \textcite{dudley2018real}), the fourth line follows by
  algebraic manipulation, and last line uses the definition of $\tilde{\mu}$. Since the last two displays hold for all measurable $F$, it follows that $p_{Y, D \mid z}(y,1) = \int_{\mathcal{Y} \times \{0,1\}}
        \pi_{z} (y_0, y_1,
        1, d_1)\, d\mu_{Y_{0}, D_{1}\mid Y_{1}=y_{1}, D_{0}=1}$ ($\tilde{\mu}$-a.e.), which gives the first expression in condition~\ref{item:match_data_general}. The second expression in condition~\ref{item:match_data_general} can be verified by using an analogous argument to show that
   \begin{equation*}
   P_{Y, D\mid z}(F\times \{0\}) = \int_{F \times \{0\}} p_{Y, D \mid z}(y, 0) \, d \tilde{\mu} = \int_{\mathcal{Y} \times \{0,1\}} \pi_{z} (y_0, y_1,
        0, d_1)\, d\mu_{Y_{1}, D_{1}\mid Y_{0}=y_{0}, D_{0}=0}.
        \end{equation*}

  To establish condition~\ref{item:same_marginals_general}, fix any
  $z\in\mathcal{Z}$. Then for any measurable sets
  $F_{0}, F_{1} \subseteq \mathcal{Y}$:
  \begin{equation*}
    \begin{split}
      P^*(Y(0) \in F_0, Y(1) \in F_{1})
      &=\int_{F_0 \times F_1 \times \{0,1\}^2} \pi_{z}(y_0, y_1, d_0, d_1) \, d \mu\\
      &= \int_{F_0 \times F_1}\int_{\{0,1\}^2} \pi_{z}(y_0, y_1, d_0, d_1)
        \, d\mu_{D_{0}, D_{1}\mid y_{0}, y_{1}}\, d\mu_{Y_{0}, Y_{1}},
    \end{split}
  \end{equation*}
  where the first equality follows by definition of the density $\pi_{z}$, and
  the second by the law of iterated expectations. Thus, for any measurable
  $F_0, F_1 \subseteq \mathcal{Y}$ and any pair $z, z'\in\mathcal{Z}$,
  \begin{multline*}
    \int_{F_0 \times F_1}\int_{\{0,1\}^2} \pi_{z}(y_0, y_1, d_0, d_1)
    \, d\mu_{D_{0}, D_{1}\mid y_{0}, y_{1}}\, d\mu_{Y_{0}, Y_{1}}\\
    =\int_{F_0 \times F_1}\int_{\{0,1\}^2} \pi_{z'}(y_0, y_1, d_0, d_1)
    \, d\mu_{D_{0}, D_{1}\mid y_{0}, y_{1}}\, d\mu_{Y_{0}, Y_{1}},
  \end{multline*}
  which implies condition~\ref{item:same_marginals_general}. Finally, condition~\ref{item:valid_pmf_general} is immediate from the definition of a marginal
  $z$-density.

  \paragraph{\mbox{}} We now show the $\impliedby$ part of the statement.
  That is, given $\{\pi_{z}\}_{z \in \mathcal{Z}}$ that satisfy
  conditions~\ref{item:match_data_general}--\ref{item:valid_pmf_general}, we
  need to show that we can build a distribution $P^*$ with marginal
  $z$-densities $\{\pi_{z}\}_{z \in \mathcal{Z}}$ that generates $P$. The key step is the observation that since the marginals imply the same distribution of $(Y(0), Y(1))$, to couple them together to form a joint distribution $P^*$, we just need to construct a joint distribution for $D(\cdot, \cdot)$ given $Y(0), Y(1)$. To this end, we show this can be done by assuming that conditional on $(Y(0), Y(1))$, the treatment decisions $D(z,0), D(z,1)$ can be assumed to be independent across judges $z$, with the distribution of $D(z,0), D(z,1)$ set to match that implied by the marginal $\pi_z$.

  Formally, to construct $P^{*}$, for any measurable $F_0,F_1 \subseteq \mathcal{Y}$,
  $G =(G_{0}^1 \times G_{1}^{1} \dotsb G^{K}_{0} \times G^{K}_{1})\subseteq
  \{0,1\}^{2K}$, and $H \subseteq \mathcal{Z}$ we set
  \begin{multline*}
    P^* \left(Y(0) \in F_0, Y(1) \in F_1, D(\cdot, \cdot) \in G, Z \in H \right)\\
    :=P^{*} \left(Y(0) \in F_0, Y(1) \in F_1, D(\cdot, \cdot) \in G \right)
    \cdot
    P(Z \in H),
  \end{multline*}
  set the conditional distribution of $P^{*}$ given $(Y(0), Y(1))=(y_{0}, y_{1})$ to
  \begin{equation}\label{eq:pstar_given_y0y1}
    P^{*}_{D(\cdot, \cdot)\mid y_{0}, y_{1}}:=
    \Pi_{D(1, \cdot)\mid y_0, y_1} \left(G_0^1 \times G^1_1 \right) \dotsb \Pi_{D(K, \cdot)\mid y_0, y_1} \left(G_0^K \times G^K_1 \right),
  \end{equation}
  where $\Pi_{D(z, \cdot)\mid y_0, y_1}$ is the conditional distribution of
  $(D(z, 0), D(z, 1))$ given $(y_0, y_1)$ implied by the marginal $z$-density
  $\pi_{z}$, and set the marginal
  distribution of $P^{*}$ over $(Y(0), Y(1))$ to
  \begin{equation}\label{eq:pstar_marginal_y0y1}
    P^{*}(Y(0)\in F_{0}, Y(1)\in F_{1}):=\int_{F_0 \times F_1}\int_{\{0,1\}^2}
    \pi_{z}(y_0, y_1, d_0, d_1)
    \, d\mu_{D_{0}, D_{1}\mid y_{0}, y_{1}}\, d\mu_{Y_{0}, Y_{1}}.
  \end{equation}
  By conditions~\ref{item:same_marginals_general}
  and~\ref{item:valid_pmf_general}, this marginal distribution is well-defined
  and doesn't depend on $z$.

  Observe that by construction, $P^{*}$
  satisfies~\cref{eq:valid_iv}. It thus remains to show that
  $P^{*}$ generates $P$. Given that the marginals over $Z$ match by
  construction, it suffices to show that for any $z\in\mathcal{Z}$, and
  $F\subseteq \mathcal{Y}$,
  \begin{equation} \label{eqn:aux_condition3_a}
    P_{Y, D\mid z}(Y \in F, D=1) =
    P^{*} \left(Y(1) \in F, D(z, 0)=1
    \right)
  \end{equation}
  and
  \begin{equation} \label{eqn:aux_condition3_b}
    P_{Y, D\mid z}(Y \in F, D=0) =
    P^{*} \left(Y(0) \in F, D(z, 0)=0
    \right).
  \end{equation}
  Let $\Pi_z$ denote the distribution implied by the densities $\pi_z$.
  \Cref{eqn:aux_condition3_a} follows from the following set of equalities:
  \begin{equation*}
    \begin{split}
      P_{Y, D\mid z}(Y \in F, D=1)
      &=
        \int_{F} \int_{\mathcal{Y} \times \{0,1\}} \pi_{z} (y_0, y,
        1, d_1)\, d\mu_{Y_{0}, D_{1}\mid Y_{1}=y, D_{0}=1}\, d\mu_{Y_{1}\mid D_{0}=1}
        \mu_{D_{0}}(\{1\})\\
      & =\int_{\mathcal{Y}\times F\times \{1\}\times \{0,1\}}
        \pi_{z} (y_0, y_{1}, d_{0}, d_1)\, d\mu=\Pi_{z}
        (\mathcal{Y}\times F\times  \{1\}\times \{0,1\})\\
      &=\int_{\mathcal{Y}\times F}\int_{\{1\}\times \{0,1\}}
        \, d\Pi_{D(z, \cdot)\mid y_0,y_1} \, d P^{*}_{Y(0), Y(1)}\\
      &=\int_{\mathcal{Y}\times F}\int_{\{1\}\times \{0,1\}}
        \, d P^{*}_{D(z,0), D(z,1)\mid y_{0}, y_{1}}
        \, d P^{*}_{Y(0), Y(1)}\\
      &=P^{*}(Y(1)\in F, D(z, 0)=1).
    \end{split}
  \end{equation*}
  where the first equality uses the fact that $P_{Y, D\mid z}$ has density
  $p_{Y, D\mid z}$ and condition~\ref{item:match_data_general}, the second
  equality uses iterated expectations, the third equality uses iterated
  expectations and the fact that by definition of $P^{*}$
  in~\cref{eq:pstar_marginal_y0y1}, its marginal distribution over
  $(Y(0), Y(1))$ matches $\Pi_{z}$, the fourth line uses the fact that by
  \cref{eq:pstar_given_y0y1}, the conditional distribution
  $P^{*}_{D(z,0), D(z,1)\mid y_{0}, y_{1}}$ matches $\Pi_{D(z, \cdot)\mid y_0,y_1} $, and
  the last line follows by iterated expectations. Analogous arguments
  establish~\cref{eqn:aux_condition3_b}.
\end{proof}

\subsection{Proof of \texorpdfstring{\Cref{cor:optmize_over_pi}}{Corollary~\ref{cor:optmize_over_pi}}}

The fact that the identified set is an interval follows by convexity of the
optimization problem. To show that \cref{eq:theta_value} gives the upper
endpoint of the interval, note that, by definition of the supremum, there exists
a sequence of distributions $P_{n}^{*}\in\mathcal{P}^{*}$
such that
$\sup_{P^{*}\in\mathcal{P}^{*}}E_{P^{*}}[Y(D(Z,
1))]=\lim_{n\to\infty}E_{P_{n}^{*}}[Y(D(Z, 1))]$. Let $v_{0}$ be the value
function of the optimization problem in~\eqref{eq:theta_value}. Since
$P_{n}^{*}\in\mathcal{P}^{*}$, the marginals of $P_{n}^{*}$, call them $\{\pi^{n}_{z}(\cdot)\}_{z}$, are feasible in the optimization
problem~\eqref{eq:theta_value} by \Cref{prop:marginals_one_z}. It follows that
$\sup_{P^{*}\in\mathcal{P}^{*}} E_{P^{*}}[Y(D(Z, 1))]\leq v_{0}$. To show that
$\sup_{P^{*}\in\mathcal{P}^{*}} E_{P^{*}}[Y(D(Z, 1))]\geq v_{0}$, suppose that
$\{\widetilde{\pi}^{n}_{z}(\cdot)\}_{z}$ is a sequence in $\mathcal{R}^{*}$ such
that the value of~\eqref{eq:theta_value} is given by the limit
\begin{equation*}
  \lim_{n\to\infty}\sum_{z \in \mathcal{Z}} P(Z=z) \sum_{y_{0}, y_{1}, d_{0}, d_{1} \in
    \mathcal{Y}^2 \times \{0,1\}^2} \left(d_1 y_1 + (1-d_1)y_0 \right)
  \cdot\widetilde{\pi}^{n}_{z}(y_0, y_1, d_0, d_1).
\end{equation*}
Then, since $\{\widetilde{\pi}^{n}_{z}(\cdot)\}_{z}$ satisfy the
constraints~\ref{item:match_data}--\ref{item:valid_pmf}
in~\Cref{prop:marginals_one_z}, there exists some sequence
$\tilde{P}_{n}^{*}\in\mathcal{P}^{*}_{\textnormal{valid}}$ with marginals
$\{\widetilde{\pi}^{n}_{z}(\cdot)\}_{z}$. Furthermore, since
$\{\widetilde{\pi}^{n}_{z}(\cdot)\}_{z}\in\mathcal{R}^{*}$, it follows that
$\tilde{P}_{n}^{*}\in\mathcal{P}^{*}$. Thus,
$\sup_{P^{*}\in\mathcal{P}^{*}} E_{P^{*}}[Y(D(Z, 1))]\geq v_{0}$.

\subsection{Proof of \texorpdfstring{\Cref{prop:monotonicity_is_not_helpful}}{Proposition~\ref{prop:monotonicity_is_not_helpful}}}

Let $\alpha_0=P(D=1)$ denote the status quo release rate. If $\alpha=\alpha_0$,
then $\theta$ is point-identified by $E_{P}[Y]$; assume therefore that
$\alpha>\alpha_0$. Since $Y(0)=0$ and $D(Z,1)\geq D(Z,0)$ by policy
monotonicity, it follows by iterated expectations that
\begin{equation*}
  \begin{split}
    E_{P^*}[Y(D(Z,1))]
    &=E_{P^*}[Y(1) \mid D(Z,1) > D(Z,0)] (\alpha-\alpha_{0}) + E_{P^*}[Y(1) \mid
      D(Z,0)=1]\alpha_{0}\\
    & =E_{P^*}[Y(1) \mid D(Z,1) > D(Z,0)] (\alpha-\alpha_{0}) + E_{P}[Y \mid D=1]\alpha_{0},
  \end{split}
\end{equation*}
where we use $P^{*}(D(Z,1)>D(Z,0))=\alpha-\alpha_{0}$, and
$P^{*}(D(Z,0))=\alpha_{0}$. Apart from the average treated outcome for policy
compliers, $E_{P^*}[Y(1) \mid D(Z,1) > D(Z,0)]$, all other quantities are known.
It therefore suffices to show that the identified set for
$E_{P^*}[Y(1) \mid D(Z,1) > D(Z,0)]$ does not depend on IV monotonicity.

\Cref{lemma:policy_complier_bounds} in \Cref{sec:auxiliary-lemmas} derives lower
and upper bounds on the cdf $F_{P^{*}}$ of $Y(1) \mid D(Z,1) > D(Z,0)$ over
$P^{*}\in\mathcal{P}_{I}^{*}(P, \mathcal{P}^{*}_{\alpha})$ of the form
$F^{lb} \leq F_{P^{*}} \leq F^{ub}$.
\Cref{lemma:upper_lower_bounds_EY_D_1_tight} shows that if the set
$\mathcal{P}_{I}^{*}(P, \mathcal{P}^{*}_{Mon, \alpha})$ is not empty, there
exist probabilities
$P^{ub}, P^{lb}\in \mathcal{P}_{I}^{*}(P, \mathcal{P}^{*}_{Mon, \alpha})$ that
generate the cdfs $F^{lb}$ and $F^{ub}$. Since first-order stochastic dominance,
$P(Y>t)\leq Q(Y>t)$, implies $E_{P}[Y]\leq E_{Q}[Y]$
\parencite[e.g.][Proposition A.2, p.~694]{MaOl11}, it follows by convexity of
$\mathcal{P}_{I}^{*}(P, \mathcal{P}^{*}_{\alpha})$ that whenever this set is not
empty, the identified set for $\theta$ is given by an interval, with endpoints
given by the expected value of the distributions $F^{ub}$ and $F^{lb}$,
respectively, whether or not IV monotonicity is imposed.

\subsection{Proof of Proposition~\ref{prop:monotonicity_is_not_helpful_strong_encouragement}}

Let $\alpha_{\max}=E[D(z_{\max}, 0)]$, where, by definition,
$z_{\max}\in\argmax_{z\in\mathcal{Z}} E[D(z,0)]$. By iterated expectations, for
any $P\in \mathcal{P}^*_{I}(P; \mathcal{P}_{\alpha}^*)$,
\begin{multline*}
  E_{P^{*}}[Y(D(Z, 1))]\\
  =\alpha_{\max}E_{P^{*}}[Y(D(Z,1))\mid D(z_{\max},0)=1]
  +(1-\alpha_{\max})E_{P^{*}}[Y(D(Z,1))\mid D(z_{\max},0)=0]\\
  =\alpha_{\max}E_{P^{*}}[Y(1)\mid D(z_{\max},0)=1]
  +(1-\alpha_{\max})E_{P^{*}}[Y(D(Z,1))\mid D(z_{\max},0)=0],
\end{multline*}
where the second inequality follows by the definition of the sufficiently strong
encouragement condition (iii). Since the first term is point identified,
$E_{P^{*}}[Y(1)\mid D(z_{\max},0)=1]=E_{P}[Y\mid D=1,Z=z_{\max}]$, if
$\alpha_{\max}=1$, then the proposition holds trivially. Thus, we focus on the
case $\alpha_{\max}<1$. We will show that the identified set for the conditional
distribution $Y(1), Y(0), D(Z,1)\mid D(z_{\max},0)=0$ doesn't depend on whether IV
monotonicity (\Cref{assumption:instrument_monotonicity}) is imposed. Since $E_{P^{*}}[Y(D(Z,1))\mid D(z_{\max},0)=0]$ is a
functional of this distribution, the result will then follow.

The distribution of $D(Z,1) \mid D(z_{\max},0)=0$ is Bernoulli with parameter
$(\alpha-\alpha_{\max})/(1-\alpha_{\max}) \geq 0$, regardless of whether we impose
IV monotonicity. This follows from the fact that
\begin{equation*}
  \begin{split}
    \alpha& = P^*(D(Z,1)=1) \\
          & = P^*(D(Z,1)=1, D(z_{\max,0})=1) + P^*(D(Z,1)=1, D(z_{\max,0})=0)\\
          & = \alpha_{\max} + P^*(D(Z,1)=1, D(z_{\max,0})=0),
  \end{split}
\end{equation*}
where the last equality holds because $P^*$ satisfies the strong encouragement
condition.

Moreover, since the distribution of the observable data does not depend on $D(Z,1)$, any joint distribution for $Y(1), Y(0), D(Z,1) \mid D(z_{\max},0) = 0$ is in the identified set if the implied marginal for $Y(1), Y(0) \mid D(z_{\max},0) = 0$ is in the identified set and $P(D(Z,1) = 1 \mid D(z_{\max}) =0) $ is Bernoulli with parameter $(\alpha-\alpha_{\max})/(1-\alpha_{\max})$.

Thus, it is sufficient to show that the identified set for the conditional distribution $Y(1), Y(0) \mid D(z_{\max},0) = 0$ does not depend on IV monotonicity. To this end, note that $Y(0) \mid D(z_{\max},0) = 0$ is identified by the distribution of $Y \mid D=0, Z = z_{\max}$. Thus, without imposing IV monotonicity, the identified set for $(Y(1), Y(0)) \mid D(z_{\max},0) =0$ is a weak subset of the possible joint distributions that match the identified marginal for $Y(0) \mid D(z_{\max},0)=0$ and the constraint that $Y(1) \in \mathcal{Y}$. However, any joint distribution in this set is achievable under IV monotonicity, since IV monotonicity implies that the observed data do not depend on $Y(1) \mid D(z_{\max},0) = 0$, and thus any choice for $P(Y(1) \mid Y(0), D(z_{\max},0)=0)$ is consistent with the observable data.

\subsection{Proof of Proposition~\ref{prop:coupling_result_with_disagreement}}

We first show the $\implies$ direction that if a distribution
$P^* \in \mathcal{P}^*_{I}(P;\mathcal{P}^{*}_{\textnormal{DB}})$ generates the
collection of marginal probability mass functions
$\{\pi_{z}(\cdot) \}_{z \in \mathcal{Z}}$, then these marginals satisfy
conditions~\ref{item:match_data}--\ref{item:disagreement-bound}. Since
$\mathcal{P}^*_{\textrm{DB}}\subseteq\mathcal{P}^*_{\textnormal{valid}}$,
\Cref{prop:marginals_one_z} implies that the collection
$\{\pi_{z}(\cdot) \}_{z \in \mathcal{Z}}$ satisfies
conditions~\ref{item:match_data}--\ref{item:valid_pmf}. Hence, it only remains
to verify condition~\ref{item:disagreement-bound}. To this end, note
that~\eqref{eqn:disagreement-bounds} is equivalent to
  \begin{equation*}
    \begin{split}
      P^*(D(z, a)=1, D(z', a')=1)
      &\geq
        (1-\delta_{z, z', a, a'})P^*(D(z', a')=1)\\
      & =(1-\delta_{z, z', a, a'})\pi(D(z', a')=1).
    \end{split}
  \end{equation*}
  By the law of total probability, the left-hand side equals
  \begin{multline*}
    \sum_{y_1,y_0} P^*(Y(0)=y_0,Y(1) = y_1, D(z, a) = 1, D(z', a') = 1)\leq \\
    \sum_{y_1,y_0} \min\{
    \pi(y_{0}, y_{1}, D(z, a)=1), \pi(y_{0}, y_{1}, D(z', a')=1) \},
  \end{multline*}
  where the inequality is obtained from the fact that $P(A \cap B) \leq \min\{P(A),P(B)\}$ for events $A,B$, and the identity
  $\pi(y_{0}, y_{1}, D(z, a)=1)=P^*(Y(0)=y_0, Y(1) = y_1, D(z, a) = 1)$.
  Combining the preceding two displays then yields the
  condition~\ref{item:disagreement-bound}.

  Now we show the $\impliedby$ direction that if a collection of marginals
  $\{\pi_{z}(\cdot)\}_{z \in \mathcal{Z}}$ satisfies
  conditions~\ref{item:match_data}--\ref{item:disagreement-bound} and is consistent with \Cref{assumption:policy_monotonicity}, then there
  exists a distribution
  $P^* \in \mathcal{P}^*_{I}(P;\mathcal{P}^{*}_{\textnormal{DB}})$ that
  generates $\{\pi_{z}(\cdot) \}_{z \in \mathcal{Z}}$. At a high-level, the specified $\{\pi_z\}_z$ pin down the marginal distributions of $\{Y(\cdot), D(z, \cdot) \}_z$, so what remains is to specify a coupling of these marginals under $P^*$ to match the disagreement bound. The key idea is to construct $D(\cdot, \cdot)$ via a latent threshold crossing model conditional on the potential outcomes, so that $D(z, a) = \1{V_{y_1,y_0} \leq \alpha_z(a, y_1,y_0)}$, where $V_{y_1,y_0}$ is a common ranking conditional on $Y(1)=y_1,Y(0)=y_0$. In other words, the judges under this DGP agree on the ranking of defendants within groups of people with the same potential outcomes, but possibly disagree on the potential outcome-specific release cutoffs.

  Formally, to construct $P^{*}$, we set
  \begin{multline*}
    P^* \left(Y(0)=y_{0}, Y(1)=y_{1},
      \{D(z,0)=d_{z0},
      D(z,1)=d_{z1}\}_{z\in \mathcal{Z}}, Z=z \right)\\
    :=P^{*}\left(Y(0)=y_{0}, Y(1)=y_{1},
      \{D(z,0)=d_{z0},
      D(z,1)=d_{z1}\}_{z\in \mathcal{Z}} \right) \cdot
    P \left(Z =z \right),
  \end{multline*}
  set the marginal distribution of $P^{*}$ over $(Y(0), Y(1))$ to equal the
  distribution of $(Y(0), Y(1))$ implied by the marginals $\pi_{z}$,
  \begin{equation*}
    P^{*}(Y(0)=y_{0}, Y(1)=y_{1}):=\pi(y_{0}, y_{1}), \quad
    \pi(y_{0}, y_{1}):=
    \sum_{d_{0}, d_{1}\in\{0,1\}^{2}}\pi_{z}(y_{0},
    y_{1}, d_{0}, d_{1}),
  \end{equation*}
  (by condition~\ref{item:same_marginals}, $\pi(y_{0}, y_{1})$ doesn't depend on
  $z$), and finally, to define the conditional distribution of
  $\{D(z,0), D(z,1)\}_{z\in \mathcal{Z}}$ given $(Y(0), Y(1))=(y_{0}, y_{1})$
  we set
\begin{equation}\label{eqn: construction for dza}
  D(z, a) = \1{V_{y_0,y_1} \leq \alpha_{z}(a, y_{0}, y_{1})}, \quad
\alpha_{z}(a, y_{0}, y_{1})=\frac{\pi_{z}(y_0,y_1,1,1)}{\pi(y_0,y_1)} + a \frac{\pi_z(y_0,y_1,0,1)}{\pi(y_0,y_1)},
\end{equation}
where $V_{y_{0}, y_{1}}$ is uniform on $[0,1]$ conditional on $Y(1) = y_1, Y(0) = y_0$.

This construction implies that \cref{eq:valid_iv} holds. Further, since $\alpha_z(1,y_0,y_1) \geq \alpha_z(0,y_0,y_1)$ by construction, this implies that $P^*(D(z,0)=1, D(z,1)=0 \mid Y(0)=y_0, Y(1)=y_1)= 0$, so that
\Cref{assumption:policy_monotonicity} holds. The construction also implies that
\begin{align*}
  P^*(D(z,0)=0, D(z,1)=1 \mid Y(0)=y_0, Y(1)=y_1)& = \frac{\pi_z(y_0,y_1,0,1)}{\pi(y_0,y_1)}, \\
  P^*(D(z,0)=1, D(z,1)=1 \mid Y(0)=y_0, Y(1)=y_1) &= \frac{\pi_z(y_0,y_1,1,1)}{\pi(y_0,y_1)},
\end{align*}
so that $P^{*}$ generates the marginals $\{\pi_{z}\}$. Moreover,
condition~\ref{item:match_data} implies that $P^*$ generates $P$. It remains to
show that $P^{*}$ satisfies the disagreement bounds
in~\eqref{eqn:disagreement-bounds}. It follows from the definition of $D(z, a)$
in \cref{eqn: construction for dza} that for any $z,z',a,a'$
\begin{equation*}
  P^{*}(D(z, a)=1, D(z', a')=1\mid Y(0)=y_{0}, Y(0)=y_{1})=
  \min\{\alpha_{z}(a, y_{0}, y_{1}), \alpha_{z'}(a', y_{0}, y_{1})\}.
\end{equation*}
Hence, unconditionally,
\begin{multline}\label{eq:dz_unconditional}
  P^{*}(D(z, a)=1, D(z', a')=1)=
  \sum_{y_{0}, y_{1}} \min\{\alpha_{z}(a, y_{0}, y_{1}),
  \alpha_{z'}(a', y_{0}, y_{1})\}\pi(y_{0}, y_{1})\\
  =\sum_{(y_0,y_1) \in \mathcal{Y}^2} \min\left\{\pi_z(y_0,y_1,1,1)+a\pi_z(y_0,y_1,0,1), \pi_{z'}(y_0,y_1,1,1)+a'\pi_{z'}(y_0,y_1,0,1) \right \}.
\end{multline}
We now claim that for any $(z, a) \in \mathcal{Z} \times \{0,1\}$,
\begin{equation}\label{eq:remaining_step}
  \pi_z(y_0,y_1,1,1)+a\pi_z(y_0,y_1,0,1) = \sum_{d_a=1, d_{1-a} \in \{0,1\}}
    \pi_z(y_{0}, y_{1}, d_{0}, d_{1}).
\end{equation}
If $a=1$, \cref{eq:remaining_step} is
immediate. If $a=0$, \cref{eq:remaining_step} may be written
$\pi_z(y_0,y_1,1,1)=\pi_z(y_0,y_1,1,0)+\pi_z(y_0, y_1, 1, 1)$. But, by assumption, the densities $\pi_z$ are consistent with \Cref{assumption:policy_monotonicity}, and thus
$\pi_z(y_0,y_1,1,0)=0$, and hence we see~\cref{eq:remaining_step} holds. Substituting \cref{eq:remaining_step} into \cref{eq:dz_unconditional}, we then obtain that
\begin{multline*}
  P^{*}(D(z, a)=1, D(z', a')=1)\\
  =\sum_{(y_0,y_1) \in \mathcal{Y}^2} \min\left\{
    \pi(y_{0}, y_{1},
    D(z, a)=1), \pi(y_{0}, y_{1},
    D(z', a')=1) \right \}\\
  \geq (1-\delta_{z, z, a, a'})P^{*}(D(z, a)=1),
\end{multline*}
where the inequality is by condition~\ref{item:disagreement-bound}. We have thus verified that the disagreement bound in \cref{eqn:disagreement-bounds} is satisfied under $P^*$.

\subsection{Proof of \texorpdfstring{\Cref{cor:optmize_over_pi2}}{Corollary~\ref{cor:optmize_over_pi2}}}
The proof is completely analogous to that for \Cref{cor:optmize_over_pi}, except appealing to \Cref{prop:coupling_result_with_disagreement} in place of \Cref{prop:marginals_one_z}.

\subsection{Auxiliary lemmas}\label{sec:auxiliary-lemmas}
For the results in this section, let
$z_{\max}\in\argmax_{z\in\mathcal{Z}} E[D\mid Z=z]$ denote the most lenient
decision-maker, and let $\alpha_{\max}=E[D\mid Z=z_{\max}]$ denote their
leniency. Let $\alpha_{0}=P(D=1)$ denote the treatment rate under the status
quo.

\begin{lemma}\label{lemma:policy_complier_bounds}
  Suppose $\alpha_0<\alpha$. Let $F_{P^{*}}$ denote the cdf of
  $Y(1)\mid D(Z,1)>D(Z,0)$ under $P^{*}$. Then for any
  $P^* \in \mathcal{P}^*_{I}(P; \mathcal{P}_{\alpha}^*)$ and $t\in\mathcal{Y}$,
  \begin{multline*}
    F^{lb}(t):=
    \max\left\{\frac{1-\alpha_{0}}{\alpha-\alpha_{0}}
      F^{lb}_{Y(1)\mid D(Z,0)=0}(t)
      -\frac{1-\alpha}{\alpha-\alpha_{0}}, 0\right\}\\
    \leq F_{P^{*}}(t) \leq
    F^{ub}(t):=\min\left\{\frac{1-\alpha_{0}}{\alpha-\alpha_{0}}
      F^{ub}_{Y(1)\mid D(Z,0)=0}(t), 1\right\},
  \end{multline*}
  where
\begin{align}
  F^{lb}_{Y(1)\mid D(Z,0)=0}(t) &:= \max \left\{\frac{1}{1-\alpha_{0}} \left(F^{lb}_{Y(1)}(t) - \alpha_{0} P(Y \leq t \mid D=1) \right),0 \right\}, \\
  F^{ub}_{Y(1)\mid D(Z,0)=0}(t) &:= \min \left\{\frac{1}{1-\alpha_{0}}
                                  \left(F^{ub}_{Y(1)}(t) - \alpha_{0} P(Y \leq t
                                  \mid D=1) \right),1 \right\}\label{eq:y1_d0_ub}\\
  F^{lb}_{Y(1)}(t) &:= P(Y \leq t \mid D=1, Z=z_{\max})\alpha_{\max},
                     \label{eq:y1_lb}\\
  F^{ub}_{Y(1)}(t) &:= P(Y \leq t \mid D=1, Z=z_{\max}) \alpha_{\max}
                     + 1-\alpha_{\max}.\label{eq:y1_ub}
\end{align}
\end{lemma}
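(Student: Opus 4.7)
The plan is to express $F_{P^*}(t)$ as a linear combination of conditional cdfs, bound the unknown pieces with the Fréchet-type bounds $[0,1]$, and then chain through two further decompositions to write everything in terms of the observable distribution and the least-favorable outcome conditional on not being released by the most lenient judge.

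First I would use policy monotonicity and the quota constraint to note the partition:
\begin{equation*}
  \{D(Z,0)=0\} = \{D(Z,0)=0, D(Z,1)=1\} \sqcup \{D(Z,0)=0, D(Z,1)=0\},
\end{equation*}
with masses $\alpha-\alpha_0$ and $1-\alpha$ respectively. Iterated expectations then gives
\begin{equation*}
  (1-\alpha_0) F_{Y(1)\mid D(Z,0)=0}(t) = (\alpha-\alpha_0) F_{P^*}(t) + (1-\alpha) F_{Y(1)\mid D(Z,0)=0, D(Z,1)=0}(t).
\end{equation*}
Solving for $F_{P^*}(t)$ and using the trivial bounds $F_{Y(1)\mid D(Z,0)=0, D(Z,1)=0}(t)\in[0,1]$, together with the trivial bounds $F_{P^*}(t)\in[0,1]$, yields exactly the stated outer inequalities once I plug in suitable bounds on $F_{Y(1)\mid D(Z,0)=0}(t)$.

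Next, I would derive the bounds on $F_{Y(1)\mid D(Z,0)=0}(t)$ stated in the lemma. Another application of iterated expectations gives
\begin{equation*}
  F_{Y(1)}(t) = \alpha_0 F_{Y(1)\mid D(Z,0)=1}(t) + (1-\alpha_0) F_{Y(1)\mid D(Z,0)=0}(t).
\end{equation*}
By IV validity and exclusion, $F_{Y(1)\mid D(Z,0)=1}(t)$ is point-identified as the observable $P(Y\leq t\mid D=1)$. Solving for $F_{Y(1)\mid D(Z,0)=0}(t)$ and combining with the trivial $[0,1]$ bounds yields the expressions in the definitions of $F^{lb}_{Y(1)\mid D(Z,0)=0}$ and $F^{ub}_{Y(1)\mid D(Z,0)=0}$ once I substitute in valid bounds on $F_{Y(1)}(t)$.

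Finally, I would justify the bounds \eqref{eq:y1_lb}--\eqref{eq:y1_ub} on $F_{Y(1)}(t)$. Conditioning on $D(z_{\max},0)$,
\begin{equation*}
  F_{Y(1)}(t) = \alpha_{\max} F_{Y(1)\mid D(z_{\max},0)=1}(t) + (1-\alpha_{\max}) F_{Y(1)\mid D(z_{\max},0)=0}(t),
\end{equation*}
and IV validity together with exclusion identifies the first term as $P(Y\leq t\mid D=1, Z=z_{\max})$; the second term lies in $[0,1]$, giving the stated bounds. Chaining these three decompositions together and simplifying produces $F^{lb}(t)$ and $F^{ub}(t)$. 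The main (mild) obstacle is bookkeeping: I need to verify that applying the $[0,1]$ truncations at each step preserves the inequalities (so that, for example, replacing $F^{ub}_{Y(1)\mid D(Z,0)=0}(t)$ by its clipped version on the right-hand side of the upper bound for $F_{P^*}(t)$ is still valid), which follows because the coefficient $(1-\alpha_0)/(\alpha-\alpha_0)$ multiplying it is positive.
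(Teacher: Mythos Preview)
Your proposal is correct and follows essentially the same three-layer decomposition as the paper's proof: first split $\{D(Z,0)=0\}$ by $D(Z,1)$ to relate $F_{P^*}$ to $F_{Y(1)\mid D(Z,0)=0}$, then split by $D(Z,0)$ to relate the latter to $F_{Y(1)}$, and finally condition on $D(z_{\max},0)$ to bound $F_{Y(1)}$, applying trivial $[0,1]$ bounds to the unidentified piece at each stage. Your closing remark about verifying that the $[0,1]$ truncations preserve the inequalities is a point the paper leaves implicit, but your justification is correct.
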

\begin{proof}
  Since $P^{*}(D(Z,1)=1\mid D(Z,0)=0)=(\alpha-\alpha_{0})/(1-\alpha_{0})$, by the law of total probability, we have
  \begin{multline}\label{eq:complier_bound}
    P^*(Y(1) \leq t \mid D(Z,0) =0)=
    P^*(Y(1) \leq t \mid D(Z,1)> D(Z,0))
    \frac{\alpha-\alpha_{0}}{1-\alpha_{0}} \\
    + P^*(Y(1) \leq t \mid D(Z,1)=D(Z,0) =0)\frac{1-\alpha}{1-\alpha_{0}}.
  \end{multline}
  Rearranging the expression and using the fact that
  $P^*(Y(1) \leq t \mid D(Z,1)=0, D(Z,0) =0)$ and $F_{P^{*}}(t)$ both lie in the
  unit interval yields
  \begin{multline*}
    \max\left\{\frac{1-\alpha_{0}}{\alpha-\alpha_{0}} P^*(Y(1) \leq t \mid D(Z,0) =0)
    -\frac{1-\alpha}{\alpha-\alpha_{0}}, 0\right\}\leq
    F_{P^{*}}(t) \leq\\
\min\left\{\frac{1-\alpha_{0}}{\alpha-\alpha_{0}} P^*(Y(1) \leq t \mid D(Z,0)
  =0), 1\right\}.
  \end{multline*}
  The claim follows if we can show that the cdf of $Y(1)\mid D(Z,0)=0$ can be
  bounded by $F^{lb}_{Y(1)\mid D(Z,0)=0}(t)$ and
  $F^{ub}_{Y(1)\mid D(Z,0)=0}(t)$. To this end, note that by the law of total
  probability,
  \begin{equation*}
    P^*(Y(1) \leq t) = P^*(Y(1) \leq t \mid D(Z,0)=1) \alpha_{0} \: + \: P^*(Y(1)
    \leq t \mid D(Z,0)=0) (1-\alpha_{0}).
  \end{equation*}
  Rearranging, and using the fact that
  $P^*(Y(1) \leq t \mid D(Z,0)=1)=P(Y \leq t \mid D=1)$ gives
  \begin{equation}\label{eq:y1d0_intermediate}
    P^*(Y(1) \leq t\mid D(Z,0)=0) = (P^*(Y(1) \leq t) -\alpha_{0} P(Y \leq t \mid D=1)) / (1-\alpha_{0}).
  \end{equation}
  We now bound $P^*(Y(1) \leq t)$. Note that since
  $P^*(Y(1) \leq t \mid D(z_{\max},0)=1)=P(Y \leq t\mid D=1, Z=z_{\max})$, applying the law of total probability again, we have
  \begin{multline*}
    P^*(Y(1) \leq t) \\= P(Y \leq t \mid D=1,Z=z_{\max})\alpha_{\max}
    + P^*(Y(1) \leq t \mid D=0,Z=z_{\max}) (1-\alpha_{\max}).
  \end{multline*}
  Using the trivial bounds $P^*(Y(1) \leq t \mid D=0,Z=z_{\max})\in [0,1]$
  yields $F^{lb}_{Y(1)}(t) \leq P^*(Y(1) \leq t) \leq F^{ub}_{Y(1)}(t)$.
  Plugging these bounds into \cref{eq:y1d0_intermediate} then yields the bounds
  $F^{lb}_{Y(1)\mid D(Z,0)=0}(t)\leq P^{*}(Y(1)\leq t\mid D(Z,0)=0)\leq
  F^{ub}_{Y(1)\mid D(Z,0)=0}(t)$, as claimed.
\end{proof}

\begin{lemma}\label{lemma:upper_lower_bounds_EY_D_1_tight}
  Suppose $\alpha_{0}<\alpha$. If
  $\mathcal{P}^*_{I}(P; \mathcal{P}^*_{\alpha, Mon}) \neq \emptyset$, then there
  exist distributions
  $P^{lb}, P^{ub} \in\mathcal{P}^*_{I}(P; \mathcal{P}^*_{\alpha, Mon})$ such
  that $P^{lb}(Y(1)\leq t\mid D(Z,1)>D(Z,0))=F^{lb}(t)$ and
  $P^{ub}(Y(1)\leq t\mid D(Z,1)>D(Z,0))=F^{ub}(t)$ for all $t \in \mathcal{Y}$, with $F^{ub}(t)$ and
  $F^{lb}(t)$ defined as in \Cref{lemma:policy_complier_bounds}.
\end{lemma}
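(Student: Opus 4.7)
My plan is to construct $P^{ub}$ and $P^{lb}$ explicitly using the \textcite{vytlacil02} threshold representation afforded by IV monotonicity, together with the degrees of freedom in the data-generating process that are not pinned down by the observable data. Since $\mathcal{P}^*_{I}(P;\mathcal{P}^*_{\alpha, Mon})$ is nonempty, I fix some $\tilde{P}^{*}$ in it; by Vytlacil's theorem, $D(z, 0) = \1{U \leq \alpha_z}$ for a latent $U \sim \text{Uniform}[0, 1]$ independent of $Z$, where $\alpha_z = E[D \mid Z = z]$. Two pieces of $\tilde{P}^{*}$ are unconstrained by the observable data: the marginal law of $Y(1)$ on the never-taker set $\{U > \alpha_{\max}\}$ (since these individuals only contribute $(Y, D) = (0, 0)$ observations), and the counterfactual treatment $D(z, 1)$ (which is never observed). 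I will push $Y(1)$ on $\{U > \alpha_{\max}\}$ to the extreme value of $\mathcal{Y}$ that makes $F_{Y(1)\mid D(Z, 0)=0}$ saturate its bound, and then select the compliers to be the extreme-$Y(1)$ subset of $\{D(Z, 0) = 0\}$.

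\textbf{Construction of $P^{ub}$.} Retain from $\tilde{P}^{*}$ the joint law of $(Z, U)$, the conditional law of $Y(1) \mid U = u$ for $u \leq \alpha_{\max}$, and $Y(0) \equiv 0$. Redefine $Y(1) = y_{\min} := \inf\mathcal{Y}$ whenever $U > \alpha_{\max}$. For $t \in \mathcal{Y}$, the resulting marginal $P^{ub}(Y(1) \leq t)$ matches $F^{ub}_{Y(1)}(t)$ of \cref{eq:y1_ub}, and by \cref{eq:y1d0_intermediate}, the conditional satisfies $P^{ub}(Y(1) \leq t \mid D(Z, 0) = 0) = F^{ub}_{Y(1) \mid D(Z, 0) = 0}(t)$. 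Introduce an auxiliary uniform $V$ independent of $(Z, U, Y(1))$, and pick a threshold $t^{*} \in \mathcal{Y}$ together with a randomization weight $\xi \in [0, 1]$ so that
\begin{equation*}
  C_z := \1{U > \alpha_z} \cdot \1{Y(1) < t^{*}\ \text{or}\ (Y(1) = t^{*}\ \text{and}\ V \leq \xi)}
\end{equation*}
satisfies $E[C_Z] = \alpha - \alpha_0$; such $(t^{*}, \xi)$ exist because $F^{ub}_{Y(1) \mid D(Z, 0) = 0}$ is a CDF on $\mathcal{Y}$ reaching $1$ at $y_{\max}$, and the target fraction $(\alpha - \alpha_0)/(1-\alpha_0) \in (0, 1]$. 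Set $D(z, 1) := \max\{D(z, 0), C_z\}$.

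\textbf{Verification.} $P^{ub}$ generates $P$ because $D = D(Z, 0)$ and $Y = Y(D)$ depend only on the parts of the DGP inherited from $\tilde{P}^{*}$; IV validity holds because $(U, Y(1), V) \indep Z$ and all potential outcome functions depend on $z$ only through their dependence on the realized $Z$ via these independent inputs; IV monotonicity is built into the threshold form of $D(z, 0)$; policy monotonicity and the $\alpha$-quota follow from the definition of $C_z$ and the choice of $(t^{*}, \xi)$. A direct computation shows that for any $t \in \mathcal{Y}$,
\begin{equation*}
  P^{ub}(Y(1) \leq t, D(Z, 1) > D(Z, 0)) = (1-\alpha_0)\, F^{ub}_{Y(1) \mid D(Z, 0) = 0}(\min\{t, t^{*}\});
\end{equation*}
dividing by $\alpha - \alpha_0$ and using $F^{ub}_{Y(1) \mid D(Z, 0) = 0}(t^{*}) = (\alpha - \alpha_0)/(1-\alpha_0)$ recovers $F^{ub}(t) = \min\{F^{ub}_{Y(1) \mid D(Z, 0) = 0}(t)(1-\alpha_0)/(\alpha - \alpha_0), 1\}$ from \Cref{lemma:policy_complier_bounds}.

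\textbf{Construction of $P^{lb}$ and main obstacle.} The construction of $P^{lb}$ is symmetric: set $Y(1) = y_{\max} := \sup \mathcal{Y}$ on $\{U > \alpha_{\max}\}$, and pick compliers to be the highest-$Y(1)$ subset of $\{D(Z, 0) = 0\}$ via an analogous thresholding from above. The main technical subtlety in both constructions is the careful handling of possible atoms in the conditional distribution of $Y(1) \mid D(Z, 0) = 0$ at the chosen threshold (resolved by the boundary randomization $V$), together with checking that the $\min$ and $\max$ truncations in the formulas for $F^{ub}$ and $F^{lb}$ correspond exactly to the regimes where $t^{*}$ hits the support boundary of $\mathcal{Y}$, so that the CDF of $Y(1)\mid D(Z,1) > D(Z,0)$ under the constructed distributions matches the claimed bounds throughout $\mathcal{Y}$.
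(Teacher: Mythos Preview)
Your proposal is correct and follows essentially the same construction as the paper's proof: fix a distribution in the monotone identified set, overwrite $Y(1)$ on the never-taker stratum $\{D(z_{\max},0)=0\}$ with the extreme support point, and then select policy compliers as the extreme-$Y(1)$ slice of $\{D(Z,0)=0\}$ via a quantile threshold with randomization at the atom. The only cosmetic difference is that you invoke Vytlacil's $U$-representation and a single auxiliary uniform $V$ explicitly, whereas the paper works directly with the inherited distribution of $D(\cdot,0)$ and specifies $P^{ub}(D(z,1)=1\mid D(z,0)=0,Y(1)=y)$ as an abstract conditional probability (making the $D(z,1)$ conditionally independent across $z$ rather than perfectly coupled through your $V$); this distinction is immaterial since only $D(Z,1)$ for the realized $Z$ enters $\theta$.
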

\begin{proof}
 By definition, $P^* \in \mathcal{P}^*_{I}(P; \mathcal{P}^{*}_{\alpha, Mon})$ if and only if $P^*$ satisfies the following
  properties: (a) $P^{*}(D(Z,1)=1)=\alpha$, (b) $P^{*}$ generates $P$, (c)
  \Cref{assumption:policy_monotonicity} holds (d)
  \Cref{assumption:instrument_monotonicity} holds; (e) \ac{IV} validity
  (\cref{eq:valid_iv}) holds, and (f) $P^{*}(Y(0)=0)=1$.

    Fix an arbitrary
  $P^{*}\in\mathcal{P}^*_{I}(P; \mathcal{P}^{*}_{\alpha, Mon})$. Property (f) implies that it is sufficient to think of $P^{*}$ as a
  distribution of the random vector $(Y(1), D(\cdot, 0), D(\cdot, 1), Z)$.
  Furthermore, under property (b), property (e) is equivalent to $Z$ being
  independent of $(Y(1), D(\cdot, 0), D(\cdot, 1))$, with marginal distribution
  $P^{*}(Z=z)=P(Z=z)$. To construct $P^{ub}$, we tweak the distribution of
  $(Y(1), D(\cdot, 0), D(\cdot, 1))$ under $P^{*}$, so that the resulting
  distribution satisfies properties (a)--(d), as well as
  $P^{ub}(Y(1) \leq t\mid D(Z,1)>D(Z,0))= F^{ub}(t)$.

  Since $(D(\cdot, 0), D(\cdot, 1))$ are discrete, to specify the distribution
  of $(Y(1), D(\cdot, 0), D(\cdot, 1))$, it suffices to specify the
  probabilities $P^{ub}(Y(1)\leq t, D(\cdot, 0)= G_{0}, D(\cdot, 1)= G_{1})$ for
  any $t\in\mathcal{Y}$, with $G_{d}=(G_{d}^{1}, \dotsc, G_{d}^{K})$ and
  $G_{d}^{k}\in \{0,1\}$ for $d=0,1$. We set
  \begin{multline*}
    P^{ub}(Y(1)\leq t, D(\cdot, 0)= G_{0}, D(\cdot, 1)= G_{1})=\\
    \int \1{y\leq t}P^{ub}(D(\cdot, 1)= G_{1}\mid Y(1)=y,D(\cdot, 0)= G_{0})\, d P^{ub}_{Y(1)\mid G_{0}}(y)\cdot
    P^{*}(D(\cdot, 0)=G_{0}),
  \end{multline*}
  where
  \begin{equation*}
    P^{ub}(D(\cdot, 1)= G_{1}\mid Y(1)=y, D(\cdot, 0)= G_{0})
    =\prod_{k=1}^{K}\phi(G_{1}^{k}\mid G_{0}^{k}, y),
  \end{equation*}
  with $\phi(0\mid G_{0},y)=1-\phi(1\mid G_{0},y)$, $\phi(1\mid 1,y)=1$, and
  $\phi(1\mid 0,y)$ specified below, and, writing $G_0=0$ as a shorthand for the
  vector of $K$ zeros, $(0, \dotsc, 0)$,
  \begin{equation*}
    P^{ub}_{Y(1)\mid G_{0}}(t):=
    P^{ub}(Y(1)\leq t\mid D(\cdot, 0)= G_{0})=
    \begin{cases}
      1 & \text{if $G_{0}=0$,}\\
      P^{*}(Y(1)\leq t\mid D(\cdot, 0)= G_{0}) & \text{otherwise,}
    \end{cases}
  \end{equation*}
  denotes the conditional distribution $Y(1)\mid D(\cdot, 0)= G_{0}$. This
  construction implies that $P^{ub}$ matches $P^{*}$ except that the conditional
  distribution of $Y(1)\mid D(\cdot,0)=0$ is degenerate and equal to the
  smallest value in the support $\mathcal{Y}$ (as we shall see below, this
  ensures that the marginal distribution of $Y(1)$ matches $F^{ub}_{Y(1)}$ in
  \cref{eq:y1_ub}), and the conditional distribution of
  $D(z,1)\mid D(z,0), Y(1)$ is the same for all $z$, and conditional on $Y(1)$,
  the vectors $\{(D(z,0), D(z,1))\}_z$ are independent across $z$.

  We first verify that
  $P^{ub}\in \mathcal{P}^*_{I}(P; \mathcal{P}^*_{\alpha, Mon})$ so long as
  $\phi$ is chosen so that
  \begin{equation}
    E_{P^{ub}}[D(Z,1)]=\alpha\label{eq:ensure_alpha}.
  \end{equation}
  Second, we show that the distribution $P^{ub}_{Y(1)\mid 0}$ equals
  $F^{ub}_{Y(1)\mid D(Z,0)=0}$ in~\cref{eq:y1_d0_ub}. Finally, we choose
  $\phi(1\mid 0,y)$ so that the conditional cdf of $Y(1)\mid D(Z,1)>D(Z,0)$
  matches $F^{ub}$. Note \cref{eq:ensure_alpha} implies that property (a) holds.
  Since $Y(0)=0$, the data distribution depends only on the marginal
  distribution of $D(\cdot, 0)$ and the conditional distribution of
  $Y(1)\mid D(\cdot, 0)\neq 0$. Since these coincide with $P^{*}$, $P^{ub}$
  generates $P$, so that property (b) also holds. Likewise, since IV
  monotonicity depends only on the distribution of $D(\cdot, 0)$, and this
  matches $P^{*}$, property (d) also holds. Finally, since $\phi(0\mid 1,y)=0$,
  property (c) holds also. Hence, under \cref{eq:ensure_alpha}
  $P^{ub}\in\mathcal{P}^*_{I}(P; \mathcal{P}^*_{\alpha, Mon})$ as claimed.

  Next, we verify that $P^{ub}_{Y(1)\mid 0}=F^{ub}_{Y(1)\mid D(Z,0)=0}$ as
  defined in~\cref{eq:y1_d0_ub}. Since
  $P^{ub}\in\mathcal{P}^*_{I}(P; \mathcal{P}^*_{\alpha})$, replacing $P^{*}$
  with $P^{ub}$ in \cref{eq:y1d0_intermediate} yields
  \begin{equation}\label{eq:y1d0_intermediate2}
        P^{ub}(Y(1) \leq t\mid D(Z,0)=0) = (P^{ub}(Y(1) \leq t) -\alpha_{0} P(Y \leq t \mid D=1)) / (1-\alpha_{0}).
  \end{equation}
  Under IV monotonicity, the event that $D(z,0)=0$ for all $z$ is equivalent to
  $D(z_{\max},0)=0$. Hence, by the law of total probability, and the fact that
  $P^{ub}$ generates $P$,
  \begin{multline*}
    P^{ub}(Y(1)\leq t)\\
    = P^{ub}(Y(1)\leq t\mid D(\cdot, 0)=0)(1-\alpha_{\max})+
    P(Y\leq t\mid D=1,Z=z_{\max})\alpha_{\max}=F_{Y(1)}^{ub}(t),
  \end{multline*}
  with $F_{Y(1)}^{ub}$ defined in \cref{eq:y1_ub}. Replacing
  $P^{ub}(Y(1)\leq t)$ with $F_{Y(1)}^{ub}(t)$ in \cref{eq:y1d0_intermediate2}
  yields
  \begin{equation*}
    P^{ub}(Y(1) \leq t\mid D(Z,0)=0) = \frac{1}{1-\alpha_{0}}(F_{Y(1)}^{ub}(t) -\alpha_{0} P(Y \leq t \mid D=1)).
  \end{equation*}
  The right-hand side must be smaller than 1, since $P^{ub}$ is a valid
  probability measure (because
  $P^{ub}\in\mathcal{P}^*_{I}(P; \mathcal{P}^*_{\alpha})$). Hence,
  $P^{ub}_{Y(1)\mid 0}=F^{ub}_{Y(1)\mid D(Z,0)=0}$ as claimed.

  Finally, we specify $\phi(1\mid 0,y)$ so that the conditional cdf of
  $Y(1)\mid D(Z,1)>D(Z,0)$ matches $F^{ub}$. Let $y_{\eta}$ denote the $\eta$
  quantile of the conditional distribution of $Y(1)\mid D(Z,0)=0$ under
  $P^{ub}$, with $\eta:=(\alpha-\alpha_{0})/(1-\alpha_{0})$. We set
  \begin{equation}\label{eq:dz1_definition}
    \phi(1\mid 0, y)=
    \begin{cases}
      1 & \text{if $y < y_{\eta}$,} \\
      \frac{\eta-P^{ub}_{Y(1)\mid D(Z,0)=0}(Y(1)<y_\eta)}{
      P^{ub}_{Y(1)\mid D(Z,0)=0}(Y(1)=y_\eta)} & \text{if $y=y_\eta$,}\\
      0 & \text{if $y>y_{\eta}$,}
    \end{cases}
  \end{equation}
  where we interpret $0/0$ as $0$ if
  $P^{ub}_{Y(1)\mid D(Z,0)=0}(Y(1)=y_\eta)=0$. The intuition for this choice is
  that by Bayes rule, the density of $Y(1)\mid D(z,1)>D(z,0)$ is proportional to
  $P^{ub}(D(z, 1)=1\mid D(z, 0)=0, Y(1)=y) P^{ub}(Y(1)=y \mid D(z,0) =0)$, so that this choice
  shifts the density as far left as possible while satisfying the requirement
  that $E[D(Z,1)]=\alpha$, thus maximizing $F_{P^{*}}$, the cdf of
  $Y(1) \mid D(Z,1) > D(Z,0)$. Under this choice, by iterated expectations,
  \begin{equation*}
    \begin{split}
      P^{ub}(D(Z,1)=1)& = P^{ub}(D(Z,1)=1\mid D(Z,0)=0)(1-\alpha_{0})+\alpha_{0}\\
                      & =E_{P^{ub}}[P^{ub}(D(z, 1)=1\mid D(z, 0)=0, Y(1))\mid D(z, 0)=0] (1-\alpha_{0})+\alpha_{0}\\
                      &=\alpha,
    \end{split}
  \end{equation*}
  so that
  \cref{eq:ensure_alpha} holds. Furthermore, since
  $P^{ub}\in \mathcal{P}^*_{I}(P; \mathcal{P}^*_{\alpha})$, rearranging
  \cref{eq:complier_bound} with $P^{*}$ replaced by $P^{ub}$ yields
  \begin{multline*}
    P^{ub}(Y(1) \leq t \mid D(Z,1)> D(Z,0))\\
    =
    \frac{1-\alpha_{0}}{\alpha-\alpha_{0}}P^{ub}(Y(1) \leq t \mid D(Z,0) =0)-
    \frac{1-\alpha}{\alpha-\alpha_{0}}P^{ub}(Y(1) \leq t \mid D(Z,1)=D(Z,0) =0)\\
    =
    \frac{1-\alpha_{0}}{\alpha-\alpha_{0}}F^{ub}_{Y(1)\mid D(Z,0)=0}(t)\left(
      1- P^{ub}(D(Z,1)=0 \mid Y(1) \leq t, D(Z,0) =0)\right)
  \end{multline*}
  where the second equality uses Bayes rule and the identity
  $P^{ub}(D(Z,1)=0\mid D(Z,0)=0)=(1-\alpha)/(1-\alpha_{0})$. By iterated
  expectations and \cref{eq:dz1_definition},
  $1-P^{ub}(D(Z,1)=0 \mid Y(1) \leq t, D(Z,0) =0)=1$ if $t< y_{\eta}$,
  and equals $\eta/P^{ub}_{Y(1)\mid D(Z,0)=0}(Y(1)\leq t)$ otherwise. From the definition of $\eta=(\alpha-\alpha_{0})/(1-\alpha_{0})$ and the equality $P^{ub}_{Y(1)\mid D(Z,0)=0}(Y(1)\leq t) = F^{ub}_{Y(1)\mid D(Z, 0)=0}(t)$ shown earlier, we have that $\eta/P_{Y(1)\mid D(Z,0)=0}(Y(1)\leq
  t)=\frac{\alpha-\alpha_{0}}{1-\alpha_{0}} / F^{ub}_{Y(1)\mid D(Z, 0)=0}(t)$. Substituting into the previous display, we then see that  \begin{equation*}
    P^{ub}(Y(1) \leq t \mid D(Z,1)> D(Z,0))=\min\left\{\frac{1-\alpha_{0}}{\alpha-\alpha_{0}}F^{ub}_{Y(1)\mid D(Z,0)=0}(t),1\right\},
  \end{equation*}
  which matches \cref{eq:y1_d0_ub} as claimed.

  The construction of $P^{lb}$ is identical, except we set the distribution of
  $Y(1)\mid D(\cdot,0)=0$ to be degenerate and equal to the largest value in the
  support $\mathcal{Y}$, and set
  \begin{equation*}
    \phi(1 \mid 0,y)=
    \begin{cases}
      0 & \text{if $y < y_{1-\eta}$,} \\
      \frac{\eta-P^{lb}_{Y(1)\mid D(Z,0)=0}(Y(1)>y_{1-\eta})}{
      P^{lb}_{Y(1)\mid D(Z,0)=0}(Y(1)=y_{1-\eta})} & \text{if $y=y_{1-\eta}$,}\\
      1 & \text{if $y>y_{1-\eta}$,}
    \end{cases}
  \end{equation*}
  where $y_{1-\eta}$ denotes the $1-\eta$ quantile of the conditional
  distribution of $Y(1)\mid D(Z,0)=0$ (and we again interpret $0/0$ as $0$ if
  $P^{lb}_{Y(1)\mid D(Z,0)=0}(Y(1)=y_{1-\eta})=0$).
\end{proof}

\section{Additional details}\label{sec:addit-deta-empir}

\subsection{Example where IV monotonicity helps}\label{sec:monotonicity_helps}

Consider a setting with a binary instrument and binary outcomes, such that the observed data probabilities are given by:\\
  \begin{tabular}{@{}c|cc@{}}
    $P(Y=y, D=d\mid Z=1)$ & {$Y=1$} & {$Y=0$} \\
    \midrule
    $D=1$ & 1/3 & 1/3 \\
    $D=0$ & 1/3 & 0
  \end{tabular}\hspace{1ex}
  \begin{tabular}{@{}c|cc@{}}
    $P(Y=y, D=d\mid Z=0)$ & {$Y=1$} & {$Y=0$} \\
    \midrule
    $D=1$ & 1/3 & 0 \\
    $D=0$ & 1/3 & 1/3
  \end{tabular}

Assume IV monotonicity holds. Then there are three response types under the status quo  (i.e. types for $D(\cdot,0)$), instrument
compliers ($C$), instrument always takers ($A$), and instrument never-takers
($N$), and it follows from the data probabilities that these all have population
share $1/3$. Furthermore, since the only untreated type when $Z=1$ is $N$, and
$P(Y=0, D=0\mid Z=1)=0$, it follows that $P(Y(0)=1\mid N)=1$. But since $P(Y=0, D=0\mid Z=0)=1/3$, it follows that $P(Y(0)=0\mid C)=1$. Similarly,
since the only treated type when $Z=0$ is $A$, and $P(Y=0, D=1\mid Z=0)=0$, it
follows that $P(Y(1)=1\mid A)=1$, which, combined with the fact that $P(Y=0, D=1\mid Z=1)=1/3$ implies that $P(Y(1)=0\mid C)=1$. Hence, instrument compliers have zero treatment
effect, while that for instrument never takers lies between $-1$ and $0$. Consider a counterfactual policy satisfying policy monotonicity that increases treatment take-up by some small amount $\epsilon$. Since policy compliers are drawn from the pool of instrument never-takers and instrument compliers, it follows that the bounds for policy compliers are given by $E[Y(1)-Y(0)\mid D(Z,1)>D(Z,0)]\in [-1,0]$. In particular, IV monotonicity allows us to conclude that the policy must have weakly negative impacts.

Without IV monotonicity, however, the data is also compatible with the
population comprising instrument defiers ($F$) with population share $1/3$ such
that $P(Y(1)=Y(0)=1\mid F)=1$, with the remainder of the population being
instrument compliers, half of whom have treatment effect equal to $-1$ and half
of whom have a treatment effect equal to $1$. Thus, without IV monotonicity, we
only obtain trivial bounds on the effect for policy compliers, $E[Y(1)-Y(0)\mid
D(Z,1)>D(Z,0)]\in [-1, 1]$.

This example is quite extreme, however, in that we were able to rule out a positive treatment effect for instrument never-takers under monotonicity. So long as we cannot rule out that at least mass $\epsilon$ of instrument never-takers have a treatment effect equal to each of $-1$ and $1$, assuming IV monotonicity will also lead to trivial bounds on the effect for policy compliers for policies that only marginally increase take-up.

\subsection{Implementation details}\label{sec:impl-deta}

\paragraph{Calibration of disagreement probability bounds}

To calibrate the parameter $\overbar{\DP}$ in \cref{eq:dp_bound}, we proceed in two steps. First, we relate it to a correlation parameter
in a Gaussian signal model with correlated signals. Second, we match the correlation parameter in this model to the data from \textcite{sigstad_monotonicity_2023}. We then calculate the implied value of $\overbar{\DP}$.

We consider a model in which judges obtain noisy signals of the criminal misconduct risk of defendants, $V_z$, similar to the model in \textcite[Section V]{cgy22}. We assume that for each pair of judges $z\in\mathcal{Q}$ and $z'\in\mathcal{Q}^c$,
the signals are jointly normal, with means normalized to zero and variances normalized to one,
\begin{equation*}
    \begin{pmatrix}
        V_z\\V_{z'}
    \end{pmatrix}\sim \mathcal{N}\left(0,\begin{pmatrix}
        1 & \rho\\ \rho & 1
    \end{pmatrix}\right),
\end{equation*}
where $\rho$ governs the correlation between the signals. Each judge then releases the defendant if the signal is low enough, $D(z,1)=\1{V_z\leq \alpha_z}$, where the cutoffs $\alpha_z$ are calibrated to match the release rate of judge $z$, $\alpha_z=\Phi^{-1}(q)$ for $z\in\mathcal{Q}$, where $q$ is the release quota and $\Phi$ standard normal cdf,
and $\alpha_{z'}=\Phi^{-1}(E[D(z',0)])$ for $z'\in\mathcal{Q}^c$, since judges not subject to the quota are assumed not to change their behavior under the counterfactual. Under this model, the average disagreement probability equals
\begin{multline*}
  P(D(Z_\mathcal{Q},1)=0, D(Z_{\mathcal{Q}^c},1)=1)
  = \frac{1}{\abs{\mathcal{Q}}\abs{\mathcal{Q}^c}}\sum_{z\in\mathcal{Q},
    z'\in\mathcal{Q}^c}
  P(D(z,1)=0, D(z',1)=1)\\
  = \frac{1}{\abs{\mathcal{Q}}\abs{\mathcal{Q}^c}}\sum_{z\in\mathcal{Q}, z'\in\mathcal{Q}^c}
  P(V_z>\alpha_z, V_{z'}<\alpha_{z'}).
\end{multline*}
Thus, given a value of $\rho$, we can compute the probability on the right-hand side, and calculate the implied value of the disagreement probability as $\overbar{\DP}=\frac{1}{\abs{\mathcal{Q}^c}}\sum_{z'\in\mathcal{Q}^c}
  P(V_z>\Phi^{-1}(q), V_{z'}<\alpha_{z'})/q$, using the fact that the marginal release rates under the counterfactual equal the quota $q$ for all judges in $\mathcal{Q}$. In our application, instead of a simple average, we use a weighted average of the probabilities $P(V_z>\Phi^{-1}(q), V_{z'}<\alpha_{z'})$, weighted by the number of cases handled by $z$ and $z'$.

As a benchmark, we use the average value of $\rho$ across pairs of judges from the top decile and bottom nine deciles of leniency in the data from \textcite{sigstad_monotonicity_2023}. This data has information on panels of judges ruling on criminal cases in the São Paulo Appeal Court. As described in the main text, we focus on three judge panels. To calculate judge leniencies, we regress judge decisions on judge and case fixed effects, and form leniency deciles. To reduce estimation noise, we then restrict the data pairs of judges who see at least a thousand cases together. Then, for each pair of judges $z$ and $z'$ such that $z$ is in the bottom 90\% and $z'$ in the top decile, we calculate the value of correlation $\rho_{z, z'}$ implied by the joint distribution of their decisions. Averaging across all such pairs then yields the value $\rho=0.989$.

\paragraph{Covariate adjustment}
For simplicity, our theoretical analysis has focused on the case where $Z$ is randomly assigned. In practical applications, it may only be plausible to impose that $Z$ is randomly assigned conditional on covariates, $Z \indep (Y(\cdot), D(\cdot,\cdot)) \mid X$. For example, defendants may only be randomly assigned to judges conditional on their assigned court and the time at which they were arrested. In principle, the analysis described above under random assignment could be conducted conditional on each value of $X$, and these conditional bounds could be aggregated to obtain bounds on the unconditional average outcome under the counterfactual. However, implementing this approach in practice would require estimating the propensity scores $P(Z \mid X)$ or the outcome model $P(Y, D \mid Z,X)$, which may be challenging in practice if the covariates $X$ are high-dimensional. For example, there may not be a large number of defendants within each court-by-time cell, and each judge may only be observed in a subset of the court-by-time cells.

In practice, applied researchers typically rely on a linear outcome model to adjust for covariates, and often report estimates of the covariate-adjusted means for each judge. For example, as described in more detail below, ADH22 report estimates of $\mu_{yd}^z  :=P(Y(d)=y,D(z,0)=d)$ for each judge $z$.\footnote{More precisely, they report estimates of $P(Y(d) =1 \mid D(z,0)=1)$ and $E(D(z,0))$ which, since $Y(0)=0$, imply estimates of $P(Y(d)=y,D(z,0)=d)$.} This corresponds to the probability that $Y=y$ and $D=d$ for defendants assigned to judge $z$ that would be realized \emph{if} defendants were unconditionally randomly assigned under the status quo. In our applications, we follow this standard approach in the empirical literature and obtain bounds on $\theta = \sum_z P(Z=z) E[Y(D(z,1))]$ using the machinery developed above, replacing $P(Y=y,D=d \mid Z=z)$ in our analysis with the corresponding covariate-adjusted value of $\mu^z_{yd}$. We note that $\theta$ corresponds to the counterfactual outcome that would be realized if \emph{both} judges were randomly assigned to defendants \emph{and} the counterfactual policy $a=1$ were implemented. Likewise, the implied policy effect $\tau := \theta - \sum_z P(Z=z) E[Y(D(z,0))]$ corresponds to the marginal impact of changing from $a=0$ to $a=1$ if judges were unconditionally randomly assigned. Below, we also outline certain functional form restrictions under which $\tau$ can also be interpreted as the marginal impact of changing from $a=0$ to $a=1$ while preserving the same conditional-on-$X$ assignment rule as under the status quo.

We now provide additional details on how the $\mu_{yd}^z$ are obtained in each of our applications. Let $\mu$ be the vector that stacks the $\mu_{yd}^z$. ADH22 report covariate-adjusted means from linear regressions that adjust court-by-time fixed effects $X$ as well as judge dummies. In particular, they estimate linear regression specifications with additively separable specifications for $E[D\mid Z, X, R=r]=Z'a^{r}+X'c^{r}$ and
$E[Y\mid D=1,Z, X, R=r]=Z'b^{r}+X'd^{r}$ for each race $r$. (Robustness checks in
ADH22 indicate that more flexible specifications do not alter their results.) They then report the predicted value for each judge at the average value of $X$, $\bar{a}_z^r: = z' a^r + E[X \mid R=r]' c^r$ and $\bar{b}_z^r:=z' b^r + E[X \mid R=r]'d^r$. Under the additively separable specification imposed by ADH22, and recalling that $Y(0)=0$ in this application, we then have that $\mu_{10 \mid r}^z = 0, \mu_{00\mid r}^z = 1 - \bar{a}_z^r, \mu_{11 \mid r}^z = \bar{a}_z^r \bar{b}_z^r$, and $\mu_{01\mid r}^z = \bar{a}_z^r (1-\bar{b}_z^r)$, where $\mu_{yd\mid r}^z=P(Y(d)=y, D(z,0)=d\mid R=r)$ are race-specific probabilities. Our estimates $\hat{\mu}$ of $\mu$ replace $\bar{a}_z^r$ and $\bar{b}_z^r$ in these expressions with the least-squares estimates provided to us by ADH22 (and reported in their Figure 2).

We similarly control for covariates linearly in the Suffolk County prosecutor data, although the analysis is somewhat more complicated since both $Y(1)$ and $Y(0)$ are both non-degenerate binary random variables. Specifically, we account for covariates using an
additively separable specification for the model primitives $P^{*}$, the
joint distribution of $(Y(\cdot), D(\cdot, \cdot), Z)$ conditional on $X$, where
$X$ consists of court-by-time fixed effects and defendant characteristics. As we
explain below, this additively separable model also allows for an alternative interpretation
of the policy effect. In particular, we assume
\begin{multline}\label{eq:linear-separable}
  P^{*}(Y(0)=y_{0}, Y(1)=y_{1}, D(z,0)=d_{0}, D(z,1)=d_{1}\mid Z=z, X=x)\\
  =
  \pi_{z}(y_{0}, y_{1}, d_{0}, d_{1})+(x-E[X])'\rho_{y_{0}, y_{1}, d_{0}, d_{1}}.
\end{multline}
This implies that the data probabilities are also additively separable, with
$P(Y=y, D=d\mid Z=z, X=x)=\mu_{yd}^z+(x-E[X])'\beta_{yd}$, with
\begin{equation}\label{eq:data_comp_x}
  \sum_{y_{0}, d}\pi_z(y_{0}, y, 1, d)=\mu_{y 1}^z, \quad
  \sum_{y_{1}, d}\pi_z(y, y_{1}, 0, d)=\mu_{y 0}^z, \quad
  \sum_{y_{0}, d}\rho_{y_{0} y 1 d}=\beta_{y 1}, \quad
  \sum_{y_{1}, d}\rho_{y_{1} y 0 d}=\beta_{y 0}.
\end{equation}
Hence, we obtain estimates $\hat{\mu}$ of $\mu$ as estimates of ADA fixed
effects in a regression of indicators $\1{Y=y, D=d}$ on ADA indicators $Z$ and
the covariate vector $X$.\footnote{We note that this approach assumes an additive linear model for $E[YD \mid Z,X]$, whereas in the previous application, we used estimates reported by ADH22 based on an additive linear model for $E[Y D \mid D=1,Z,X]$. We adopt the former approach since, as described below, it yields an intuitive alternative interpretation to the policy parameter. Since we do not have micro-data for ADH22, we cannot apply this approach in their data.}

This additive structure implies that the policy parameter $\tau$, which we argued above corresponds to the marginal impact of changing from $a=0$ to $a=1$ under random assignment, has a second useful interpretation: it is also the marginal impact of a policy $a=1$ that imposes covariate-adjusted restrictions on judges' release rates while preserving conditional-on-$X$ assignment as in the status quo. Specifically, for a given set of target counterfactual release rates $\alpha_{z,1}$, consider an alternative policy $a=1$ that preserves the conditional-on-$X$ assignment under the status quo but requires that, for $\rho^* = \sum_{d_0,y_0,y_1} \rho_{y_0,y_1,d_0,1}$, \begin{equation}
P(D(z,1)=1 \mid Z=z) - (E[X\mid Z=z]-E[X])'\rho^* = \alpha_{z,1}. \label{eqn:covariate-adjusted-quota}
\end{equation}
This imposes that each judges' \emph{covariate-adjusted} release rate $P(D(z,1)=1 \mid Z=z) - (E[X\mid Z=z]-E[X])' \rho^*$ matches $\alpha_{z,1}$. This could, for example, correspond to a policy that imposes a covariate-adjusted quota on judges' release rates. Under the additively separable model, the average outcome under this counterfactual corresponds exactly to that under random assignment, since the covariate adjustment cancels out after averaging:
\begin{align*}
E[Y(D(Z,1))] &= E[ E[Y(D(Z,1)) \mid Z, X ] ] \\
& = E[ E[Y(D(Z,1) \mid Z, X= E[X]] + (E[X \mid Z] - E[X])' \bar\rho ]\\
& = \sum_{z} P(Z=z) E[Y(D(z,1))] + ( E[ E[X \mid Z] ] - E[X])' \bar\rho \\
& = \sum_{z} P(Z=z) E[Y(D(z,1))],
\end{align*}
where $\bar\rho := \sum_{y_0,y_1,d_0,d_1} (y_1 d_1 + y_0 (1-d_1)) \rho_{y_0,y_1,d_0,d_1}$. An analogous argument shows that $E[Y(D(Z,0))] = \sum P(Z=z) E[Y(D(z,0)]$. Likewise, the constraint in \eqref{eqn:covariate-adjusted-quota} can be written as
\begin{align*}
& E[ E[D(z,1) \mid X,Z=z] \mid Z=z ] - (E[X \mid Z=z] - E[X])'\rho^* = \alpha_{1,z} \\
\iff &[ E[D(z,1)] +  (E[X \mid Z=z] - E[X])'\rho^*] - (E[X \mid Z=z] - E[X])'\rho^* = \alpha_{1,z}  \\
\iff & E[D(z,1)] = \alpha_{1,z}
\end{align*}
and thus we see that imposing the covariate-adjusted quota \eqref{eqn:covariate-adjusted-quota} under conditional-on-$X$ assignment is equivalent to imposing a quota of $\alpha_{1,z}$ under random assignment. As a result, the bounds we obtain on $\theta$ assuming random assignment with quotas $\alpha_{1,z}$ can also be interpreted as bounds on $E[Y(D(Z,1))]$ when imposing the covariate-adjusted quota in \eqref{eqn:covariate-adjusted-quota} and preserving conditional-on-$X$ assignment; analogously, our bounds on $\tau$ correspond to bounds on the marginal impact of this covariate-adjusted quota.

Finally, we note that under the additively separable model in
\cref{eq:linear-separable}, the bounds on $\theta$ we obtain by working with the
estimates $\hat{\mu}$ alone are valid, but not necessarily sharp. Let $\pi$
denote the vector that stacks $\pi_{z}(y_{0}, y_{1}, d_{0}, d_{1})$, and let
$\rho$ stack $\rho_{y_{0}, y_{1}, d_{0}, d_{1}}$. Under the additively separable
model, an application of \Cref{prop:marginals_one_z} conditional on covariates
implies that there exists a joint distribution $P^{*}$ compatible with
$(\pi, \rho)$ if and only if (i) \cref{eq:data_comp_x} holds, (ii)
$(\pi, \rho)$ imply the same distribution for $(Y(0), Y(1))$,
$\sum_{d_{0}, d_{1}}\pi_{z}(y_{0}, y_{1}, d_{0}, d_{1})
=\sum_{d_{0}, d_{1}}\pi_{1}(y_{0}, y_{1}, d_{0}, d_{1})$ for all $z$, and (iii) $(\pi, \rho)$
imply valid probability mass functions,
\begin{equation}\label{eq:pms_valid_pi}
  1= \sum_{y_{0}, y_{1}, d_{0}, d_{1}}\pi_{z}(y_{0}, y_{1}, d_{0}, d_{1}), \quad
  \pi_{z}(y_{0}, y_{1}, d_{0}, d_{1})\geq 0
\end{equation}
and
\begin{equation}\label{eq:pms_valid_rho}
  \sum_{y_{0}, y_{1}, d_{0}, d_{1}}\rho_{y_{0}, y_{1}, d_{0}, d_{1}}=0,\quad
  \pi_{z}(y_{0}, y_{1}, d_{0}, d_{1})+(x-E[X])'\rho_{y_{0}, y_{1}, d_{0}, d_{1}}\geq 0
\end{equation}
for all $x$. If we only work with estimates of $\mu$ and ignore estimates of
$\beta$, we only verify the first two conditions in
\cref{eq:data_comp_x}, (ii), and~\cref{eq:pms_valid_pi}. We do not check
whether there exists a vector $\rho$ satisfying \cref{eq:pms_valid_rho} and the
last two constraints in \cref{eq:data_comp_x}, so that the identified
set for $\theta$ is not necessarily sharp. If one wanted to take the additively separable model literally, one could potentially obtain tighter bounds by exploiting the additional restrictions described above; we view the additively
separable model merely as a reasonable approximation that allows us to estimate
$\mu$ in a simple manner, following the standard practice in the literature, and therefore do not exploit the additional restrictions on
$\rho$.

\paragraph{Details on inference}

For inference on parameters that can be written as a solution to a linear
program, we use a \emph{projection} approach. We first describe the general
construction, and then specialize to each parameter that we consider in the
empirical illustrations. Following the description in
\Cref{sec:no_monotonicity}, suppose the lower and upper bounds on $\theta$ are given by
\begin{equation}\label{eq:lp_implementation}
\min_{\pi} / \max_{\pi} \omega'\pi \text{ s.t. } B\pi\geq b \text{ and } A\pi\geq\mu,
\end{equation}
where the inequalities should be interpreted elementwise. Here $\pi$ stacks the marginal probability mass functions $\{\pi_{z}(\cdot)\}_{z}$, $\omega$ stacks the weights
$P(Z=z)(d_{1}y_{1}+(1-d_{1})y_{0})$, and $\mu$ is a vector of length $8K$ that
stacks the $4K$ probabilities $P(Y=y, D=d\mid Z=z)$, and the additive inverses
$-P(Y=y, D=d\mid Z=z)$, and $b$ is a known vector not depending on the data.  The constraints $B\pi\geq b$ collect restrictions that
do not involve observable data (e.g.\ policy monotonicity), while the
constraints $A\pi\geq \mu$ collect the data-compatibility constraints
(condition~\ref{item:match_data} in \Cref{prop:marginals_one_z}) and any other data-dependent linear restrictions. At a high-level, suppose that we have access to a 95\% uniformly valid confidence band for $\mu$: that is, a set $\hat{\mathcal{C}}$ such that the probability that $\mu \in \hat{\mathcal{C}}$ is at least 95\% asymptotically (uniformly over a suitably regulated class of DGPs). It is then immediate---by the usual argument that justifies the projection approach---that the interval obtained by solving the programs
\begin{equation}\label{eqn:lp_projection_CS}
\min_{\pi,\tilde\mu} / \max_{\pi,\tilde\mu} \omega'\pi \text{ s.t. } B\pi\geq b, A\pi\geq \tilde\mu \text{ and } \tilde\mu \in \hat{\mathcal{C}},
\end{equation}
will be a 95\% uniformly valid confidence interval for the identified set for
$\theta$.\footnote{We note that $95\%$ coverage of the identified set implies
  coverage of the parameter $\theta \in \Theta_I$ of at least $95\%$, although
  coverage of $\theta$ may be conservative.} To preserve the linear programming
structure, we consider one-sided rectangular uniform confidence bands
$\hat{\mathcal{C}}$ formed using a sup-$t$ approach \parencite[as in,
e.g.,][]{olea_simultaneous_2019}. To construct $\hat{\mathcal{C}}$, we assume
the estimates $\hat{\mu}$ of $\mu$ are unbiased with covariance matrix
$\Sigma/n$, and that we have a consistent estimator $\widehat{\Sigma}$ for
$\Sigma$. For the NYC bail judge application, we estimate $\Sigma$ under the
assumption that the data is independent across judges (thus ignoring the
covariate adjustment described above), while for the Suffolk County prosecutor
data, we use the influence function of $\hat{\mu}$ to calculate the covariance
matrix (thus accounting for covariate adjustment). To construct a basic
one-sided sup-$t$ confidence band for $\mu$, we can choose
$\hat{\mathcal{C}} = \{\mu \colon \mu \geq \hat{\mu}-cv
\operatorname{diag}(\widehat{\Sigma}/n)^{1/2}\}$, where the critical value $cv$
is set to be the $95\%$ quantile of the maximum element of
$Z \sim \mathrm{N}(0,\widehat{\Sigma}/n)$, calculated via simulation. Then
\cref{eqn:lp_projection_CS} is equivalent to a linear program that replaces
$\mu$ in \cref{eq:lp_implementation} with the lower confidence bound
$\hat{\mu}-cv \operatorname{diag}(\widehat{\Sigma}/n)^{1/2}$,
\begin{equation*}
  \min_{\pi,\tilde\mu} / \max_{\pi,\tilde\mu} \omega'\pi \text{ s.t. } B\pi\geq b, A\pi\geq \hat{\mu}-cv \operatorname{diag}(\widehat{\Sigma}/n)^{1/2}.
\end{equation*}
One can verify that this approach is equivalent to forming a confidence set by inverting the tests based on ``least favorable'' critical values discussed in \textcite{andrews_inference_2023}. An advantage of this approach is that asymptotic validity of this sup-$t$
confidence band does not require that $\hat{\mu}$ be jointly asymptotically
Gaussian. In particular, so long as the estimates $\hat{\mu}$ are computed using
regression, they can be written as a sample mean of an $8K$-dimensional random
vector. \Textcite{chernozhukov2013a} show that in this case, the maximum of
$\hat{\mu}-\mu$ behaves in large samples like a maximum of Gaussian random
variables, even when $K\to\infty$ as the sample size $n\to\infty$. This result
can be used to justify the projection approach in settings with a large number
$K$ of decision-makers.

In practice, we use the details of each counterfactual to refine the basic
confidence band $\hat{\mathcal{C}}$. For the universal release policy, we showed
above that the identified set depends on $\mu$ only through the probabilities
that enter the intervals $\mathcal{I}_z$, and therefore we construct a
sup-$t$ confidence band only for those probabilities rather than for the full
vector $\mu$.\footnote{Equivalently, we form a joint confidence set for $\mu$
  where the bounds on the irrelevant elements are $(-\infty,\infty)$.} Specifically,
we form a lower confidence band for the probabilities
$\{P^*(Y(1)=1, D(z,0)=1\mid R=r), P^*(Y(1)=0, D(z,0)=1\mid
R=r)\}_{z\in\mathcal{Z}}$, separately for each race, since these are the
probabilities that affect the identified set. The linear program in
\cref{eqn:lp_projection_CS} is then equivalent to intersecting confidence
intervals for the judge-specific intervals $\mathcal{I}_z$ in
\cref{eq:judge_specific_I}, as shown in \Cref{fig:id-set-illustration}.

For inference on the quota policy, we intersect the basic sup-$t$ confidence band with a confidence band for certain aggregate moments. Adding aggregate moments does not change plug-in point estimates of the bounds, but it can help tighten inference if it places sharper restrictions on moments that are closely related to the bounds of the identified set, which may only be imprecisely bounded using the basic rectangular confidence set. Let $\mu_A$ be the vector containing $E[Y\mid Z\in \mathcal{Q}]-E[Y\mid Z\in \mathcal{Q}^c]$ (up to scaling, this is
the estimate of the policy effect under the reallocation policy), as well as the
aggregate release rates $E[D\mid Z\in\mathcal{Q}]$ and
$E[D\mid Z\in\mathcal{Q}^c]$. We form a two-sided sup-$t$ confidence band for $\mu_A$, say $\hat{\mathcal{C}}^{\mu_A}$, and let $\hat{\mathcal{C}}^A$ denote the set of values for $\mu$ consistent with $\mu_A \in \hat{\mathcal{C}}^{\mu_A}$. We then form our confidence band $\hat{\mathcal{C}} = \hat{\mathcal{C}}^{basic} \cap \hat{\mathcal{C}}^{A}$, where $\hat{\mathcal{C}}^{basic}$ is the basic sup-$t$ confidence band for $\mu$ described above. We use a 99\% level band for $\hat{\mathcal{C}}^{basic}$ and a 96\% level for $\hat{\mathcal{C}}^A$, so that the overall confidence band has level at least 95\%.

To compute the bounds and confidence intervals for policy compliers, we use the
fact that the parameter of interest is a ratio,
$\theta/\frac{n_{\mathcal{Q}}}{n}(E[D(Z_\mathcal{Q}, 0)]-q)$, where
$n_{\mathcal{Q}}$ is the number of cases handled by the bottom 90\% of judges.
The denominator is an affine function of $\pi$, so this is a linear-fractional
program. We transform it to a linear program using the Charnes–Cooper
transformation, and form confidence intervals by projection as described above.

Finally, for inference on the disparate impact parameter $\Delta$, which is a smooth non-linear function of counterfactuals and aggregate moments, we adopt a
two-step approach. We first form a joint confidence band for the race-specific universal release parameters
$(E[Y(1) \mid R=w], E[Y(1) \mid R=b])$ with nominal level 95.5\%, as described
above, except that we form a joint upper 95.5\% confidence band for the
probabilities
$\{P^*(Y(1)=1, D(z,0)=1\mid R=r), P^*(Y(1)=0, D(z,0)=1\mid
R=r)\}_{z\in\mathcal{Z}}$ \emph{jointly} for both races. Intersecting the
judge-specific probabilities for each race then yields a joint 95.5\% confidence
interval for the race-specific universal release parameters. In the second step,
for each value of $(E[Y(1) \mid R=w], E[Y(1) \mid R=b])$ in the confidence set,
we form a delta-method confidence interval with level $99.5\%$ for $\Delta$ that
accounts for the statistical uncertainty in the aggregate moments (more
precisely, we do this for each value within a $100 \times 100$ grid of points
within the confidence band). Our final confidence interval is formed by taking
unions of all the second-step intervals.

\subsection{Additional tables and figures}

\begin{table}[tp]
  \begin{threeparttable}
    \caption{Comparison of estimates and standard errors in ADH22 and our
      Replication.}\label{tab:adh-replication-vs-orig}
    \begin{tabular*}{\linewidth}{@{\extracolsep{\fill}}lcc cc cc@{}}
      &\multicolumn{2}{@{}c}{$E[Y(1)\mid R=b]$} & \multicolumn{2}{@{}c}{$E[Y(1)\mid R=w]$}
      & \multicolumn{2}{@{}c}{$\Delta$}\\
      \cmidrule(rl){2-3}\cmidrule(rl){4-5}\cmidrule(rl){6-7}
      &Orig.  & Repl. &Orig.  & Repl. &Orig.  & Repl. \\
      Specification & (1) & (2) & (3) & (4) & (5) & (6)\\
      \midrule
      Linear extrap. & 40.0 & 40.0 & 33.8 & 33.5 & 5.4 & 5.4\\
 & $(0.6)$ & $(0.5)$ & $(0.7)$ & $(0.6)$ & $(0.2)$ & $(0.2)$\\
Quadratic extrap. & 39.4 & 42.8 & 31.9 & 36.2 & 5.4 & 4.8\\
 & $(2.1)$ & $(1.8)$ & $(2.1)$ & $(1.8)$ & $(0.7)$ & $(0.7)$\\
Local linear extr. & 43.6 & 43.6 & 34.6 & 35.0 & 4.2 & 4.3\\
 & $(1.6)$ & $(2.3)$ & $(1.4)$ & $(2.0)$ & $(0.6)$ & $(0.7)$\\
\\[-0.9em]
      \bottomrule
    \end{tabular*}
    \begin{tablenotes}
      \begin{footnotesize}
      \item \emph{Notes:} This table compares the estimates and standard errors
        (in parentheses) from Table 3 of ADH22 to our replication, as reported
        in \Cref{tab:estimates-delta-arnold}.
      \end{footnotesize}
    \end{tablenotes}
  \end{threeparttable}
\end{table}

\begin{table}[tp]
  \centering
  \begin{threeparttable}
    \caption{Estimates for universal release policy the disparate impact
      parameter using NYC bail judge data with and without
      pooling.}\label{tab:estimates-delta-disag}
  \begin{tabular*}{\linewidth}{@{\extracolsep{\fill}}lcc cc c@{}}
    &\multicolumn{2}{@{}c}{Whites} & \multicolumn{2}{@{}c}{Blacks}\\
    \cmidrule(rl){2-3}\cmidrule(rl){4-5}
    &$E[Y(1)\mid R]$ & {PC TE}
     &$E[Y(1)\mid R]$ & {PC TE}& $\Delta$\\
    Specification & (1) & (2) & (3) & (4) & (5)\\
    \midrule
    Pooled & $(29.2, 49.1)$ & $(20.9, 86.3)$ & $(23.4, 41.9)$ & $(12.0, 90.7)$ & $(1.0, 9.6)$\\
Disaggregated & $(29.1, 49.3)$ & $(20.5, 86.9)$ & $(22.6, 42.8)$ & $(8.7, 94.3)$ & $(0.9, 9.9)$\\
\\[-0.9em]
    \bottomrule
  \end{tabular*}
  \begin{tablenotes}
    \begin{footnotesize}
    \item \emph{Notes}: See \Cref{tab:estimates-delta-arnold}. 95\% confidence
      intervals in parentheses; these do not account for covariate adjustment.
      ``Pooled'' refers to the main specification that pools judges with 300 or
      fewer cases; it is identical to the first row of
      \Cref{tab:estimates-delta-arnold}. ``Disaggregated'' refers to estimates
      without pooling.
    \end{footnotesize}
  \end{tablenotes}
\end{threeparttable}
\end{table}

\begin{table}[tp]
  \centering
\begin{threeparttable}
\caption{Estimates for quota policy using NYC bail judge data with
    and without pooling.}\label{tab:quota_nyc_disag}
  \begin{tabular*}{\linewidth}{@{\extracolsep{\fill}}lcccc cccc@{}}
    &\multicolumn{2}{@{}c}{no PC bound} & \multicolumn{2}{@{}c}{PC bounds}\\
    \cmidrule(rl){2-3}\cmidrule(rl){4-5}
    &Policy effect & {PC TE}
    &Policy effect & {PC TE}\\
    Specification & (1) & (2) & (3) & (4)\\
    \midrule
    \multicolumn{3}{@{}l}{A\@: Blacks (Pooled)}\\
    Valid IV only & $(0.0, 10.7)$ & $(0.0, 100.0)$ & $(1.8, 10.7)$ & $(16.4, 100.0)$\\
$\overline{DP}=0.025$ & $(2.8, 7.9)$ & $(26.3, 76.7)$ & $(2.8, 7.9)$ & $(26.3, 76.7)$\\
\\
    \multicolumn{3}{@{}l}{B\@: Blacks (Disaggregated)}\\
    Valid IV only & $(0.0, 10.7)$ & $(0.0, 100.0)$ & $(1.6, 10.7)$ & $(15.2, 100.0)$\\
$\overline{DP}=0.025$ & $(2.8, 7.9)$ & $(26.3, 76.7)$ & $(2.8, 7.9)$ & $(26.3, 76.7)$\\
\\
    \multicolumn{3}{@{}l}{C\@: Whites (Pooled)}\\
    Valid IV only & $(0.0, 6.9)$ & $(0.0, 100.0)$ & $(0.7, 6.9)$ & $(9.9, 100.0)$\\
$\overline{DP}=0.021$ & $(1.2, 5.9)$ & $(17.0, 91.1)$ & $(1.2, 5.9)$ & $(17.0, 91.1)$\\
\\
    \multicolumn{3}{@{}l}{D\@: Whites (Disaggregated)}\\
    Valid IV only & $(0.0, 6.9)$ & $(0.0, 100.0)$ & $(0.6, 6.9)$ & $(8.8, 100.0)$\\
$\overline{DP}=0.021$ & $(1.2, 5.9)$ & $(17.0, 91.1)$ & $(1.2, 5.9)$ & $(17.0, 91.1)$\\
\\[-0.9em]
    \bottomrule
  \end{tabular*}
  \begin{tablenotes}
    \begin{footnotesize}
    \item \emph{Notes}: See \Cref{tab:quota_nyc}. 95\% confidence intervals in
      parentheses; these do not account for covariate adjustment. ``Pooled''
      refers to the main specification that pools judges with 300 or fewer
      cases; Panels A and C are identical to Panels A and B in
      \Cref{tab:quota_nyc}. ``Disaggregated'' refers to estimates without
      pooling.
  \end{footnotesize}
\end{tablenotes}
\end{threeparttable}
\end{table}

\Cref{tab:adh-replication-vs-orig} compares our estimates and standard errors
for the parametric extrapolation methods, as reported in
\Cref{tab:estimates-delta-arnold}, to the original estimates in Table 3 in
ADH22. The CIs in ADH22 account for the fact that the judge-specific means are
covariate-adjusted, whereas since we do not have the microdata, we conduct
inference assuming the judge-specific means are sample means from i.i.d.\ draws.
There are also some differences in weighting: ADH22 weight by the inverse
precision of the judge fixed effects (accounting for covariate estimation),
whereas we weight by the number of released defendants. In spite of these
differences, the estimates and standard errors are very similar for all
specifications.

\Cref{tab:estimates-delta-disag} and \Cref{tab:quota_nyc_disag} show that
inference for the NYC bail judge data is virually identical if we do not pool
judges with a few cases; in some instances, the confidence intervals are wider,
since the projection-based confidence intervals use larger critical values.
\Cref{tab:quota_suffolk_disag} shows that aggregation has no impact on inference
in the Suffolk County prosecutor data. In all cases, the point estimates without
pooling are empty, since the estimated probabilities $P(Y(d)=y, D(z, 0)=d)$ are
not compatible with \cref{eq:valid_iv}. Pooling reduces the
estimation noise in these probabilities for decision-makers with a few cases; it
also helps ensure that the estimated probabilities are approximately Gaussian,
as assumed by our projection inference method.

\begin{table}[tp]
  \centering
\begin{threeparttable}
  \caption{Estimates for quota policy using Suffolk County prosecutor data with
    and without pooling.}\label{tab:quota_suffolk_disag}
  \begin{tabular*}{0.8\linewidth}{@{\extracolsep{\fill}}lcc@{}}
    &Policy effect & {PC TE}\\
    Specification & (1) & (2)\\
    \midrule
    \multicolumn{3}{@{}l}{A\@: Pooled}\\
    Valid IV only & $(-9.4, 9.4)$ & $(-100.0, 100.0)$ \\
$\overline{OD}=0.019$ & $(-5.4, 2.0)$ & $(-62.6, 23.2)$ \\
\\
    \multicolumn{3}{@{}l}{B\@: Disaggregated}\\
    Valid IV only & $(-9.4, 9.4)$ & $(-100.0, 100.0)$ \\
$\overline{OD}=0.019$ & $(-5.4, 2.0)$ & $(-62.6, 23.2)$ \\
\\[-0.9em]
    \bottomrule
  \end{tabular*}
  \begin{tablenotes}
    \begin{footnotesize}
    \item \emph{Notes}: See \Cref{tab:quota_suffolk}. 95\% confidence intervals
      reported in parentheses. ``Pooled'' refers to the main specification that
      pools ADAs with 300 or fewer cases; Panels A is identical to
      Panels A and B in \Cref{tab:quota_suffolk}. ``Disaggregated'' refers to
      estimates without pooling.
  \end{footnotesize}
\end{tablenotes}
\end{threeparttable}
\end{table}

\end{appendices}

\end{document}